\newcolumntype{H}{>{\setbox0=\hbox\bgroup}c<{\egroup}@{}}
\newcolumntype{N}{@{}m{0pt}@{}}
\DeclareFontFamily{U}{stixscr}{}
\DeclareFontShape{U}{stixscr}{m}{n}{<-> s*[1.1] stix-mathscr}{}
\newcommand{\R}{\ensuremath{\mathbb{R}}}
\newcommand*{\indep}{%
	\mathbin{%
		\mathpalette{\@indep}{}%
	}%
}
\newcommand*{\nindep}{%
	\mathbin{
		\mathpalette{\@indep}{\not}
	}%
}
\newcommand*{\@indep}[2]{%
	\sbox0{$#1\perp\m@th$}
	\sbox2{$#1=$}
	\sbox4{$#1\vcenter{}$}
	\rlap{\copy0}
	\dimen@=\dimexpr\ht2-\ht4-.2pt\relax
	\kern\dimen@
	{#2}%
	\kern\dimen@
	\copy0 
} 
\newtheorem{proposition}{Proposition}[section]
\newtheorem{corollary}{Corollary}[section]
\newtheorem{lemma}{Lemma}[section]
\newtheorem{proofprop}{}
\newtheorem{prooftheorem}{}
\newtheorem{remark}{Remark}[section]
\newtheorem{theorem}{Theorem}[section]
\newtheorem{assumption}{Assumption}[section]
\newtheorem{definition}{Definition}[section]
\title{\textbf{On the falsification of instrumental variable models for heterogeneous treatment effects}}
\author{
	Ricardo E. Miranda\thanks{Department of Economics, Duke University, \texttt{ricardo.e.miranda@duke.edu}}
	\thanks{I thank Adam Rosen, Michael Pollmann, Matthew Masten, and Arnaud Maurel for their invaluable guidance and support. I also thank participants at the Duke microeconometrics breakfast and ITAM Alumni conference for their useful comments and suggestions. All errors are my own.  }
	\date{January 20, 2026}
}
\begin{document}
	\maketitle
	
	\begin{abstract}
		In this paper I derive a set of testable implications for econometric models defined by three assumptions: (i) the existence of strictly exogenous discrete instruments, (ii) restrictions on how the instruments affect adoption of a finite number of treatment types (such as monotonicity), and (iii) the assumption that the instruments only affect outcomes through their effect on treatment adoption (i.e. an exclusion restriction). The testable implications aggregate (via integration) an otherwise potentially infinite set of inequalities that must hold for every measurable subset of the outcome's support. For binary instruments the testable implications are sharp. Furthermore, I propose an implementation that links restrictions on latent response types to a generalization of first-order stochastic dominance and random utility models, allowing to distinguish violations of the exclusion restriction from violations of monotonicity-type assumptions. The testable implications extend naturally to the many instruments case. 
		
	\end{abstract}

	\bigskip
	\small
	\noindent \textbf{JEL classification:} C12, C21, C25, C26, C31, C36.
	
	\clearpage
	
	\onehalfspacing 
	\section{Introduction}
	
	Consider a potential outcomes model defined by the following two equations:
	\begin{align}
		X_i=X_i(Z_i) &\textit{ a.s. }\\
		Y_i=Y_i(X_i,Z_i)&\textit{ a.s. }
	\end{align}
	Suppose that the joint distribution of $(Z_i,X_i,Y_i)$ over the support $\mathcal{Z}\times \mathcal{X}\times \mathcal{Y}$ is observed, and that both $\mathcal{X}$ and $\mathcal{Z}$ are finite while $\mathcal{Y}$ is unrestricted. The goal of this paper is to investigate the joint falsification of three assumptions: (i) that $Z$ takes values across the population as if it was randomly assigned, (ii) restrictions on the effect of $Z$ on $X$  (typically called instrument-response restrictions), and (iii) an exclusion restriction on $Z$, that is, the assumption that:
	\begin{align}
		Y_i=Y_i(X_i,Z_i)=Y_i(X_i)&\textit{ a.s. }
	\end{align}
	
	These three assumptions play an important role in modern econometrics. First, they are the cornerstone of a range instrumental variables (IV) methods for heterogeneous treatment effects in the spirit of \cite{angristImbens1994late}. Furthermore, as noted by \cite{kwon2024testing}, the three assumptions combined are equivalent to the sharp null of full mediation in the context of randomized control trials, that is, the hypothesis that a set of mediators $\mathcal{X}$ can fully account for the effect on $Y$ of an exogenous set of treatments $\mathcal{Z}$. 
	
	In this paper I propose a novel set of observable implications of the aforementioned assumptions. The observable implications are sharp when the instrument is binary, and extend naturally to the multiple instruments case. The core idea is to transform (via integration) a potentially infinite set of inequalities imposed by the exclusion restriction, into a finite set of linear restrictions on conditional treatment and response type probabilities. The resulting restrictions are characterized by the integral of the point-wise minimum of multiple conditional sub-densities.  
	
	In the many instruments case the observable implications have two advantages: (i) they use information from the join distribution of $(Z_i,X_i,Y_i)$ that previous approaches tailored to test IV validity do not incorporate, and (ii) they do not rely on monotonicity or support restrictions, allowing for general instrument-response restrictions. 
	
	Moreover, for  instrument-response restrictions that rule out certain behaviors or latent types, the characterization can be further simplified to a finite set of inequalities with a generalized first order stochastic dominance (FOSD) interpretation. This set of inequalities can be used to distinguish violations of the exclusion restriction from violations of instrument-response restrictions, and to link the latter to non-parametric assumptions about the potential preferences of units in the population. Thus, as a byproduct and in the absence of an exclusion restriction, these inequalities can also be used to falsify the assumption that some observed treatment choice probabilities are consistent with a set of pre-specified non-parametric restrictions on how an exogenous binary variable affects the preferences and choices of a population of rational decision makers. 
	
	To obtain the generalized FOSD inequalities I show that finding a distribution over potential treatments and outcomes that satisfies the sharp instrument validity conditions amounts to solving a well known discrete transportation problem. This technique, and its equivalence to the maximum flow/minimum cut problems, was first used to characterize incomplete econometric models by \cite{ekeland2010optimal} and \cite{galichon2011set}.\footnote{My approach differs in two ways: first, instead of considering transportation problems from the space of observed variables to the domain of latent unobservables, my transportation problem is defined between observed treatment probabilities conditional on different instrument values. Second, to incorporate the exclusion restriction I extend their minimum-cut characterization method to a broader class of capacity constraints.}
	
	My optimal transport approach is related to contemporaneous and independent work by \cite{kaido2025testing} who study the falsification of shape constraints and exclusion restrictions using graph theory. In general, neither of the two approaches is nested by the other. The main advantage of the approach from \cite{kaido2025testing} is its generality since it is sharp for the multiple instruments case and applies to other econometric models beyond IV. The main advantage of my approach is that it does not rely on partitioning the support of the outcome but instead  incorporates the sharp observable implications of the exclusion restriction as a capacity constraint on the transportation problem which yields a finite linear system of inequalities, even when the outcome has infinite support. 
	
	\medskip
	\textbf{The observable implications:} Let $\tilde{Z}\subset\mathcal{Z}$ be a subset of instrument values, and let $S(x_l,\tilde{Z})$ be the group%
	\footnote{The notation $S(x_l,\tilde{Z})$ comes from an acronym for \textit{sufficient takers}, i.e., units such that $Z_i\in\tilde{Z}$ is a sufficient condition for them to realize treatment $x_l$} %
	of units that would realize treatment $x_l$ whenever $Z_i\in\tilde{Z}$. Suppose that all units in $S(x_l,\tilde{Z})$ realize outcomes in some measurable set $B_Y\subset\mathcal{Y}$ when the treatment is set to $x_l$. Then, the units in $S(x_l,\tilde{Z})$ must be a subset of all the units that realize outcomes $Y_i\in B_y$ when $X_i=x_l$. Consequently, the mass of group $S(x_l,\tilde{Z})$ is bounded above by the minimum over $z_k\in\tilde{Z}$, of the conditional probability of observing $(Y_i\in B_Y,X=x_l)$ given $Z=z_k$. In other words, the mass of group $S(x_l,\tilde{Z})$ is bounded by the minimum over instrument values in $\tilde{Z}$ of the integrals of the sub-densities $\phi_{z_0,x_l}:=\mathbb{P}[X_i=x_l,Z=z_k]f_{y|Z_i=z_0,X_i=x_l}(y)$ over the set $B_Y$.

	A core result of this paper is to note that if the previous bound holds for every measurable subset of $\mathcal{Y}$, then, the instrument-wise minimum can be replaced by a point-wise minimum. That is, the probability that units in $S(x_l,\tilde{Z})$ realize outcomes in $B_Y$ is also bounded by the integral over $B_Y$ of the point-wise minimum $\psi(y)=\underset{z_k\in\tilde{Z}}{\min}\{\phi_{z_k,x_k}(y)\}$. Since this bound must hold for every measurable $B_Y$, setting $B_Y=\mathcal{Y}$ yields an upper bound on the probability that a unit belongs to $S(x_l,\tilde{Z})$. The result is a finite set of observable necessary conditions for instrument validity. 
	
	\medskip
	\textbf{Other related work:} This paper contributes to the literature on testing the validity of instrumental variables models. The first test for instrument validity for heterogeneous treatment effects was proposed by \cite{balke1997bounds} and subsequently refined by \cite{kitagawa2015test}, \cite{huber2015testing}, and \cite{mourifie2017testing}. These refinements, however, only apply when both the instrument and treatment are binary, and treatment adoption is weakly monotone in the instrument. \cite{kedagni2020generalized} proposed a joint test for the exclusion restriction and random assignment --without monotonicity or any other instrument-response restriction-- but their results are only sharp for a binary outcome. Subsequent work has derived sharp testable implications for IV designs on a case-by-case basis \citep{bai2024sharp}, or proposed general tests that are not sharp \citep{sun2023instrument}. More recently, \cite{kwon2024testing} provided the first sharp characterization for general binary instruments and \cite{kaido2025testing} for multi-valued instruments. 
	
	The observable implications I propose apply to a range of IV methods for heterogeneous treatment effects. \cite{angristImbens1994late} were the first to use non-parametric restrictions on instrument-response heterogeneity to identify causal parameters. Subsequent work has extended their approach to a wide range of settings. Major extensions include: re-interpretations of monotonicity (\cite{heckman2018unordered},  \cite{navjeevan2022ordered}); alternative restrictions on how the instrument affects treatment take-up (\cite{richardson2010analysis}, \cite{de2017tolerating}, \cite{dahl2023never}, \cite{goff2024does}); and generalizations to settings with multiple instruments and treatmens (\cite{lee2020treatment}, \cite{goff2020vector}, \cite{mogstad2021causal}, \cite{van2023limited}, \cite{kazemi2024instrumental}, \cite{fusejima2024identification}, \cite{bai2024inference}). These methodological advances have been applied in diverse empirical contexts, including education and labor market studies (\cite{kline2016evaluating}, \cite{kirkeboen2016field}, \cite{pinto2021beyond}, \cite{mountjoy2022community}). 
	
	The FOSD characterization I offer and its relation to non-parametric random utility models contributes to the literature that investigates the empirical content of response type restrictions (\cite{goff2024does} and \cite{bai2024identifying}) and their microeconomic foundations (\cite{vytlacil2002independence}, \cite{heckman2018unordered}, \cite{navjeevan2022ordered}).
	
	All the testable implications I provide can be represented by the requirement that a linear system of inequalities admits at least one solution. Inference on the feasibility of linear systems and linear programs is an active area of research (\cite{bai2022testing}, \cite{fang2023inference}, \cite{cho2024simple}, \cite{goff2025inference}) tightly related (and in some cases equivalent) to inference methods for models characterized by moment inequalities such as \cite{cox2023simple} or \cite{andrews2023inference}. 
	
	The remainder of this paper is organized as follows: In section \ref{Section:Setup} I introduce the class of models and restrictions considered. To illustrate the intuition behind the observable implications and compare my approach to the existing literature, in section \ref{Section:Binary} I consider the binary instrument case. In section \ref{Subsection:OTandFOSD} I specialize the characterization from section \ref{Section:Binary} to instrument-response-type restrictions and link them to FOSD and random utility models using optimal transport. In section \ref{Section:ManyIV} I extend the observable implications to the multiple instruments case. Section \ref{Section:Conclusion} concludes
	
	\section{Instrument validity: Notation and framework}\label{Section:Setup}
	
	\subsection{The IV model with discrete instruments and treatments}
	
	The population consists of units indexed by $i\in\mathcal{I}$. I assume that the joint distribution of the outcome $Y_i$, treatment $X_i$, and instrument $Z_i$ is observed, that is, the joint distribution of $(Y,X,Z)$ is identified. I impose no restrictions on the cardinality or dimension of the support of the outcome, which I denote as $\mathcal{Y}$.  The treatment and instrumental variables have finite supports $\mathcal{X}=\{x_0,x_1,...,x_{L-1}\}$ and $\mathcal{Z}=\{z_0,z_1,...,z_{K-1}\}$ respectively. 
	
	To formalize the structure of $\mathcal{Y}$, I assume it is equipped with a $\sigma-$algebra $\mathcal{B}_Y$ and a measure $\mu_Y$ so that $(\mathcal{Y},\mathcal{B}_Y,\mu_Y)$ forms a measure space. Analogous constructions hold for $\mathcal{X}$ and $\mathcal{Z}$, which, being finite, are naturally endowed with power set $\sigma$-algebras and counting measures. 
	
	I assume that all other relevant covariates are exogenous. Since they play no role in the characterizations I provide, I assume that all the identification arguments are conditional on such covariates and omit them from the formal proofs and results. 
	
	Unless otherwise noted, all the results will be presented and discussed using the IV notation and terminology. All results would also hold for the sharp null of full mediation after appropriately relabeling the values of $X$ as mediators and the values of $Z$ as treatments.
	
	I adopt the potential outcomes framework. $X_i(z_k)$ denotes the treatment that unit $i$ would realize if assigned to instrument value $z_{k}$, and $Y_i(x_{l},z_{k})$ denotes the outcome under treatment $x_l$ and instrument $z_k$. The observed values of $Y_i$ and $X_i$ correspond almost surely to the potential outcomes and treatments evaluated at the realized values of $Z_i$ and $X_i$. 
	\begin{align}
		X_i=&X_i(Z_i) \textit{ a.s.}\\
		Y_i=&Y_i(X_i,Z_i)=Y_i(X_i(Z_i),Z_i) \textit{ a.s.}
	\end{align}	 

	For $Z$ to be a valid instrument, two assumptions are typically invoked, instrument independence and exclusion:
	
	\begin{assumption}[Instrument validity]\label{Assumption:Validity}\phantom{a}
		\begin{enumerate}
			\item \textbf{Instrument independence}:
			\begin{align}
				\mathbb{P}[Y_i(x_{l'},z_{k'})\in B_Y,X_i(z_{k''})=x_{l}|Z_i=z_k]=\mathbb{P}[Y_i(x_{l'},z_{k'})\in B_Y,X_i(z_{k''})=x_{l}] .
			\end{align}
			For every $\phantom{a}x_l,x_{l'}\in\mathcal{X},\phantom{a}z_k,z_{k'},z_{k''}\in\mathcal{Z}$ and $B_Y\in\mathcal{B}_Y$.
			\item \textbf{Exclusion restriction}:
			\begin{align}
				Y_i(x_{l},z_k)=Y_i(x_{l}) \textit{ a.s. }
			\end{align}	
			For every $\phantom{a}x_l\in\mathcal{X}$ and $ z_k\in\mathcal{Z}$
		\end{enumerate} 
	\end{assumption}
	
	Put in words, the instrument must be independent from potential outcomes and treatments as if it was randomly assigned, and it must have no effect on outcomes after accounting for treatment. 

	Assumption \ref{Assumption:Validity} implies that the observed joint distribution of $(X,Y,Z)$ is fully determined by the observed distribution of $Z$, and the unobserved distribution of potential outcomes and potential treatments. This motivates the following definitions:
	
	\begin{definition}\phantom{a}
		\begin{enumerate}
			\item An instrument-response type $t^Z\in\mathcal{T}_Z$ is a function $t^Z:\mathcal{Z}\longrightarrow \mathcal{X}$ that maps instrument values to treatment values.
			\item A treatment-response type $t^X\in\mathcal{T}_X$ is a function $t^X:\mathcal{X}\longrightarrow \mathcal{Y}$ that maps treatment values to outcome values.
			\item A response type $t$ is a pair $t=(t_z,t_x)\in\mathcal{T}=\mathcal{T}_Z\times\mathcal{T}_X$.
		\end{enumerate}
	\end{definition}
	
	Thus, units are fully characterized by their response type $t\in\mathcal{T}$. That is, their potential treatment and outcome realizations over the support of $X$ and $Z$.  An important feature of this definition of response types is that by construction they satisfy the exclusion restriction. The reason is that for any $x_l\in\mathcal{X}$, a response type $t=(t^Z,t^X)$ assigns a unique outcome value $t^X(x_l)$ when $X=x_l$ regardless of the value that the instrument takes.

	Note that since $\mathcal{X}$ is finite, any instrument-response type $t^Z$ can be represented as a $|\mathcal{Z}|$-dimensional vector $(t^Z(z_1),...,t^Z(z_k),...,t^Z(z_{K}))$. Similarly, any treatment-response type $t^X$ can be represented as a $|\mathcal{X}|$-dimensional vector $\big(t^X(x_1),...,t^X(x_{L})\big)$. Therefore, in a slight abuse of notation, I will refer to both instrument-response and treatment-response types in their vector form so that $t^Z=\big(t^Z(z_1),...,t^Z(z_{K})\big)$ and $t^X=\big(t^X(x_1),...,t^X(x_l),...,t^X(x_{L})\big)$. Consistent with this notation, for $1\leq k \leq K$ and $1\leq l\leq L$, I denote the $k-th$ element of $t^Z$ and the $l-th$ element of $t^X$ as $(t^Z)_k$ and $(t^X)_{l}$ respectively. In other words, $(t^Z)_k$ denotes the treatment value that instrument-response type $t^Z$ realizes when $Z=z_k$, and $(t^X)_{l}$ denotes the outcome that treatment-response type $t^X$ realizes when $X=x_{l}$. 
	
	Furthermore, any response type $t=(t^Z,t^X)$ can be represented as the concatenation of the vector representations of $t^Z$ and $t^X$. This representation results in a $|\mathcal{Z}|+|\mathcal{X}|$ dimensional vector that belongs to the set  $\mathcal{T}:=\mathcal{T}_Z\times\mathcal{T}_X=\mathcal{X}^{|\mathcal{Z}|}\times\mathcal{Y}^{|\mathcal{X}|}$.
	
	$\mathcal{T}$ is the Cartesian product of a finite number of finite sets, each identical to $\mathcal{X}$, and a finite number of sets identical to $\mathcal{Y}$. Hence, the product measure $\mu_T:=\mu_X\otimes\mu_X\otimes...\mu_X\otimes\mu_Y\otimes\mu_Y\otimes...\otimes\mu_Y$ with respect to the product $\sigma$-algebra $\mathcal{B}_T:=\mathcal{B}_X\otimes\mathcal{B}_X\otimes...\mathcal{B}_X\otimes\mathcal{B}_Y\otimes\mathcal{B}_Y\otimes...\otimes\mathcal{B}_Y$ over $\mathcal{T}$ is well defined. Having equipped $\mathcal{T}$ with a measure I can formally define a distribution over response and instrument-response types.	
	
	\begin{definition}\phantom{a}
		\begin{enumerate}
			\item A distribution over response types is a probability measure $g:\mathcal{B}_T\longrightarrow [0,1]$ over the measure space $(\mathcal{T},\mathcal{B}_T)$.
			\item A distribution over instrument-response types is a function $p:\mathcal{T}_Z\longrightarrow[0,1]$ such that $p(t^Z)\geq 0$ for all $t^Z\in\mathcal{T}_Z$, and $\sum\limits_{t^Z\in\mathcal{T}_Z}p(t^Z)=1$.
		\end{enumerate}
	\end{definition}
	
	\begin{remark}\phantom{a}\label{Remark:gImpliesf}Any distribution over response types $g$ defines a unique distribution over instrument-response types $f_g$ according to the following equation:
		\begin{align}
			\phantom{aaaaaaaaaa}p_g(t'^Z):=&g\Big(\big\{t=(t^Z,t^X)\in\mathcal{T}:t^Z=t'^Z,t^X\in\mathcal{Y
			}^{L}\big\}\Big), &\textit{ }\forall\phantom{a} t'^Z\in\mathcal{T}_Z,\\
			\phantom{aaaaaaaaaa}=&g\Big(\big\{t=(t^Z,t^X)\in\mathcal{T}:t^Z=t'^Z\big\}\Big), &\textit{ }\forall \phantom{a}t'^Z\in\mathcal{T}_Z.
		\end{align}	
	\end{remark}
	
	If assumption \ref{Assumption:Validity} holds, then an unobserved true distribution over response types $g_0$ must exist that generates the observed distribution of $(Y,X,Z)$. 
	\begin{definition}[Consistency]\label{Definition:Consistency}\phantom{a}
		A distribution over response types $g\in\Delta(\mathcal{T})$ is consistent with observed conditional outcome and treatment probabilities if:
		\begin{align}	\label{Equation:Consistency}
			\mathbb{P}[y_i\in B_Y,X_i=x_{l}|Z_i=z_k]=g_0\Big(\big\{(t^Z,t^X)\in\mathcal{T}:(t^Z)_k=x_l,(t^X)_l\in B_Y\big\}\Big),\\
			\forall x_l\in\mathcal{X}, z_k\in\mathcal{Z}, \textit{ and } B_Y\in\mathcal{B}_Y \nonumber.
		\end{align}
	\end{definition}
	
	In the previous equation, the observed conditional probability on the left hand side must correspond to the mass that $g_0$ assigns to response types whose potential treatment is equal to $x_l$ when the instrument takes the value $z_k$ and their potential outcome is contained in $B_Y$ when treatment is equal to $x_l$. 
	
	Let $\Delta(\mathcal{T})$ be the set of all probability measures over $(\mathcal{T},\mathcal{B}_T)$ (i.e. distributions over response types), assumption \ref{Assumption:Validity} is falsified if no $g\in\Delta(\mathcal{T})$ exists that satisfies equation $\ref{Equation:Consistency}$. 
	
	\subsection{Restrictions on instrument-response heterogeneity}
	
	To facilitate the identification of causal parameters using the IV method, assumptions that restrict instrument-response heterogeneity are often invoked. The most prominent example is monotonicity in the binary instrument and treatment case which, as \cite{angristImbens1994late} showed, allows identification of a local average treatment effect. Subsequent work has explored both relaxations and extensions of this assumption to settings with multiple instruments and multiple, ordered and unordered, treatments (\cite{heckman2018unordered},\cite{goff2020vector}, \cite{mogstad2021causal}, \cite{van2023limited}, \cite{lee2020treatment}, \cite{bai2024sharp}), these generalizations typically rely on ruling out particular instrument-response types.
	
	Alternatively, researchers may impose inequality restrictions on the prevalence or relative frequency of certain instrument-response types—for example, requiring that the proportion of compliers exceeds that of defiers as \cite{de2017tolerating}, or bounding the share of defiers for sensitivity analysis as in \cite{huber2014sensitivity}, \cite{noack2021sensitivity} or \cite{yap2025sensitivity}). The framework of this paper accomodates all the aforementioned restrictions on their own and combined.
	
	To introduce the class of instrument-response restrictions I consider, note that the total number of possible instrument-response types is $|\mathcal{T}_Z|:=|\mathcal{X}|^{|\mathcal{Z}|}=(L)^{K}$. Therefore, a distribution $p$ over $\mathcal{T}_Z$ can be represented as a vector $\boldsymbol{p}$ in $\mathbb{R}_+^{|\mathcal{T}_Z|}$, where each coordinate corresponds to the probability that $p\in\Delta(\mathcal{T}_Z)$ assigns to a different $t\in\mathcal{T}_Z$. 
	
	To formally define $\boldsymbol{p}$ is suffices to note that the set of response types is finite and thus can be well ordered. Then, the $n-th$ coordinate of $\boldsymbol{p}$ represents the probability that the distribution $p\in\Delta(\mathcal{T}_Z)$ assigns to the $n-th$ response type. The specific ordering of response types is not relevant as long as it is well defined.
	
	The class of restrictions on heterogeneity I consider take the form $A_Z\boldsymbol{p}\leq \boldsymbol{r}$ where $A_Z$ is an $M$ by $|\mathcal{T}_Z|$ matrix with known coefficients for some fixed $M\geq 1$, and $\boldsymbol{r}$ is an $M$ dimensional vector with entries that are either known or point identified from the data. I use the subscript $Z$ to emphasize that the matrix $A_Z$ encodes restrictions that pertain to treatment assignment as a function of the instrument.
	
	\begin{assumption}\phantom{a}\label{Assumption:SelectionModel} 
		Let $M\geq$ be a fixed integer, $A_Z$ a known fixed matrix $A_Z\in\mathbb{R}^{M\times|\mathcal{T}_Z|}$, and $\boldsymbol{r}_Z\in\mathbb{R}^{M}$ a point identified vector that depends on the joint distribution of $(Y,X,Z)$.
		
		A distribution over response types $g_0\in\Delta(\mathcal{T})$ exists such that if $\boldsymbol{p}_{g_0}$ denotes the vector representation of $p_{g_0}$, the following system of inequalities holds:
		\begin{align}\label{Equation:SelectionModel}
			A_Z\boldsymbol{p}_{g_0}\leq \boldsymbol{r}_Z.
		\end{align}	
	\end{assumption} 
	Note that strict equality restrictions can be incorporated by representing them as pairs of inequality constraints. In particular, if the $k$-th entry of $\boldsymbol{p}$ corresponds to the instrument-response type $t^Z\in\mathcal{T}_Z$, the restriction that $t^{Z}$ occurs with zero probability can be imposed by including the canonical basis vectors $e_k$ and $-e_k$ as rows of $A_Z$, and setting the corresponding entries of $\boldsymbol{r}_Z$ to zero.%
	\footnote{$e_k$ denotes the vector that takes the value 0 in every entry except the $k$-th which is equal to 1.} %
	Thus, assumption \ref{Assumption:SelectionModel} can accommodate restrictions that rule-out specific response types. In appendix \ref{Appendix:RestrictionExamples} I provide examples of models and restrictions encompassed by assumption \ref{Assumption:SelectionModel}. 
	
	To conclude, note that if no distribution over instrument-response types $p\in\Delta(\mathcal{T}_Z)$ exists such that $\boldsymbol{p}$ satisfies equation \ref{Equation:SelectionModel} and is consistent with observed conditional treatment probabilities, then, the restrictions on instrument-response heterogeneity imposed by assumption \ref{Assumption:SelectionModel} are falsified. 
	
	\begin{proposition}[\textbf{Sharp observable implications of assumption \ref{Assumption:SelectionModel}}]\label{Proposition:SharpSelectionModel_LinearEquations}\phantom{a}A distribution over instrument-response types $p$ is consistent with assumption \ref{Assumption:SelectionModel} and observed conditional treatment probabilitites $\mathbb{P}[X_i=x_l|Z_i=x_l]$, if and only if it satisfies the following system of linear equalities and inequalities:	
		\begin{align}
			A_{Z}\boldsymbol{p}&\leq \boldsymbol{r}_Z\label{Equation:SM1},\\
			A_{X}\boldsymbol{p}&=\boldsymbol{r}_X\label{Equation:SM2}.
		\end{align} 
		Where $A_Z$ and $\boldsymbol{r}_Z$ are defined as in assumption \ref{Assumption:SelectionModel} and the system $A_X\boldsymbol{p}=\boldsymbol{r}_X$ stacks the following set of equations:
		\begin{align}
			\phantom{aaaaa}\mathbb{P}[X_i=x_{l}|Z_i=z_k]&=\sum\limits_{t^Z\in\mathcal{T}_Z:t^Z(z_k)=x_l}p(t^Z), \phantom{aaaaa}\forall z_{l}\in\mathcal{Z},x_k\in\mathcal{X}.
		\end{align}	
	\end{proposition}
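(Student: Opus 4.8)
The plan is to establish the two directions of the ``if and only if'' separately, with the nontrivial content being that consistency of a distribution over instrument-response types with the observed \emph{treatment} marginals is all that is needed — i.e., that the outcome marginals impose no further restriction once the exclusion restriction is built into the definition of response types. First I would fix notation: given $p\in\Delta(\mathcal{T}_Z)$ satisfying \eqref{Equation:SM1}, I want to produce a full distribution $g\in\Delta(\mathcal{T})$ over response types with $p_g=p$ that is consistent in the sense of Definition \ref{Definition:Consistency}, and conversely, given such a $g$, check that $p_g$ satisfies \eqref{Equation:SM1}--\eqref{Equation:SM2}.

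For the ``only if'' direction, suppose $g\in\Delta(\mathcal{T})$ is consistent and $p_{g}$ satisfies Assumption \ref{Assumption:SelectionModel}, so \eqref{Equation:SM1} holds by definition of $p_g$ and $\boldsymbol{r}_Z$. To get \eqref{Equation:SM2}, apply the consistency identity \eqref{Equation:Consistency} with $B_Y=\mathcal{Y}$: the right-hand side becomes $g(\{(t^Z,t^X):(t^Z)_k=x_l\})$, which by Remark \ref{Remark:gImpliesf} equals $p_g(\{t^Z:(t^Z)_k=x_l\})=\sum_{t^Z:t^Z(z_k)=x_l}p_g(t^Z)$; the left-hand side is $\mathbb{P}[X_i=x_l\mid Z_i=z_k]$ after integrating out $y$. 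That is exactly the system $A_X\boldsymbol{p}_g=\boldsymbol{r}_X$.

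For the ``if'' direction I would argue constructively. Given $p$ with $A_Z\boldsymbol{p}\le\boldsymbol{r}_Z$ and $p$ consistent with the treatment marginals $\mathbb{P}[X_i=x_l\mid Z_i=z_k]$, I need to ``attach'' a treatment-response type $t^X$ to each instrument-response type. The key observation is that because a response type $t=(t^Z,t^X)$ by construction satisfies the exclusion restriction, the only place outcome information enters the consistency equation is through the coordinate $(t^X)_l$ when the instrument-response type happens to select $x_l$ at $z_k$. So it suffices to specify, for each $l$, a conditional distribution over outcome values that, when $X_i=x_l$, reproduces $f_{y\mid X_i=x_l}$ — but observed conditional-on-$Z$ outcome distributions restricted to the event $X_i=x_l$ need \emph{not} agree across $z_k$, so one cannot simply use a single common law. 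The clean fix is to build $g$ as follows: for each instrument-response type $t^Z$ and each $z_k$, the observed joint law of $Y_i$ given $(X_i=t^Z(z_k),Z_i=z_k)$ tells us how the mass $p(t^Z)$ allocated to that type should, conditionally on $Z=z_k$, be spread over outcome values at coordinate $l=$ index of $t^Z(z_k)$. One constructs $g$ on $\mathcal{T}=\mathcal{T}_Z\times\mathcal{Y}^L$ by, loosely, taking a product across the $L$ treatment coordinates of appropriately reweighted conditional outcome laws; the reweighting is chosen coordinate by coordinate so that for every $(z_k,x_l)$ the induced mass on $\{(t^Z)_k=x_l,(t^X)_l\in B_Y\}$ equals $\sum_{t^Z:t^Z(z_k)=x_l}p(t^Z)\cdot\frac{\mathbb{P}[Y_i\in B_Y,X_i=x_l\mid Z_i=z_k]}{\mathbb{P}[X_i=x_l\mid Z_i=z_k]}=\mathbb{P}[Y_i\in B_Y,X_i=x_l\mid Z_i=z_k]$, using the treatment-consistency of $p$ in the last step. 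One then checks this $g$ is a genuine probability measure, that $p_g=p$ (immediate, since marginalizing over all $t^X$ collapses the outcome coordinates), and that \eqref{Equation:Consistency} holds for all $B_Y\in\mathcal{B}_Y$ (by a monotone-class / $\pi$-$\lambda$ argument it is enough to verify it on a generating $\pi$-system). Since $p_g=p$ satisfies \eqref{Equation:SM1} by hypothesis, $g$ witnesses Assumption \ref{Assumption:SelectionModel} together with consistency, which is what ``$p$ is consistent with assumption \ref{Assumption:SelectionModel}'' means.

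The main obstacle is the construction in the ``if'' direction: one must define $g$ on the full product space $\mathcal{T}_Z\times\mathcal{Y}^L$ in a way that simultaneously (a) has the prescribed $\mathcal{T}_Z$-marginal $p$, and (b) reproduces, for \emph{every} pair $(z_k,x_l)$, a different conditional outcome law on the same outcome coordinate $l$ — these are compatible precisely because each such requirement only constrains the slice of $g$ supported on $\{t^Z:t^Z(z_k)=x_l\}$, and for fixed $l$ the events $\{t^Z(z_k)=x_l\}$ for different $k$ overlap but the constraints are consistent thanks to treatment-consistency of $p$. Making the product construction and the overlap bookkeeping precise (and handling the measure-theoretic subtleties when $\mathcal{Y}$ is a general measure space via Radon--Nikodym conditional densities, with a disintegration of the observed law) is where the real work lies; everything else is bookkeeping with \eqref{Equation:Consistency} and Remark \ref{Remark:gImpliesf}.
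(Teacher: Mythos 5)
Your ``only if'' direction is fine, but your ``if'' direction proves the wrong statement, and the stronger statement you aim at is false. Proposition \ref{Proposition:SharpSelectionModel_LinearEquations} only asks for consistency of $p$ with Assumption \ref{Assumption:SelectionModel} and with the observed conditional \emph{treatment} probabilities; it says nothing about reproducing the conditional outcome distributions. Read this way the result is essentially definitional: \eqref{Equation:SM1} is verbatim what Assumption \ref{Assumption:SelectionModel} requires of $\boldsymbol{p}_{g_0}$, and \eqref{Equation:SM2} is verbatim what ``matches the observed conditional treatment probabilities'' means (for the extension to some $g\in\Delta(\mathcal{T})$ one may attach an arbitrary outcome law, since no outcome constraint is imposed). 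You instead set out to construct a $g$ satisfying the full consistency condition of Definition \ref{Definition:Consistency}, i.e.\ reproducing $\mathbb{P}[Y_i\in B_Y,X_i=x_l\mid Z_i=z_k]$ for every $B_Y$. That is the content of Theorem \ref{Theorem:SharpBinary}, and it provably requires the additional constraint $A_Y\boldsymbol{p}\leq\boldsymbol{\Psi}$, which you do not assume.

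The concrete step that fails is your claim that the per-$z_k$ outcome requirements on the overlapping events $\{t^Z:(t^Z)_k=x_l\}$ are ``consistent thanks to treatment-consistency of $p$.'' They are not. Take binary $Z$ and $X$ with $p\big((x_1,x_1)\big)=1$, so that $\mathbb{P}[X_i=x_1\mid Z_i=z_0]=\mathbb{P}[X_i=x_1\mid Z_i=z_1]=1$ and treatment consistency holds. The two events $\{(t^Z)_0=x_1\}$ and $\{(t^Z)_1=x_1\}$ are then the same set of response types, and your construction would force the single coordinate $(t^X)_1$ to carry, on the same slice of $g$, both the law $f_{Y\mid X=x_1,Z=z_0}$ and the law $f_{Y\mid X=x_1,Z=z_1}$ --- impossible whenever these differ. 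More generally, for any type realizing $x_l$ at two instrument values, the mass it places on $\{(t^X)_l\in B_Y\}$ is bounded by $\min_k\mathbb{P}[Y_i\in B_Y,X_i=x_l\mid Z_i=z_k]$ for every $B_Y$, which is exactly why $p(AT_{x_l})\leq\Psi_{x_l}$ is a genuine extra restriction and why your unconditional construction cannot go through. If you restrict your argument to the treatment marginals only (as the proposition does), the proof collapses to the intended one-line observation.
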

	The proof of proposition \ref{Proposition:SharpSelectionModel_LinearEquations} is straightforward and follows from the previous discussion. 
	
	\section{The binary instrument case} \label{Section:Binary}
	
	Assumption \ref{Assumption:IVResponseTypeRestriction} and the observed conditional treatment probabilities restrict the set of admissible instrument-response probabilitites. Now I show that, for binary instruments, the restrictions implied by the observed conditional distribution of the outcome combined with the exclusion restriction can be expressed as a finite set of linear restrictions on instrument-response probabilities too. This insight reduces a potentially infinite set of inequalities to a finite one. 
	
	To this end, I first introduce some additional notation. Define the function $\Psi_{x_l}:\mathcal{Y}\longrightarrow \mathbb{R}_{+}$ as follows:
	\begin{align}
		\psi_{x_l}(y)=\underset{k=0,1}{min}\left\{\boldsymbol{P}_{x_l|z_k}\left[f_{Y|X=x_{l},Z=z_{k}}(y)\right]\right\}.
	\end{align}
	Where $\boldsymbol{P}_{x_l|z_k}:=\mathbb{P}[X=x_l|Z=z_k]$ and $f_{Y|X=x_{l},Z=z_{k}}$ denote the distribution of $Y$ conditional on $X=X_k$ and $Z=Z_k$.
	
	Now define the quantity:
	\begin{align}
		\Psi_{x_l}=\int\limits_{\mathcal{Y}}\psi_{x_l}(y)d\mu_Y.
	\end{align}
	
	Finally, for any $x_l\in \mathcal{X}$, define the set of always-$x_l$ takers as units such that $X_i(z_0)=X_i(z_1)=x_l$ and note that a distribution over response types $p\in\Delta(\mathcal{T}_Z)$ assigns probability $p\left(AT_{x_l}\right)$ to always-$x_l$ takers where $p\left(AT_{x_l}\right)$ is defined as follows:
	\begin{align}
		p\left(AT_{x_l}\right):=\sum\limits_{t^Z\in\mathcal{T}_Z}p(t^Z)\boldsymbol{I}_{\{t^Z(z_0)=t^Z(z_1)=x_l\}}.
	\end{align}
	Where $\boldsymbol{I}_{\{t(z_0)=t(z_1)\}}$ is an indicator that response type $t^Z$ realizes treatment $x_l$ at both values of $Z$.
	
	Finally, Let $t^Z_j$ denote the $j$-th instrument-response type according to the fixed pre-specified order over instrument-response types. Let $A_{Y}$ be a $L\times|\mathcal{T}_Z|$ matrix with entires $(A_{Y})_{l,j}=\boldsymbol{I}_{\{t^Z_j(z_0)=t^Z_j(z_1)=x_l\}}$ and let $\boldsymbol{\Psi}$ denote the vector that stacks the parameters $\Psi_{x_0},...,\Psi_{x_K}$ and has $\Psi_{x_l}$ in its $l$-th coordinate. 
	
	The following theorem characterizes the sharp observable implications of binary instruments for heterogeneous treatment effects:
	\begin{theorem}\label{Theorem:SharpBinary}[Sharp observable implications of binary instruments]\phantom{a}\\
		The following statements are equivalent:
		\begin{enumerate}
			\item A distribution over response types $g\in\Delta(\mathcal{T})$ that satisfies, assumptions \ref{Assumption:Validity} (instrument validity), and \ref{Assumption:SelectionModel} (instrument-response restriction) and definition \ref{Definition:Consistency} (consistency with observed probabilities) exists. 
			\item A distribution over instrument-response types $p\in\Delta(\mathcal{T}_Z)$ exists such that $p$ satisfies:
			\begin{align}
				A_{Z}\boldsymbol{p}&\leq \boldsymbol{r}_Z,\\
				A_{X}\boldsymbol{p}&=\boldsymbol{r}_X,\\
				A_{Y}\boldsymbol{p}&\leq \boldsymbol{\Psi}.
			\end{align}
		\end{enumerate}
		Where $A_Z,A_X,r_Z$ and $r_X$ are finite dimensional matrices as defined in assumption \ref{Assumption:SelectionModel} and proposition \ref{Proposition:SharpSelectionModel_LinearEquations}.
	\end{theorem}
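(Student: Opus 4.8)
The plan is to prove the equivalence by establishing the two implications, after recording one elementary measure‑theoretic fact. For $k\in\{0,1\}$ and $x_l\in\mathcal{X}$ write $\lambda^{(l)}_k$ for the sub‑probability measure $B_Y\mapsto\mathbb{P}[Y_i\in B_Y,X_i=x_l\mid Z_i=z_k]$ on $(\mathcal{Y},\mathcal{B}_Y)$, which has $\mu_Y$‑density $\phi_{z_k,x_l}(y)=\boldsymbol{P}_{x_l|z_k}f_{Y|X=x_l,Z=z_k}(y)$ and total mass $\boldsymbol{P}_{x_l|z_k}$. Because both $\lambda^{(l)}_0$ and $\lambda^{(l)}_1$ admit $\mu_Y$‑densities, their greatest common lower bound in the lattice of measures, $\lambda^{(l)}_0\wedge\lambda^{(l)}_1$, exists and admits the pointwise minimum $\psi_{x_l}=\min\{\phi_{z_0,x_l},\phi_{z_1,x_l}\}$ as a $\mu_Y$‑density, so its total mass is exactly $\Psi_{x_l}$; in particular a nonnegative measure $\nu$ on $\mathcal{Y}$ satisfies $\nu\le\lambda^{(l)}_0$ and $\nu\le\lambda^{(l)}_1$ if and only if $\nu\le\lambda^{(l)}_0\wedge\lambda^{(l)}_1$.

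\textbf{Statement 1 $\Rightarrow$ statement 2.} Let $g$ be as in statement 1 and put $\boldsymbol{p}=\boldsymbol{p}_g$, the vector representation of the marginal $p_g$ over $\mathcal{T}_Z$ (Remark \ref{Remark:gImpliesf}). Consistency of $g$ with the observed conditional treatment probabilities together with Assumption \ref{Assumption:SelectionModel} gives, via Proposition \ref{Proposition:SharpSelectionModel_LinearEquations}, that $A_Z\boldsymbol{p}\le\boldsymbol{r}_Z$ and $A_X\boldsymbol{p}=\boldsymbol{r}_X$. For $A_Y\boldsymbol{p}\le\boldsymbol{\Psi}$, fix $x_l$ and let $\nu$ be the image of $g$ restricted to the event $\{(t^Z)_0=(t^Z)_1=x_l\}$ under the map $t\mapsto(t^X)_l$; this is a nonnegative measure on $\mathcal{Y}$ of total mass $p_g(AT_{x_l})$. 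Since $\{(t^Z)_0=(t^Z)_1=x_l\}\subseteq\{(t^Z)_k=x_l\}$ for $k=0,1$, the consistency equation \eqref{Equation:Consistency} yields $\nu(B_Y)\le g(\{(t^Z)_k=x_l,(t^X)_l\in B_Y\})=\lambda^{(l)}_k(B_Y)$ for every $B_Y\in\mathcal{B}_Y$, so $\nu\le\lambda^{(l)}_0\wedge\lambda^{(l)}_1$, and evaluating at $\mathcal{Y}$ gives $p_g(AT_{x_l})=\nu(\mathcal{Y})\le\Psi_{x_l}$, the $l$‑th coordinate of the claimed inequality.

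\textbf{Statement 2 $\Rightarrow$ statement 1.} This is the sharpness direction and carries the substance of the argument: given $\boldsymbol{p}$ satisfying the three systems I build a distribution $g\in\Delta(\mathcal{T})$ with marginal $p$ over $\mathcal{T}_Z$ that is consistent with all observed conditional outcome‑and‑treatment probabilities. For each $x_l$ split the support of $p$ into $C_{0,l}=\{t^Z:(t^Z)_0=x_l\neq(t^Z)_1\}$, $C_{1,l}=\{t^Z:(t^Z)_1=x_l\neq(t^Z)_0\}$, and $AT_{x_l}$; the rows of $A_X\boldsymbol{p}=\boldsymbol{r}_X$ give $p(C_{k,l})+p(AT_{x_l})=\boldsymbol{P}_{x_l|z_k}$. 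Since $p(AT_{x_l})\le\Psi_{x_l}=(\lambda^{(l)}_0\wedge\lambda^{(l)}_1)(\mathcal{Y})$, choose a nonnegative measure $\mu^{AT}_l\le\lambda^{(l)}_0\wedge\lambda^{(l)}_1$ of total mass $p(AT_{x_l})$ (for instance $\tfrac{p(AT_{x_l})}{\Psi_{x_l}}(\lambda^{(l)}_0\wedge\lambda^{(l)}_1)$ when $\Psi_{x_l}>0$, and $0$ otherwise, which forces $p(AT_{x_l})=0$), and set $\mu^{C}_{k,l}=\lambda^{(l)}_k-\mu^{AT}_l\ge 0$, a measure of total mass exactly $p(C_{k,l})$. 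Attach, at the $l$‑th outcome coordinate, the probability measure $\mu^{AT}_l/p(AT_{x_l})$ to every $t^Z\in AT_{x_l}$, the probability measure $\mu^{C}_{k,l}/p(C_{k,l})$ to every $t^Z\in C_{k,l}$, and an arbitrary probability measure to the remaining $t^Z$ (degenerate pieces with a vanishing denominator drop out of what follows), and couple the $L$ coordinates of each $t^Z$ as a product; this defines measures $\rho^{x_l}_{t^Z}$ and the finite mixture $g=\sum_{t^Z}p(t^Z)\,\delta_{t^Z}\otimes\big(\bigotimes_{l}\rho^{x_l}_{t^Z}\big)$, a probability measure on $(\mathcal{T},\mathcal{B}_T)$. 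By construction $p_g=p$, so $A_Z\boldsymbol{p}_g\le\boldsymbol{r}_Z$ and $A_X\boldsymbol{p}_g=\boldsymbol{r}_X$, i.e.\ Assumption \ref{Assumption:SelectionModel} holds; and for all $x_l,z_k,B_Y$, $g(\{(t^Z)_k=x_l,(t^X)_l\in B_Y\})=\sum_{t^Z:(t^Z)_k=x_l}p(t^Z)\rho^{x_l}_{t^Z}(B_Y)=\mu^{C}_{k,l}(B_Y)+\mu^{AT}_l(B_Y)=\lambda^{(l)}_k(B_Y)$, so $g$ satisfies \eqref{Equation:Consistency}. Since response types satisfy the exclusion restriction by construction and consistency of $g$ with the observed distribution is precisely the instrument‑independence‑plus‑exclusion content discussed after Definition \ref{Definition:Consistency}, statement 1 follows.

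\textbf{Main obstacle.} Everything hinges on the lattice‑of‑measures fact in the first paragraph — that domination by both conditional sub‑densities is equivalent to domination by their pointwise minimum, and that this minimum integrates to $\Psi_{x_l}$ — together with the matching construction of $\mu^{AT}_l$ in the converse direction; I expect the write‑up to isolate this as a short preliminary lemma (and to treat, once, the degenerate cases where some $\boldsymbol{P}_{x_l|z_k}$, $p(C_{k,l})$ or $p(AT_{x_l})$ vanishes). Given that lemma, both implications reduce to bookkeeping against the definitions of $A_X$, $A_Y$, $\boldsymbol{\Psi}$ and consistency, with the $A_Z$ and Assumption \ref{Assumption:SelectionModel} component inherited unchanged from Proposition \ref{Proposition:SharpSelectionModel_LinearEquations}.
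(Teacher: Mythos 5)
Your proposal is correct and follows essentially the same route as the paper: the necessity direction is the always-taker bound of Proposition \ref{Proposition:UpperBoundOnFT} (you package the ``domination by both sub-densities iff domination by their pointwise minimum'' step as a lattice-of-measures fact, whereas the paper proves it by partitioning $\mathcal{Y}$ into regions where the ordering of the sub-densities is fixed, but the computation is the same), and the sufficiency direction uses the identical construction — assign always-$x_l$-takers the measure $\tfrac{p(AT_{x_l})}{\Psi_{x_l}}\psi_{x_l}$, give the one-time takers the residual $\lambda^{(l)}_k$ minus that measure, and fill the unrealized outcome coordinates with an arbitrary product measure. Your explicit handling of the degenerate cases where $\Psi_{x_l}$ or $p(C_{k,l})$ vanishes is a small tidiness gain over the paper's write-up, but the argument is the same.
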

	The formal proof of theorem \ref{Theorem:SharpBinary} is given in appendix \ref{Appendix:ProofBinaryIV}. Here, I provide some intuition on the result and compare it to existing falsification procedures.
	
	To see why the inequality $p(AT_{x_l})\leq \Psi_{x_l}$ must hold for every $x_l\in\mathcal{X}$ note that always-$x_l$ takers are a potentially strict subset of all the units that could potentially realize treatment $x_{l}$ both when $Z=z_0$ and when $Z=z_1$. Therefore, if the exclusion restriction holds, the group of always-$x_l$ takers that realize outcomes in any measurable set $B_Y\subset\mathcal{B}_Y$ must also be a (potentially strict) subset of the units that realize outcomes in $B_Y$ when $X=x_l$ and $Z=Z_k$ for $z_k\in\{z_0,z_1\}$. This observation is illustrated in figure \ref{Figure:Necessity}.
	
	To see the sufficiency of the restrictions, suppose that some $p\in\mathcal{T}_Z$ is consistent with observed conditional treatment probabilities and satisfies $p(AT_{x_l})\leq\Psi_{x_l}$ for every $x_\in\mathcal{X}$. To construct a distribution over response types consistent with the observed joint distribution of $(Y,X,Z)$, the unobserved marginal distribution of potential outcomes for always-$x_l$ takers when $X=x_l$ can be set to $\big[\frac{p(AT_{x_l})}{\Psi_{x_l}}\big]\psi_{x_l}(y)$ so that always-$x_l$ takers account for a fraction $\frac{p(AT_{x_l})}{\Psi_{x_l}}$ of the mass represented by $\Psi_{x_l}$. This construction guarantees that the unconditional sub-density associated to the outcomes of always-$x_l$-takers is dominated by $\boldsymbol{P}_{x_l|z_0}\left[f_{Y|X=x_{l},Z=z_{0}}(y)\right]$ and $\boldsymbol{P}_{x_l|z_1}\left[f_{Y|X=x_{l},Z=z_{1}}(y)\right]$. 

	\FloatBarrier
	\begin{figure}[h!]
		\caption{Upper bounds on always takers.}\label{Figure:Necessity}
		\begin{minipage}{0.48\textwidth}
			\centering
			\includegraphics[width=0.99\linewidth]{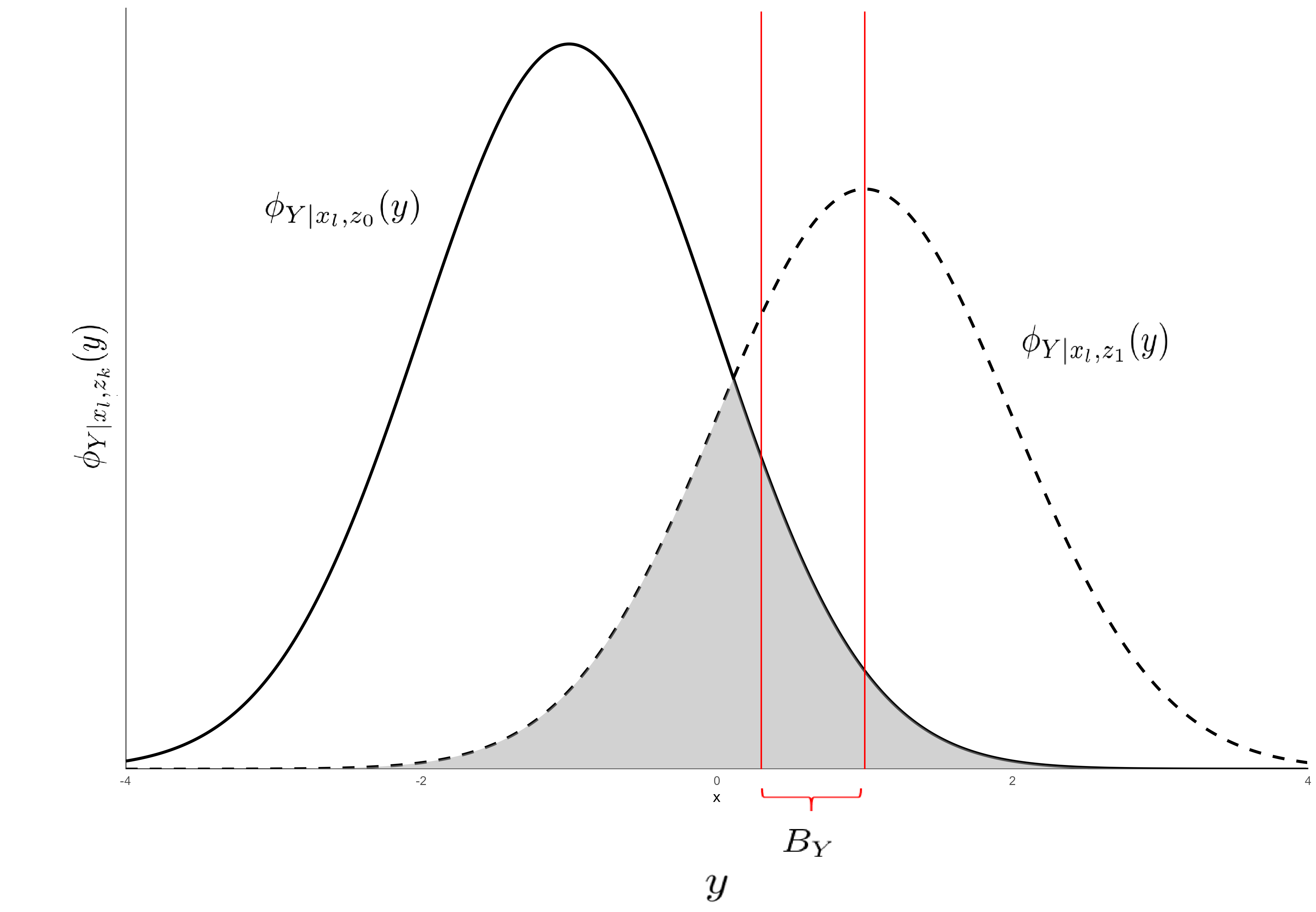}
			\subcaption{}\label{Figure:Necessity_a}
		\end{minipage}
		\begin{minipage}{0.48\textwidth}
			\centering
			\includegraphics[width=0.99\linewidth]{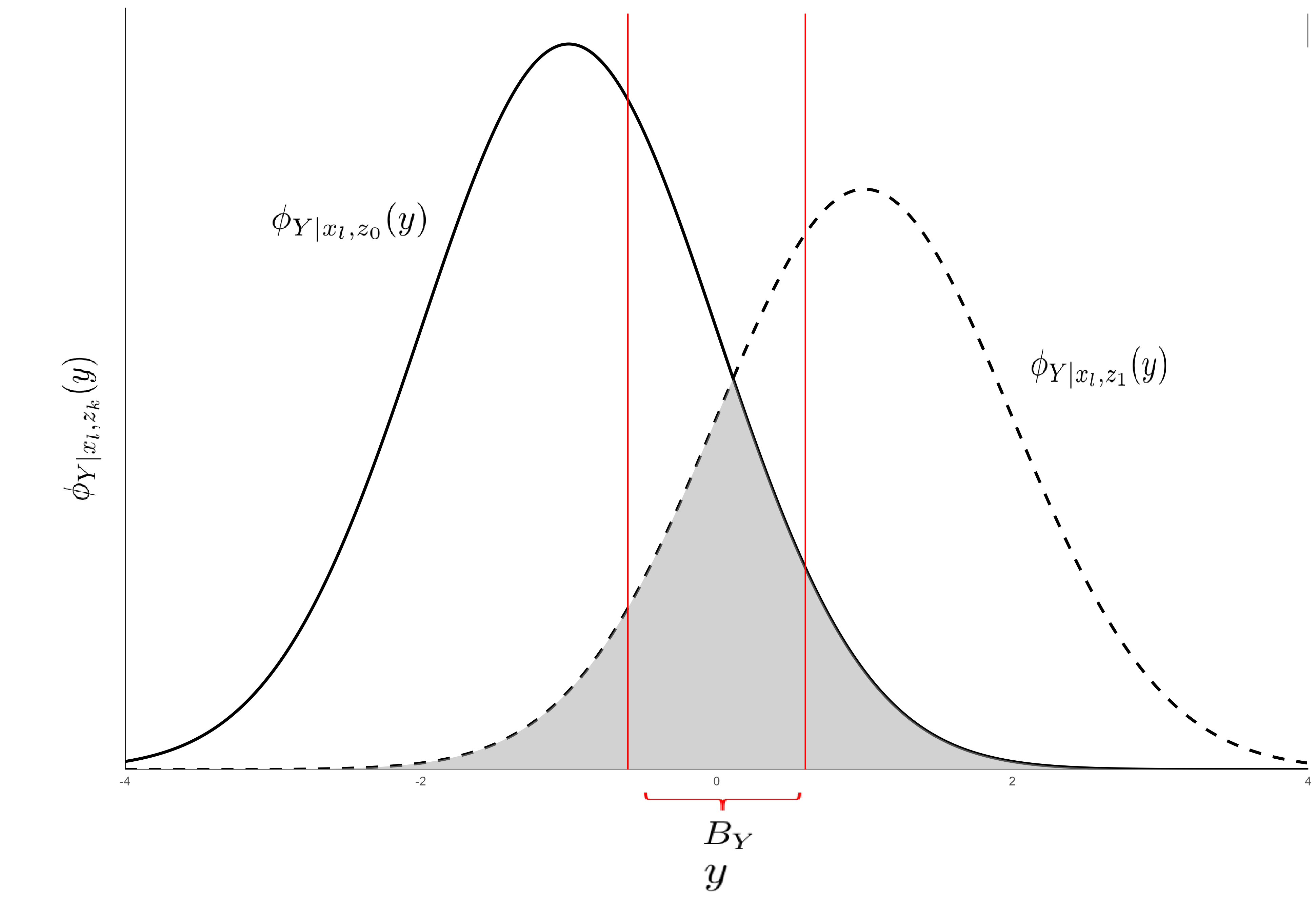}
			\subcaption{}\label{Figure:Necessity_b}
		\end{minipage}
		\footnotesize \textit{\textbf{Notes:} In both sub-figures the solid and dashed black lines respectively represent the hypothetical sub-densities $\phi_{Y|x_l,z_0}=\boldsymbol{P}_{x_l|z_0}\left[f_{Y|X=x_{l},Z=z_{0}}(y)\right]$ and $\phi_{Y|x_l,z_1}=\boldsymbol{P}_{x_l|z_1}\left[f_{Y|X=x_{l},Z=z_{1}}(y)\right]$. The constant $\Psi_{x_l}$ is defined as the integral over $\mathcal{Y}$ of the point-wise minimum of these two sub-densities, thus, its magnitude corresponds to the gray colored area. The two vertical red lines delimit a measurable interval $B_Y\in\mathcal{B}_Y$.}
		
		\textit{The exclusion restriction implies that the distribution of outcomes for the group $AT_{x_l}$ of subjects that realize treatment $x_l$ at both instrument values ($z_0$ and $z_1$) must be the same conditional on $Z_i=z_0$ or $Z_i=z_1$. Moreover, the units in $AT_{x_l}$ whose outcomes belong to some measurable $B_Y\in\mathcal{B}_Y$ are a (potentially strict) subset of all the units that would realize treatment $x_l$ and outcomes in $B_Y$. As a consequence, conditional on $Z_i=z_k$, the probability that the outcomes of such group belong to any measurable set $B_Y\in\mathcal{B}_Y$ must be lower that the probability that the events $\{Y_i\in B_Y\}$ and $\{X_i=x_l\}$ are jointly realized. For instance, in figure \ref{Figure:Necessity_a}, the fraction of units that realize treatment $x_l$ and outcomes in $B_Y$ conditional on $Z=z_0$ strictly coincides with the intersection of the gray area and the rectangle delimited by the red lines. On the other hand, conditional on $Z=z_1$, there is an additional group of units (whose mass corresponds to the area delimited by the red lines, the dashed black line, and the contour of the gray area) who realize outcomes in $B_Y$. This additional group cannot belong to $AT_{x_l}$, otherwise, their outcomes would change without their treatment changing, which would be a violation of the exclusion restriction. 
			In figure \ref{Figure:Necessity_b} none of the two sub-densities delimits the upper contour of the gray area, and therefore, the integral of the point-wise minimum yields a tighter bound  on always takers that realize outcomes in $B_Y$ than the minimum of the integrals of $\phi_{Y|x_l,z_0}$ and $\phi_{Y|x_l,z_1}$ over $B_Y$. This argument is formalized in appendix \ref{Appendix:Proofs}.}
	\end{figure}

	It remains to distribute the fraction $1-\frac{p(AT_{x_l})}{\Psi_{x_l}}$ of the area $\Psi_{x_l}$ as well as any mass strictly above at least one sub-density across response types that realize $x_l$ at just one instrument value (i.e. $x_l$-compliers and $x_l$-defiers). Assigning potential outcomes to these units is simple because --since they realize treatment $x_l$ at just one instrument value-- the exclusion restriction does not bind their potential outcomes. Figure \ref{Figure:Sufficiency_a} provides a visual illustration of this procedure. The formal construction is described in appendix \ref{Appendix:Proofs}. 
	
	\begin{figure}[h!]
		\caption{Sufficiency of the necessary conditions}
		\begin{minipage}{0.48\textwidth}
			\centering
			\includegraphics[width=0.99\textwidth]{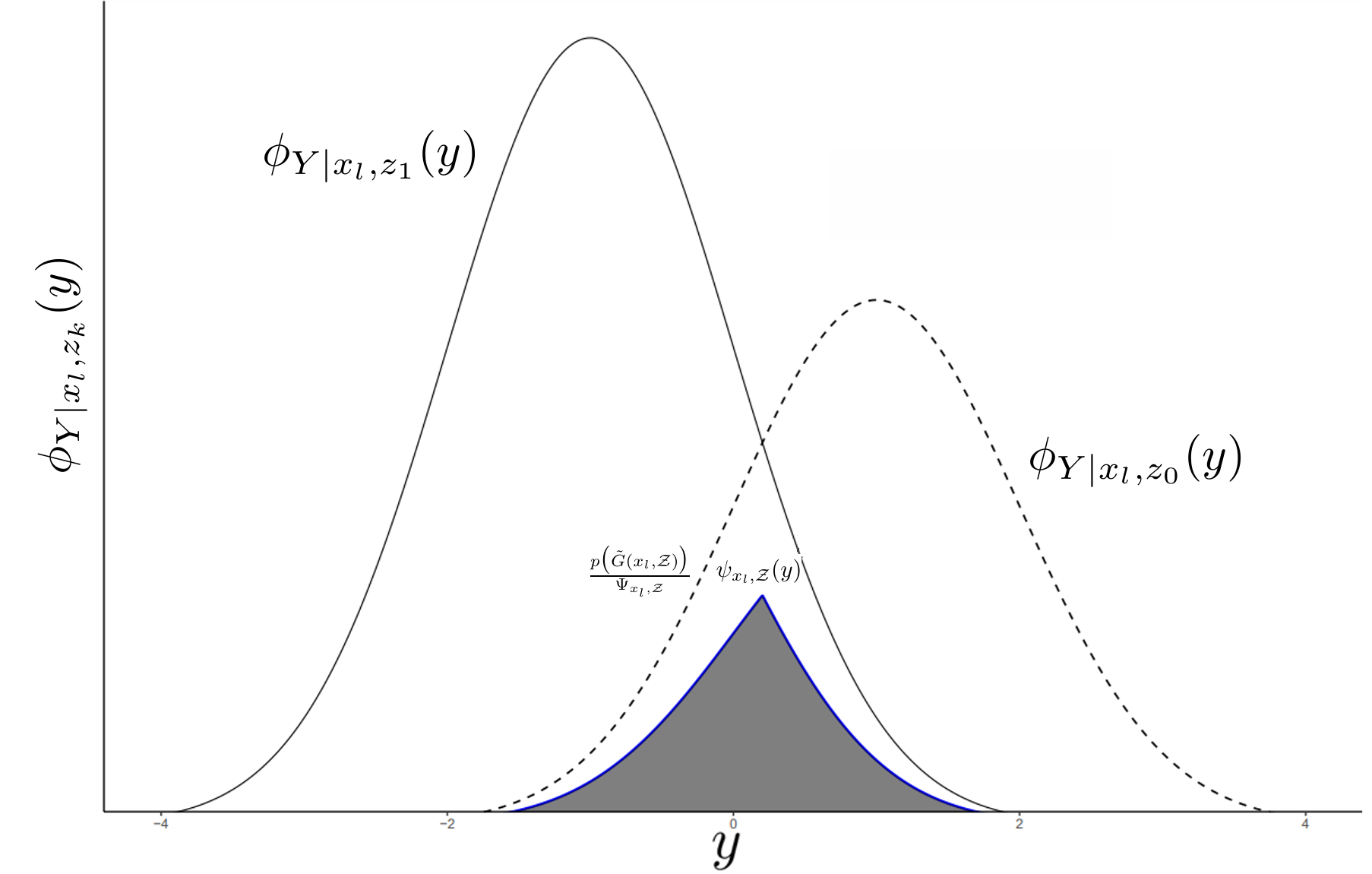}
			\subcaption{General case}\label{Figure:Sufficiency_a}
		\end{minipage}
		\begin{minipage}{0.48\textwidth}
			\centering
			\includegraphics[width=0.99\textwidth]{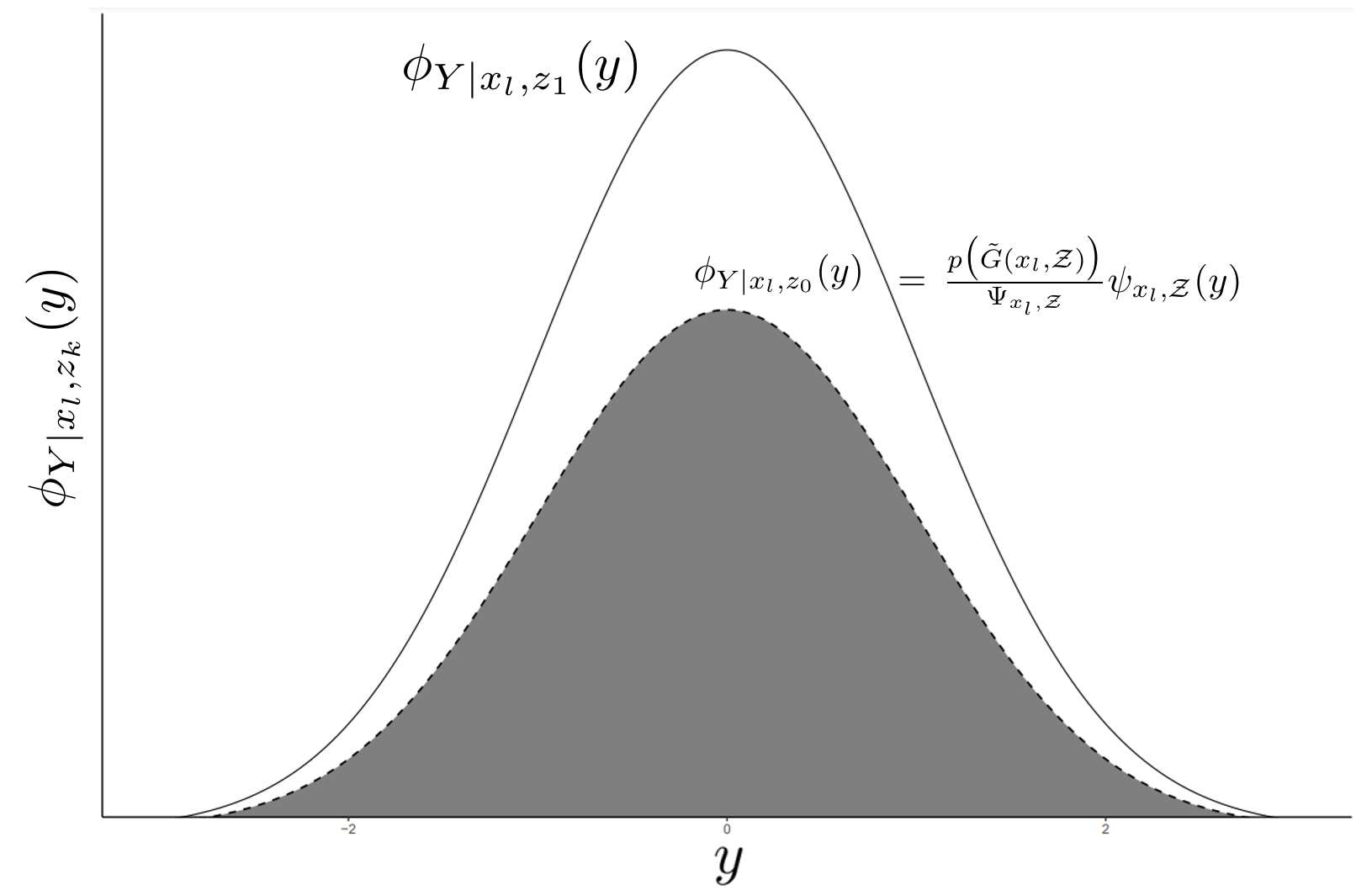}
			\subcaption{Monotone IV}\label{Figure:Sufficiency_b}
		\end{minipage}
		\footnotesize \textit{\textbf{Notes:} In both sub-figures the solid and dashed black lines respectively represent the hypothetical sub-densities $\phi_{Y|x_l,z_0}:=\boldsymbol{P}_{x_l|z_0}\left[f_{Y|X=x_{l},Z=z_{0}}(y)\right]$ and $\phi_{Y|x_l,z_1}:=\boldsymbol{P}_{x_l|z_1}\left[f_{Y|X=x_{l},Z=z_{1}}(y)\right]$. The function $\psi_{x_l}(y)$ is defined as the point-wise minimum $\psi_{x_l}(y):=min\{\phi_{Y|x_l,z_0}(y),\phi_{Y|x_l,z_1}(y)\}$, and the constant $\Psi_{x_l}$ as its integral over $\mathcal{Y}$, thus, its magnitude corresponds to the area that lies strictly below both sub-densities. In this figure, the gray area equals the quantity $p(AT_{x_l})$. The blue solid line represents the sub-density $\frac{p(AT_{x_l})}{\Psi_{x_l}}\psi_{x_l}(y)$ which is proportional to the distribution of potential outcomes for always-$x_l$-takers. Then, the distribution of potential outcomes for $x_l$-compliers and $x_l$-defiers can be defined to be proportional to the sub-density $\boldsymbol{P}_{x_l|z_k}\left[f_{Y|X=x_{l},Z=z_{k}}(y)\right]-\frac{p(AT_{x_l})}{\Psi_{x_l}}\psi_{x_l}(y)$ for $k=0,1$ respectively. }
		
		\textit{Sub-figure (a) corresponds to the general case without monotonicity. It also illustrates a situation in which a valid distribution over instrument response types $p\in\Delta(\mathcal{T}_Z)$ is such that $p(AT_{x_l})<\Psi_{x_l}$. The monotone case is illustrated in sub-figure (b). In this case, monotonicity enforces that $p(AT_{x_l})=\Psi_{x_l}$, furthermore, the distribution of potential outcomes for always-$x_l$-takers is identified and is proportional to $\psi_{x_l}(y)$.}
	\end{figure}	
	\normalsize
	\textbf{Comparison with \cite{kitagawa2015test} and \cite{kwon2024testing}}: A benchmark case is when assumption \ref{Assumption:SelectionModel} imposes that treatment $x_l$ can only increase (or decrease) as the instrument changes. This case corresponds to the classical monotone binary instrument and treatment model from \cite{angristImbens1994late},\cite{kitagawa2015test} and \cite{mourifie2017testing}. It also corresponds to maximal elements in the ordered or unordered case as in \cite{sun2023instrument}. 
	
	Suppose that addoption of $x_l$ weakly increases as $Z$ changes from $z_0$ to $z_1$. Note that in this special case, the function $\boldsymbol{P}_{x_l|z_1}\left[f_{Y|X=x_{l},Z=z_{1}}(y)\right]$ weakly dominates the function $\boldsymbol{P}_{x_l|z_0}\left[f_{Y|X=x_{l},Z=z_{0}}(y)\right]$, therefore, $\psi_{x_l}(y)$ is equal to $\boldsymbol{P}_{x_l|z_0}\left[f_{Y|X=x_{l},Z=z_{0}}(y)\right]$ almost everywhere. Furthermore, the fraction of always $x_l$ takers is identified and coincides with $\Psi_{x_l}$ and $\boldsymbol{P}_{x_l|z_0}$. This particular case is illustrated in figure \ref{Figure:Sufficiency_b}.
	
	Figure \ref{Figure:Sufficiency_b} illustrates why, while seemingly simpler, my approach is equivalent to \cite{kitagawa2015test} in the binary monotone case. The integral of $\psi_{x_l}(y)$ over $\mathcal{Y}$ constrains the frequency of always takers, this statement is true even if monotonicity does not hold. The test of \cite{kitagawa2015test} takes the function $\boldsymbol{P}_{x_l|z_0}\left[f_{Y|X=x_{l},Z=z_{0}}(y)\right]$ and asks if, indeed, it coincides with $\psi_{x_l}$ almost everywhere. Instead, I first compute the integral of $\psi_{x_l,\mathcal{Z}}$ without invoking monotonicity. Then, I note that if violations of monotonicity occur with positive probability, $\psi_{x_l,\mathcal{Z}}$ will be strictly smaller than $\boldsymbol{P}_{x_l|z_0}\left[f_{Y|X=x_{l},Z=z_{0}}(y)\right]$ in some positive measure set. Thus, the two integrals are equal if and only if random assignment, exclusion, and monotonicity are not violated.
	
	For the non-monotone case, \cite{kwon2024testing} note that the population frequency of $x_l$-defiers (units that realize $X_i=x_l$ only when $Z_i=z_0$) is bounded below by the supremum over measurable sets $B_Y\in\mathcal{B}_Y$ of the difference $\mathbb{P}[Y_i\in B_Y,X_i=x_l|Z_i=z_0]-\mathbb{P}[Y_i\in B_Y,X_i=x_l|Z_i=z_0]$. Note that this supremum is attained in the set $\left\{y\in\mathcal{Y}: \boldsymbol{P}_{x_l|z_0}\left[f_{Y|X=x_{l},Z=z_{0}}(y)\right]\geq \boldsymbol{P}_{x_l|z_1}\left[f_{Y|X=x_{l},Z=z_{1}}(y)\right]\right\}$ and its value coincides with the quantity $\mathbb{P}[X_i=x_l,Z_i=z_0]-\Psi_{x_l}$. Since at $Z=z_0$ only $x_l$-defiers and $x_l$-always takers can realize $x_l$, the upper bound on always takers I derive implies and is implied by the lower bound on compliers from \cite{kwon2024testing}. The difference is that instead of taking a supremum over a potentially infinite collection of measurable sets, I obtain an equivalent bound by integrating the point-wise minimum of two sub-densities. 
	
	\section{Generalized FOSD and response type restrictions}\label{Subsection:OTandFOSD}	
	
	\subsection{Sharp observable implications of response type restrictions for binary instruments}
	
	In this section I specialize theorem \ref{Theorem:SharpBinary} to instrument-response-type restrictions, i.e. restrictions that rule out from the population specific instrument-response types. The most common example of a restriction in this class is the no-defiers assumption from \cite{angristImbens1994late}. The specialized result is a set of sharp inequalities that only depend on point identified parameters.
	
	The resulting inequalities can be interpreted as a generalization of FOSD. This interpretation allows me to link response type restrictions to non-parametric random utility models. Specifically, I show that any instrument-response type restriction can be derived from the assumption that, as the instrument changes from $z_0$ to $z_1$, it affects certain pairwise comparisons in one direction and therefore prevents specific preference reversals over treatments.
	
	The refined inequalities come from representing the instrument validity conditions as a flow problem. The observed conditional treatment probability $\mathbb{P}[X_i=x_l|Z_i=z_0]$ represents the share of the population that realizes treatment $x_l$ at instrument value $z_0$. In the absence of instrument-response type restrictions, these units could realize any other treatment at the instrument value $z_1$. But instrument-response type restrictions rule out certain potential treatments for units that realize treatment $x_l$ when $Z_i=z_k$. That is, flows from $x_k$ to certain treatment values $\tilde{X}_{x_k}\subset \mathcal{X}$ are not allowed.%
	\footnote{This restriction is reminiscent of the minimal monotonicity idea discussed by \cite{navjeevan2022ordered}, but it need not apply to every pair of treatment values, just to a subset.}%
	
	The refined observable implications come from taking the flow idea literally. Finding a valid distribution over response types is equivalent to finding a valid set of flows that, for every $x_l\in\mathcal{X}$, takes the share $\mathbb{P}[X_i=x_l|Z_i=z_0]$ of the population and distributes it across potential counterfactual treatments $X_i(z_1)$ in a way that matches the observed treatment probabilities conditional on $Z_i=z_1$ without recurring to the flows that the instrument-response type restriction forbids. 
	
	To formally define the transportation problem I require additional notation and definitions. I begin by formally defining response type restrictions.
	
	\begin{assumption}[Instrument-response type restriction]\label{Assumption:IVResponseTypeRestriction}\phantom{a}\\
		Let $R\subset\mathcal{T}_Z=\mathcal{X}\times \mathcal{X}$ be a fixed and known collection of instrument-response types. A distribution over response types $g_0\in\Delta(\mathcal{T})$ exists such that $p_{g_0}(t^Z)=0$ for all $t^Z\in R$.
	\end{assumption}
	Put in words, a response type restriction simply imposes that some instrument-response types occur with zero probability.
	
	Now define the sets $\mathcal{X}_k$ for $k=0,1$ as copies of $\mathcal{X}$ but with elements $x_l^k\in\mathcal{X}_k$ labeled with the superscript $k$ to make explicit that $x^0_l$ refers to treatment $x_l\in\mathcal{X}$ realized when $Z=0$ and $x^1_l$ to treatment $x_l\in\mathcal{X}$ realized when $Z=1$.
	
	Next, define the directed graph with vertex set $\mathcal{X}_0\cup \mathcal{X}_0$ and let the edges be  $\mathcal{E}_R\subset\mathcal{X}_{1}\times\mathcal{X}_{2}$, where the ordered pair $(x_l^0,x_{l'}^1)\in\mathcal{X}_0\times\mathcal{X}_1$ belongs to $\mathcal{E}_R$ if and only if the instrument-response type $t^Z=(x_l,x_{l'})$ does not belong to $R$. In order words, an edge $(x_l^0,x_{l'}^1)$ belongs to the graph if the response type that realizes $x_l$ at $Z=0$ and $x_{l'}$ at $Z=1$ is not ruled out by assumption \ref{Assumption:IVResponseTypeRestriction}. 
	
	For instance, in the case of the no defiers assumption, the set $R$ would consist of the singleton $\{(x_1,x_0)\}$. Therefore, the edges $(x_0^0,x_0^1), (x_0^0,x_1^1)$ and $(x_1^0,x_1^1)$ --corresponding to never takers, compliers, and always takers-- would belong to $\mathcal{E}_R$, in contrast, the edge $(x_1^0,x_0^1 )$ would not. This particular case is illustrated in figure \ref{Figure:GraphRep}.
	
	\FloatBarrier
	\begin{figure}[h!]
		\caption{Graph representation of the no-defiers instrument-response type restriction.}\label{Figure:GraphRep}
		\centering
		\includegraphics[width=0.45\textwidth]{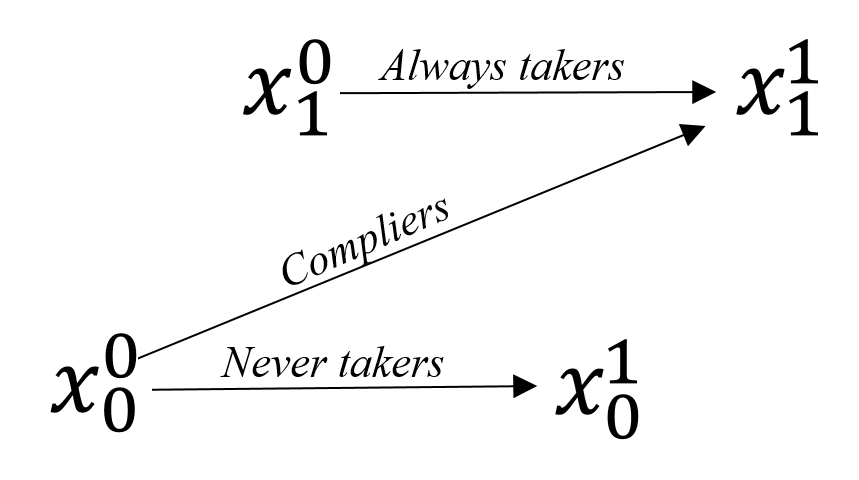}\\
		\medskip
		\justifying\footnotesize\textit{\textbf{Notes:} This figure depicts the graph representation of the no-defiers/monotonicity assumption from \cite{angristImbens1994late}. The edges with a super-index $k=0,1$ represent treatment values realized at instrument values $Z=z_k$. The edges with sub-index $l=0,1$ represent treatments $x_0$ and $x_1$ respectively. Since there is no edge from $x_1^0$ to $x_0^1$, defiers are ruled out from the model. }
	\end{figure}
	\normalsize
	\FloatBarrier

	The transportation problem is as follows: every node $x_l^0\in\mathcal{X}_0$ starts with a mass $\mathbb{P}[X_i=x_l|Z_i=z_0]$ of units. Each unit can travel from its origin node $x_l^0$ through the admissible edges $(x_l^0,x_{l'}^1)\in\mathcal{E}_R$ in order to reach some destination node $x_{l'}^1\in\mathcal{X}_1$ subject to the constraint that each destination node $x_{l'}$ cannot receive more than $\mathbb{P}[X_i=x_{l'}|Z_i=z_1]$ units. If it is possible to empty all origin nodes $x_l^0$ so that all destination nodes end at full capacity, then, the transportation problem is solved. 
	
	If the transportation problem is solved, a distribution over response types exists that is consistent with both the observed joint distribution of $(X,Y)$ and the  imposed response type restriction. In appendix \ref{Appendix:Proofs} I provide the formal definition of the transportation problem and prove the equivalence between the existence of a feasible solution and a valid distribution over response types. The central argument, however, is simple: There is a one to one mapping between response types in $R$ and edges in $\mathcal{E}_R$. From here, it is easy to see that the value of the flow that travels through edge $e$ according to a solution to the transportation problem corresponds to the probability that a valid distribution over instrument-response types assigns to the unique instrument-response type associated to $e$.
	
	Two well known combinatorial problems are equivalent to the transportation problem described: the maximum flow and minimum cut problems.%
	\footnote{The equivalence is also proven in appendix \ref{Appendix:ProofFOSD}. The entire reasoning would also apply to more general restrictions such as those considered in assumption \ref{Assumption:SelectionModel}. The only difference would be that more general restrictions cannot necessarily be stated as capacity constraints on the edges of the graph. This impossibility would complicate using the equivalence with minimum-cut problem to obtain observable inequalities that do not depend on unobservables.} The minimum cut formulation is particularly useful to derive the sharp observable implications of incomplete models in terms of inequalities that only depend on observable quantities as first noted by \cite{ekeland2010optimal} and \cite{galichon2011set} due to its equivalence to plausibility constraints in the context of random set theory (\cite{beresteanu2012partial}). 
	
	Before stating the observable implications, I make explicit the one to one correspondence between response type restrictions and binary relations that exists in the binary instrument case, and introduce a generalization of the lower contour to arbitrary binary relations.
	
	\begin{definition}[Binary relation associated to $R\subset\mathcal{T}_Z$]\phantom{a}\\
		Let $R\subset\mathcal{T}_Z$ be a collection of response types. Define the binary relation $\geq^R\subset\mathcal{X}\times\mathcal{X}$ as follows:
		\begin{align}
			x_l\geq^R x_{l'}\Longleftrightarrow (x_l,x_{l'})\in R
		\end{align}
	\end{definition}
	That is, $x_l$ is related to $x_{l'}$ according to $\geq^{R}$ if switches from $x_l$ to $x_{l'}$ when the instrument goes from $z_0$ to $z_1$ are ruled out by the instrument-response type restriction associated to the set $R$.
	
	\begin{definition}[Common lower contour of $\geq^R$]\phantom{a}\\
		Let $\geq^R$ be a binary relation over $\mathcal{X}$. For any $S\subset\mathcal{X}$, the common lower contour of $S$ according to $\geq^R$ is:
		\begin{align}
			L_{\geq^R}(S)=\{x_k\in\mathcal{X}:s \geq^R x_{k},\forall s\in S\}
		\end{align}
	\end{definition}
	In the context of instrument-response types, $L_{\geq^R}(S)$ is the (potentially empty) subset of elements of $\mathcal{X}$ such that no flows from an element in $S$ to any element in $L_{\geq^R}(S)$ are allowed. Thus, its complement $L_{\geq^R}(S)^C$ can be interpreted as the set of treatments that units that start with treatment values in $S$ are weakly incentivized to adopt as the instrument changes from $z_0$ to $z_1$. Note that for total orders (or their linear extensions), the set $L_{\geq^R}(S)$ coincides with the standard definition of the upper contour of $S$ and the set  $L_{\geq^R}(S)^C$ with the upper contour.
	
	Now I state the sharp observable implications of response type restrictions for binary instruments: 
	
	\begin{theorem}\label{Theorem:FOSD}\phantom{a}\\
		A distribution over instrument-response types $p\in\Delta(\mathcal{T}_Z)$ that satisfies assumption \ref{Assumption:Validity} (instrument validity), assumption \ref{Assumption:IVResponseTypeRestriction} (instrument-response type restriction), and definition \ref{Definition:Consistency} (consistency) exists, if and only if: 
		\begin{enumerate}
			\item The following inequality holds for every $S\subset\mathcal{X}$.
			\begin{align}
				\mathbb{P}[X_i\in S |Z_i=z_0]\leq \mathbb{P}[X_i\in L_{\geq^{R}}(S)^C|Z_i=z_1]\label{Equation:FOSD}
			\end{align}
			\item For any $S\subset\mathcal{X}$ and any partition $(\Lambda_S,\Lambda'_S)$ of $L_{\geq^R}(S)^C$ satisfying (i) $\emptyset\neq\Lambda'_S\subset S$, (ii) $L_{\geq^R}(x_{l'})^C\setminus\{x_{l'}\}\subset \Lambda_S$ for every $x_{l'}\in \Lambda_S'$, and (iii) $L_{\geq^R}(S\setminus \Lambda_S')^C\subset \Lambda_S$:
			\begin{align}
				\mathbb{P}[X_i\in S |Z=z_0]\leq \mathbb{P}[X_i\in \Lambda_S|Z=z_1]+\sum\limits_{x_{l'}\in \Lambda'_S}  \Psi_{x_{l'}} \label{Equation:GeneralizedFOSD}
			\end{align}
			Where $\Psi_{x_l}$ is defined as in the previous section. 	
		\end{enumerate}
	\end{theorem}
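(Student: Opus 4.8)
The plan is to reduce, via Theorem~\ref{Theorem:SharpBinary}, to a feasibility question about a finite linear system, to recognize that system as a transportation problem, and then to read the explicit inequalities \eqref{Equation:FOSD}--\eqref{Equation:GeneralizedFOSD} off the max-flow/min-cut theorem. First I would specialize Theorem~\ref{Theorem:SharpBinary}: Assumption~\ref{Assumption:IVResponseTypeRestriction} is the instance of Assumption~\ref{Assumption:SelectionModel} in which $A_Z\boldsymbol{p}\leq\boldsymbol{r}_Z$ encodes the equalities $p(t^Z)=0$ for $t^Z\in R$, so by Theorem~\ref{Theorem:SharpBinary} statement~(1) is equivalent to the existence of $p\in\Delta(\mathcal{T}_Z)$ with $p(t^Z)=0$ for all $t^Z\in R$, with $\sum_{t^Z:\,t^Z(z_k)=x_l}p(t^Z)=\mathbb{P}[X_i=x_l|Z_i=z_k]$ for all $k,l$, and with $p\big((x_l,x_l)\big)\leq\Psi_{x_l}$ for all $l$ (in the binary case the $l$-th row of $A_Y\boldsymbol{p}\leq\boldsymbol{\Psi}$ is exactly the bound on the mass of the unique always-$x_l$-taker type).

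Next I would set this up as a flow problem on the graph $(\mathcal{X}_0,\mathcal{X}_1,\mathcal{E}_R)$ augmented with a source $s$ and a sink $t$: arcs $s\to x_l^0$ of capacity $\mathbb{P}[X_i=x_l|Z_i=z_0]$, arcs $x_{l'}^1\to t$ of capacity $\mathbb{P}[X_i=x_{l'}|Z_i=z_1]$, and for each $(x_l^0,x_{l'}^1)\in\mathcal{E}_R$ an arc of capacity $+\infty$ when $l\neq l'$ and of capacity $\Psi_{x_l}$ when $l=l'$. Identifying the flow along $(x_l^0,x_{l'}^1)$ with $p\big((x_l,x_{l'})\big)$ and putting $p=0$ on the missing arcs, flow conservation at the $\mathcal{X}_0$- and $\mathcal{X}_1$-nodes is exactly the consistency system, the diagonal-arc capacities are exactly $A_Y\boldsymbol{p}\leq\boldsymbol{\Psi}$, and since both capacity vectors sum to $1$, a flow of value $1$ is precisely a $p\in\Delta(\mathcal{T}_Z)$ of the required kind. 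This bijection and the equivalence of the transportation problem with max-flow/min-cut are already established in Appendix~\ref{Appendix:Proofs}, so by the max-flow/min-cut theorem a valid $p$ exists if and only if every $s$--$t$ cut has capacity at least $1$.

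Finally I would translate the min-cut condition into \eqref{Equation:FOSD}--\eqref{Equation:GeneralizedFOSD}. A cut is given by the $\mathcal{X}_0$-nodes $A$ and $\mathcal{X}_1$-nodes $B$ on the source side; it is finite only if $B$ contains every off-diagonal $\mathcal{E}_R$-neighbor of $A$, and since enlarging $B$ past $N(A)$, the set of $\mathcal{E}_R$-neighbors of $A$, never lowers the capacity, it suffices to consider $B\subseteq N(A)$. Setting $S:=A$, $\Lambda_S:=B$, $\Lambda'_S:=N(S)\setminus B$, one checks directly from the definitions that $N(S)=L_{\geq^R}(S)^C$, that any node of $N(S)$ that is not already an off-diagonal neighbor of $S$ must be some $x_l\in S$ with $(x_l,x_l)\notin R$, so that $\Lambda'_S\subseteq S$, and that the condition that $B$ absorb every off-diagonal neighbor of $A$ unwinds, after writing $A=\Lambda'_S\cup(S\setminus\Lambda'_S)$, into exactly conditions~(ii) and~(iii). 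For such a cut the only diagonal arcs severed are $\{(x_{l'}^0,x_{l'}^1):x_{l'}\in\Lambda'_S\}$, so its capacity is $\mathbb{P}[X_i\notin S|Z_i=z_0]+\mathbb{P}[X_i\in\Lambda_S|Z_i=z_1]+\sum_{x_{l'}\in\Lambda'_S}\Psi_{x_{l'}}$; requiring this to be at least $1=\mathbb{P}[X_i\in\mathcal{X}|Z_i=z_0]$ and rearranging gives \eqref{Equation:GeneralizedFOSD} when $\Lambda'_S\neq\emptyset$ and, in the degenerate case $\Lambda'_S=\emptyset$ (hence $\Lambda_S=L_{\geq^R}(S)^C$), gives \eqref{Equation:FOSD}. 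Conversely, each admissible triple $(S,\Lambda_S,\Lambda'_S)$ yields back a finite cut of exactly this capacity, so min-cut $\geq1$ is equivalent to the whole family of inequalities in~(1) and~(2), which closes the chain of equivalences.

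The hard part will be this last step: checking that conditions~(i)--(iii) describe \emph{precisely} the finite cuts with $B\subseteq N(A)$ — in particular that they force $B$ to contain the off-diagonal neighbors of both the severed nodes $\Lambda'_S$ and the retained nodes $S\setminus\Lambda'_S$ — and that no finite cut outside this parametrization can be strictly smaller than all those inside it. Handling the $+\infty$ capacities (equivalently, restricting from the outset to cuts that sever no off-diagonal arc) also requires some care, although the measure-theoretic content has already been absorbed into the finite linear system by Theorem~\ref{Theorem:SharpBinary}.
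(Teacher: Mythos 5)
Your proposal is correct and follows essentially the same route as the paper: reduce to the finite system via Theorem~\ref{Theorem:SharpBinary}, encode it as the transportation/max-flow problem of Appendix~\ref{Appendix:ProofFOSD} with the $\Psi_{x_l}$ bounds as capacities on the diagonal arcs, and read the inequalities off the min-cut characterization, with the triple $(S,\Lambda_S,\Lambda'_S)$ parametrizing exactly the finite cuts. The only (cosmetic) differences are that you give off-diagonal internal arcs capacity $+\infty$ where the paper uses capacity $1$, and that you obtain necessity of the inequalities from the cut correspondence as well, whereas the paper proves that direction by a direct decomposition of $\mathbb{P}[X_i\in S\mid Z_i=z_0]$ over response types.
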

	
	The proof of theorem \ref{Theorem:FOSD} is given in appendix \ref{Appendix:ProofFOSD}. One interpretation of the inequalities from part one is that mutually exclusive events cannot occur in the population with probability greater than one, this condition is implicitly verified by the minimum cut problem. A generalization of a logically equivalent idea beyond the binary instrument case was recently developed by \cite{kaido2025testing}.
	
	Another way to interpret the inequalities from part one  of theorem \ref{Theorem:FOSD} is generalized FOSD. The response type restriction associated to $R$ implicitly assumes that, if $x_l\geq^R x_{l'}$, then treatment $x_l$ is incentivized by the instrument relative to $x_{l'}$. Thus, the relative likelihood of observing $x_{l}$ relative to $x_{l'}$ must increase. As such, this set of inequalities can be independently used as a sharp test for the validity of response type restrictions or, as formalized in the next sub-section, to test non parametric hypotheses about how an exogenous variable affects the preferences of a population of decision makers. 
	
	The second part of theorem \ref{Theorem:FOSD} notes that the exclusion restriction and the observed conditional distribution of the outcome bind the prevalence of always-$x_l$ takers, and, in some cases, these constraints on always takers, \textit{tighten} the FOSD inequalities by further limiting the maximum of units that could potentially realize treatments in certain upper contours.
	
	The fact that the exclusion restriction \textit{tightens} the generalized FOSD inequalities by replacing some conditional treatment probabilities $\mathbb{P}[X_i=x_{l'}|Z=z_1]$ with the weakly smaller quantities $\Psi_{x_{l'}}$ allows to distinguish violations from the exclusion restriction from violations of instrument-response restrictions. In particular, one of the following three cases must occur:
	\begin{enumerate}
		\item All inequalities are satisfied. Hence, neither the exclusion restriction nor the response-type restriction is falsified.
		\item Only inequalities from part 2 are violated. Hence, the response type restriction is not falsified on its own, but only when combined with the exclusion restriction.  
		\item At least one inequality from part 1 is violated. Hence, the response type restriction is falsified
	\end{enumerate}
	Additionally, every inequality is linked to exactly one set $S$. Thus, if the inequalities associated with $S$ are not satisfied, it must be that the assumed restriction on how the instrument affects the potential treatments (or preferences over treatments) of units that realize treatments in $S$ when $Z_i=z_0$ is falsified, while restrictions associated to other treatment switches might not.
	
	One final advantage of theorem \ref{Theorem:FOSD} relative to theorem \ref{Theorem:SharpBinary} is that it yields a linear system of inequalities with no nuisance parameters. 
	
	To illustrate its usefulness, I use theorem \ref{Theorem:FOSD} to characterize the binary monotone instrument IV model from \cite{angrist1995two}.
	
	\begin{corollary}[\textbf{Sharp observable implications of binary monotone instruments}]\label{Corollary:BinaryBinary}
		
		Suppose that $\mathcal{X}=\{x_0,x_1,...,x_{L}\}$, with $x_{l}<x_{l'}$ for every $l<l'$. A distribution over response types exists that satisfies assumptions \ref{Assumption:Validity} and:
		\begin{align}
			X_i(z_0)\leq X_i(z_1) \phantom{aaa} a.s.\label{Equation:Monotonicity}
		\end{align}
		If and only if:
		\begin{align}
			\mathbb{P}[X\geq x_l|Z=z_0]\leq \Psi_{x_l} + \mathbb{P}[X> x_{l+1}|Z=z_1] \phantom{aaa}\forall x_l\in\mathcal{X} \label{Equation:ExclusionFOSD}
		\end{align}
		
	\end{corollary}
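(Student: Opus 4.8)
By Theorem \ref{Theorem:FOSD} it suffices to translate the monotonicity restriction \eqref{Equation:Monotonicity} into the language of that theorem and then simplify the resulting system. Since $\mathcal{X}$ is totally ordered, the almost-sure restriction $X_i(z_0)\le X_i(z_1)$ is exactly Assumption \ref{Assumption:IVResponseTypeRestriction} with $R=\{(x_l,x_{l'})\in\mathcal{T}_Z: l>l'\}$, so the associated binary relation $\ge^R$ is the strict order $>$ on $\mathcal{X}$. For any $S\subset\mathcal{X}$ this gives $L_{\ge^R}(S)=\{x_k\in\mathcal{X}: x_k<\min S\}$ and hence $L_{\ge^R}(S)^C=\{x_k\in\mathcal{X}: x_k\ge\min S\}$, a set that depends on $S$ only through $\min S$.

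Plugging this into part 1 of Theorem \ref{Theorem:FOSD} yields $\mathbb{P}[X\in S\mid Z=z_0]\le\mathbb{P}[X\ge\min S\mid Z=z_1]$ for every $S$. Because the left-hand side is bounded by $\mathbb{P}[X\ge\min S\mid Z=z_0]$, this whole family is equivalent to the $L+1$ stochastic-dominance inequalities $\mathbb{P}[X\ge x_l\mid Z=z_0]\le\mathbb{P}[X\ge x_l\mid Z=z_1]$, $l=0,\dots,L$, the tightest ones being obtained at the upper sets $S=\{x_k: x_k\ge x_l\}$.

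The substantive work is the second part of Theorem \ref{Theorem:FOSD}, which requires enumerating the admissible partitions $(\Lambda_S,\Lambda_S')$ of $L_{\ge^R}(S)^C$. Fix $S$ and set $x_l=\min S$, so $L_{\ge^R}(S)^C=\{x_k: x_k\ge x_l\}$. For $x_{l'}\in\Lambda_S'$ one has $L_{\ge^R}(x_{l'})^C\setminus\{x_{l'}\}=\{x_k: x_k>x_{l'}\}$, which condition (ii) forces into $\Lambda_S$ and hence out of $\Lambda_S'$; since any two distinct elements of $\mathcal{X}$ are comparable, this is impossible unless $\Lambda_S'$ is a singleton $\{x_{l'}\}$. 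Condition (iii) then reads $L_{\ge^R}(S\setminus\{x_{l'}\})^C\subset L_{\ge^R}(S)^C\setminus\{x_{l'}\}$; if $x_{l'}\ne x_l$ the left side still equals $\{x_k: x_k\ge x_l\}$ and contains $x_{l'}$, a contradiction, so $x_{l'}=x_l=\min S$ and $\Lambda_S=\{x_k: x_k>x_l\}$. A direct check shows this single choice satisfies (i)--(iii) for every $S$, so each $S$ contributes exactly one inequality, namely $\mathbb{P}[X\in S\mid Z=z_0]\le\mathbb{P}[X>x_l\mid Z=z_1]+\Psi_{x_l}$. Optimizing over $S$ with $\min S=x_l$, i.e.\ taking $S=\{x_k: x_k\ge x_l\}$, produces exactly the claimed inequality $\mathbb{P}[X\ge x_l\mid Z=z_0]\le\Psi_{x_l}+\mathbb{P}[X>x_l\mid Z=z_1]$ for $l=0,\dots,L$ (with $\mathbb{P}[X>x_l\mid Z=z_1]=\mathbb{P}[X\ge x_{l+1}\mid Z=z_1]$).

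It remains to note that the part-1 family is redundant: integrating the pointwise bound $\psi_{x_l}(y)\le\boldsymbol{P}_{x_l\mid z_1}\big[f_{Y\mid X=x_l,Z=z_1}(y)\big]$ over $\mathcal{Y}$ gives $\Psi_{x_l}\le\mathbb{P}[X=x_l\mid Z=z_1]$, so $\Psi_{x_l}+\mathbb{P}[X>x_l\mid Z=z_1]\le\mathbb{P}[X\ge x_l\mid Z=z_1]$ and \eqref{Equation:ExclusionFOSD} implies $\mathbb{P}[X\ge x_l\mid Z=z_0]\le\mathbb{P}[X\ge x_l\mid Z=z_1]$. Hence both families in Theorem \ref{Theorem:FOSD} collapse to \eqref{Equation:ExclusionFOSD}, proving the corollary. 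The main obstacle I anticipate is the partition enumeration in the third step --- in particular showing that $\Lambda_S'=\{\min S\}$ is simultaneously forced and always admissible; the order-theoretic computation of $L_{\ge^R}$ and the one-line bound $\Psi_{x_l}\le\mathbb{P}[X=x_l\mid Z=z_1]$ are routine.
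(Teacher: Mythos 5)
Your proposal is correct and follows essentially the same route the paper intends: the paper offers no formal proof of this corollary beyond the remark that in the totally ordered case $L_{\geq^R}(S)^C$ is the standard upper contour and the only admissible partition replaces $\mathbb{P}[X=x_l\mid Z=z_1]$ by $\Psi_{x_l}$, and your careful enumeration of the partitions $(\Lambda_S,\Lambda_S')$ and the observation that $\Psi_{x_l}\leq\mathbb{P}[X=x_l\mid Z=z_1]$ makes the part-1 inequalities redundant fills in exactly that sketch. You are also right that the derivation yields $\mathbb{P}[X> x_l\mid Z=z_1]=\mathbb{P}[X\geq x_{l+1}\mid Z=z_1]$, so the ``$X>x_{l+1}$'' in equation \eqref{Equation:ExclusionFOSD} appears to be an off-by-one typo for ``$X\geq x_{l+1}$''.
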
 
	
	Note that in the ordered monotone case, the set $L_{\geq}(x_l)^{C}$ corresponds to the standard upper contour of $x_l$ (i.e. $\{x_{l},x_{l+1},...,x_L\}$), thus, the sharp inequalities reduce to FOSD but \textit{tightened} by the exclusion restriction in the sense that the probability $\mathbb{P}[X= x_{l}|Z=z_1]$ associated to the minimal element of every $L_{\geq}(x_l)^{C}$ (i.e. $x_l$) is replaced by the weakly smaller quantity $\Psi_{x_l}$.%
	\footnote{As a consequence of this result, it is easy to see that the conditions from \cite{sun2023instrument} do not incorporate the effect that the exclusion restriction has on upper contours associated to non minimal/maximal elements. This illustrates why  the observable implications from \cite{sun2023instrument} are not sharp for monotone instruments, and why they are nested by the ones proposed in this paper.}

	\subsection{A submonotone interpretation for response type restrictions}\label{Subsection:Submonotonicity}
	
	If treatment is assumed to be chosen by rational decision makers, the connection between binary relations, response type restrictions, and first order stochastic dominance extends beyond realized treatments to utility maximization. This relation does not depend on functional or distributional assumptions --such as additive separability. Instead, it can be directly expressed in terms of latent potential preferences. 
	
	\begin{assumption}[$\geq^{*}$-Submonotonicity]\label{Assumption:SM}
		Let $\geq^{*}$ be an irreflexive%
		\footnote{The extension to reflexive binary relations is conceptually simple. The reasoning remains the same but requires special treatment of response type restrictions that rule out ``always takers'' of one particular treatment. In the interest of brevity this case is addressed in appendix \ref{Appendix:ProofSubmonotonicity}.} %
		binary relation over $\mathcal{X}$. Suppose that:
		\begin{enumerate}
			\item Units are endowed with strict preference profiles over $\mathcal{X}$ that depend on $Z_i$ denoted $\succ_i^{(z_k)}$ for $k=0,1$ where the function $\succ_i^{(z_k)}$ maps $\mathcal{Z}$ to strict preference profiles over $\mathcal{X}$. In other words, potential preferences.
			\item Units are rational in the sense that $X_i(z_k)=x_l$ implies $x_l \succ_i^{(z)} x_{l'}$ almost surely  for all $l,l'\in\{0,1,...,L\}$ such that $l\neq l'$.
			\item There are no preference reversals against $\geq^*$ when the instrument changes from $z_0$ to $z_1$. That is, if $x_l\phantom{a}\geq^*x_{l'}$ and $x_l\succ_i^{(z_0)} x_{l'}$, then, $x_l\succ_i^{(z_1)} x_{l'}$ almost surely.
		\end{enumerate}
		
	\end{assumption}
	
	The first two points of assumption \ref{Assumption:SM} simply posit that response types arise from latent utility maximization. The third point captures the the notion that if $x_l\geq^*x_{l'}$, then alternative $x_l$ is being weakly incentivized or promoted (not necesarilly relative to all other alternatives but relative to $x_{l'}$) by the instrument.
	
	I call assumption \ref{Assumption:SM} submonotonicity because it encodes a notion weaker than monotonicity; it rules out two-way flows between certain pairs of treatments, but not necessarily between all. In this sense, any submonotonicity assumption can always be extended to a full monotonicity condition with respect to a total order (or the linear extension of a total order) by incorporating additional restrictions. For instance, both ordered and unordered monotonicity correspond to special cases of submonotonicity where the binary relation is transitive. The notion of preventing ``two-way flows'' from \cite{navjeevan2022ordered} corresponds to assuming that the binary relation in question is asymmetric. 
	
	In Appendix \ref{Appendix:Submonotonicity} I explore in depth the interpretation of different assumptions within the submonotonicity class and provide multiple motivating examples. These examples include extensions of unordered monotonicity (\cite{heckman2018unordered}); assumptions that draw from both ordered and unordered monotonicity (\cite{rose2024recoding}); specific cases of targeting and encouragement designs as well as some possible relaxations and generalizations (\cite{lee2020treatment}, \cite{bai2022testing}); and designs in which the instrument encourages variety and experimentation, or its effect depends on \textit{status-quo} ($Z=z_0$) choices.
	
	\begin{theorem}\label{Theorem:Submonotonicity}
		Let $\geq^{*}$ be an irreflexive binary relation over $\mathcal{X}$. Then:
		\begin{enumerate}
			\item If $\geq^{*}$-Submonotonicity holds, then, $p_{g_0}(t^Z)=0$ for every $t^Z\not\in R$. 
			\item Suppose that $R\subset\mathcal{T}$ is such that $\geq^{R}=\geq^*$. Then, $R$ is the smallest cardinality subset of $\mathcal{T}_Z$ such that if $p_{g_0}(t)=0$ for all $t\not\in R$, then, $\geq^{*}$-Submonotonicity cannot be falsified.
		\end{enumerate}
	\end{theorem}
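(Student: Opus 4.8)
The plan is to prove the two parts separately: Part~1 is a one-unit argument by contradiction, and Part~2 splits into a rationalizability (sufficiency) claim and a minimality claim. For Part~1, fix $t^{*}=(x_a,x_b)\in R$; since $\geq^{R}=\geq^{*}$ is irreflexive, $a\neq b$ and $x_a\geq^{*}x_b$. Assume $\geq^{*}$-submonotonicity holds and, toward a contradiction, that $p_{g_0}(t^{*})>0$, so a positive-mass set of units has $X_i(z_0)=x_a$ and $X_i(z_1)=x_b$. Rationality (the second condition of Assumption~\ref{Assumption:SM}) at $z_0$ gives $x_a\succ_i^{(z_0)}x_b$; since $x_a\geq^{*}x_b$, the no-reversal condition (the third condition of Assumption~\ref{Assumption:SM}) gives $x_a\succ_i^{(z_1)}x_b$ almost surely on that set, while rationality at $z_1$ gives $x_b\succ_i^{(z_1)}x_a$, contradicting asymmetry of the strict preference $\succ_i^{(z_1)}$. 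Hence $p_{g_0}$ vanishes on all of $R$, i.e.\ Assumption~\ref{Assumption:IVResponseTypeRestriction} holds with this $R$.

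For the sufficiency direction of Part~2, take a distribution over response types $g_0$ that is consistent with the observed distribution and satisfies $p_{g_0}(t)=0$ for every $t\in R$; I would equip each unit with potential preference profiles, chosen as a deterministic function of its response type, so that the \emph{same} $g_0$ satisfies Assumption~\ref{Assumption:SM} (so any background instrument-validity requirement is inherited unchanged). Fix an arbitrary linear order on $\mathcal{X}$. To a unit of instrument-response type $(x_a,x_b)$ with $a\neq b$ --- which, lying outside $R$, satisfies $\neg(x_a\geq^{*}x_b)$ --- assign $\succ^{(z_0)}$ ranking $x_a$ first, $x_b$ second, then the remaining treatments in the fixed order, and let $\succ^{(z_1)}$ be obtained from it by swapping $x_a$ and $x_b$; to an always-$x_a$-taker assign that profile at both instrument values. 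Rationality holds because the realized treatment is top-ranked under the corresponding profile. The key observation is that $\succ^{(z_0)}$ and $\succ^{(z_1)}$ disagree only on the pair $\{x_a,x_b\}$, so any reversal against $\geq^{*}$ could occur only for that pair; such a reversal would require $x_a\geq^{*}x_b$, which is excluded because $(x_a,x_b)\notin R$. Hence the augmented model satisfies $\geq^{*}$-submonotonicity while still generating the observed distribution, so $\geq^{*}$-submonotonicity is not falsified.

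For minimality, let $R'\subset\mathcal{T}_Z$ with $R\not\subseteq R'$ and pick $t^{*}=(x_a,x_b)\in R\setminus R'$ (necessarily $a\neq b$). Consider the observed distribution generated by any $g_0$ with $p_{g_0}(t^{*})=1$ together with $Z$ distributed nondegenerately on $\{z_0,z_1\}$. This distribution satisfies ``$p=0$ on $R'$'', witnessed by $g_0$ itself since $t^{*}\notin R'$, yet $\mathbb{P}[X=x_a\mid Z=z_0]=1$ and $\mathbb{P}[X=x_b\mid Z=z_1]=1$ force every consistent $g_0'$ to have $p_{g_0'}(t^{*})=1$; by Part~1, $\geq^{*}$-submonotonicity would then force $p_{g_0'}(t^{*})=0$, a contradiction, so submonotonicity is falsified. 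Thus the property in the statement fails for every $R'$ with $R\not\subseteq R'$; combined with sufficiency, it holds for $R''$ exactly when $R\subseteq R''$, so $R$ is the $\subseteq$-minimum (in particular, smallest-cardinality) such subset of $\mathcal{T}_Z$.

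The step I expect to be the main obstacle is the sufficiency construction: one must produce, simultaneously for all response types occurring with positive probability, preference profiles that both rationalize the realized treatment at each instrument value and avoid \emph{every} forbidden reversal. The ``swap only $x_a$ and $x_b$ and freeze all other comparisons'' device is what makes the second requirement automatic, because $\neg(x_a\geq^{*}x_b)$ is precisely the content of $(x_a,x_b)\notin R$. A secondary subtlety is that ``cannot be falsified'' is a statement about the full set of data-consistent response-type distributions, so the counterexample in the minimality argument must be engineered so that this set is a singleton that is forced to use the deleted type.
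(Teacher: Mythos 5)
Your proof is correct and follows essentially the same route as the paper's, which establishes the theorem via parts 5 and 6 of Lemma \ref{Lemma:SubMonotonicity} in Appendix \ref{Appendix:ProofSubmonotonicity}: Part 1 is the same rationality-plus-no-reversal contradiction, and your sufficiency construction for Part 2 --- rank the two realized treatments first and second in a fixed background order and swap only that pair across instrument values --- is exactly the paper's latent type $t^{\succ}_{x_l,x_{l'},*}$. One point of bookkeeping deserves flagging: you prove $p_{g_0}(t^Z)=0$ for $t^Z\in R$, whereas the theorem literally asserts this for $t^Z\notin R$. This is not a gap in your argument but an inconsistency internal to the paper: the formal definition sets $x_l\geq^R x_{l'}\Longleftrightarrow (x_l,x_{l'})\in R$ with $R$ the set of \emph{ruled-out} types (matching assumption \ref{Assumption:IVResponseTypeRestriction}), while the theorem statement and the appendix lemma treat $R$ as the set of \emph{allowed} types. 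You resolved the ambiguity in favor of the stated definition, which is also the reading under which the minimality claim in Part 2 is non-vacuous (under the ``allowed set'' reading, any subset of $R$ would satisfy the property, since shrinking the allowed set only strengthens the hypothesis). Finally, your minimality argument --- a data distribution concentrated on a single type in $R\setminus R'$, which pins down the entire response-type distribution and hence forces a falsification by Part 1 --- is more explicit than the paper's, which dismisses minimality in one sentence; it is a welcome addition rather than a deviation.
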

	
	The assumption that $\geq^R$ is irreflexive is not necessary, but it substantially simplifies exposition. The proof of theorem \ref{Theorem:Submonotonicity} is straightforward and can be found in appendix \ref{Appendix:ProofSubmonotonicity} where the general case is covered.
	
	Theorem \ref{Theorem:Submonotonicity} shows that $\geq^R$-submonotonicity is sufficient to guarantee that instrument-response types outside of $R$ occur with zero probability. Moreover, $R$ is the minimal (in the sense of set inclusion) restriction over instrument-response types, such that $\geq^R$-submonotonicity is not falsified. Thus, the set of inequalities from  \ref{Equation:FOSD} comprises the sharp observable implications of $\geq^{R}$-submonotonicity.
	
	\section{The multiple instruments case}\label{Section:ManyIV}
	
	\subsection{Sufficient takers: A finite set of necessary conditions}
	
	The bounds on always takers from the previous sections can be naturally extended to the multiple instruments case. In this more general setting, the exclusion restriction imposes upper bounds on the prevalence of multiple groups of instrument-response types in addition to always takers. The following definition formalizes such instrument-response type groups.
	
	\begin{definition}[Sufficient-$x_l$-takers]
		For any treatment value $x_l\in\mathcal{X}$ and any non-empty subset of instrument values $\tilde{Z}\subset\{z_0,z_1,...,z_{K_{Z}-1}\}$ define the set of response types that realize $x_l$ whenever $Z$ takes any value in $\tilde{Z}$ as:
		\begin{align}
			\tilde{S}(x_l,\tilde{Z}):=\{t^Z:(t^Z)_k=x_l \phantom{a} \Longrightarrow z_k\in\tilde{Z}, \}
		\end{align} 
	\end{definition}
	
	For any subset $\tilde{Z}\subset\mathcal{Z}$ with $|\tilde{Z}|\geq 2$, define the function $\psi_{x_l\tilde{Z}}(y)$ and the constant $\Psi_{x_l,\tilde{Z}}$ as follows:
	\begin{align}
		\psi_{x_l,\tilde{Z}}(y):=&\underset{\tilde{k}\in\tilde{Z}}{min}\big\{\boldsymbol{P}_{x_l|z_k}\left[f_{Y|X=x_{l},Z=z_{k}}(y)\right]\big\}\\
		\Psi_{x_l,\tilde{Z}}:=&\int\limits_{\mathcal{Y}}\psi_{x_l,\tilde{Z}}(y)d\mu_Y
	\end{align}
	As in the binary instrument case, the function \( \psi_{x_l, \tilde{Z}}(y) \) represents the point-wise minimum of the sub-distributions \( \phi_{Y \mid x_l, z_{\tilde{k}}} \) over all \( z_k \in \tilde{Z} \). The constant \( \Psi_{x_l, \tilde{Z}} \) measures the area that lies below all these subdensities (the overlap). 
	
	\begin{proposition}\label{Proposition:UpperBoundOnFT}Suppose assumption \ref{Assumption:Validity} holds, then:
		\begin{align}
			\mathbb{P}[Y_i\in B_Y,X_i(z_k)=x_l\phantom{a}\forall z_k\in\tilde{Z}] \leq \int\limits_{B_Y}\psi_{x_l,\tilde{Z}}(y)d\mu_Y \phantom{aaa} \forall B_Y\in\mathcal{B}_Y
		\end{align}
	\end{proposition}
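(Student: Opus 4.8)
The plan is to establish the inequality in two stages: first derive the ``instrument-wise'' bound for each fixed $z_k\in\tilde Z$, then use a pointwise-minimum argument to sharpen the collection of such bounds (one for each $z_k$) into the single integrated-pointwise-minimum bound.

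First I would fix $z_k\in\tilde Z$ and note that the event $\{X_i(z_{k'})=x_l\ \forall z_{k'}\in\tilde Z\}$ is contained in the event $\{X_i(z_k)=x_l\}$. Hence, for any $B_Y\in\mathcal B_Y$,
\begin{align}
\mathbb{P}[Y_i\in B_Y,\ X_i(z_{k'})=x_l\ \forall z_{k'}\in\tilde Z]\leq \mathbb{P}[Y_i(x_l)\in B_Y,\ X_i(z_k)=x_l],\nonumber
\end{align}
where I have used the exclusion restriction (Assumption \ref{Assumption:Validity}, part 2) to replace $Y_i$ by $Y_i(x_l)$ on the event $\{X_i(z_k)=x_l\}$ — this is the crucial place where exclusion enters, since it guarantees that the outcomes of these sufficient-$x_l$-takers are the same potential outcome $Y_i(x_l)$ regardless of which value $Z_i$ realizes. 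Then, by instrument independence (Assumption \ref{Assumption:Validity}, part 1), $\mathbb{P}[Y_i(x_l)\in B_Y,\ X_i(z_k)=x_l]=\mathbb{P}[Y_i\in B_Y,\ X_i=x_l\mid Z_i=z_k]$, which equals $\int_{B_Y}\boldsymbol{P}_{x_l\mid z_k}\,f_{Y\mid X=x_l,Z=z_k}(y)\,d\mu_Y=\int_{B_Y}\phi_{Y\mid x_l,z_k}(y)\,d\mu_Y$. This holds for every $z_k\in\tilde Z$.

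Second, I would combine these $|\tilde Z|$ inequalities. The left-hand side does not depend on $k$, so it is bounded by the minimum over $z_k\in\tilde Z$ of $\int_{B_Y}\phi_{Y\mid x_l,z_k}\,d\mu_Y$. The key step is to upgrade $\min_k\int_{B_Y}\phi_{Y\mid x_l,z_k}$ to $\int_{B_Y}\min_k\phi_{Y\mid x_l,z_k}=\int_{B_Y}\psi_{x_l,\tilde Z}(y)\,d\mu_Y$. This does \emph{not} follow from the trivial bound $\int\min\leq\min\int$ (that inequality goes the wrong way). Instead I would exploit that the bound holds for \emph{every} measurable $B_Y$: partition $B_Y$ into the finitely many measurable pieces $B_Y\cap A_{k}$, where $A_k:=\{y:\ \phi_{Y\mid x_l,z_k}(y)=\min_{k'}\phi_{Y\mid x_l,z_{k'}}(y)\}$ (breaking ties by a fixed ordering so the $A_k$ are disjoint and cover $\mathcal Y$). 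Applying the already-established inequality on each piece $B_Y\cap A_k$ with the particular index $z_k$ gives $\mathbb{P}[Y_i\in B_Y\cap A_k,\ X_i(z_{k'})=x_l\ \forall z_{k'}\in\tilde Z]\leq \int_{B_Y\cap A_k}\phi_{Y\mid x_l,z_k}\,d\mu_Y=\int_{B_Y\cap A_k}\psi_{x_l,\tilde Z}\,d\mu_Y$, and summing over $k$ (the left sides add up to the probability over $B_Y$ by countable additivity, the right sides to $\int_{B_Y}\psi_{x_l,\tilde Z}$) yields the claim.

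The main obstacle is precisely this last upgrade from the instrument-wise minimum to the pointwise minimum: one must resist the temptation to quote a Jensen- or monotonicity-type inequality and instead notice that the hypothesis is a \emph{family} of inequalities over all $B_Y$, which is exactly what licenses the partition-and-reassemble argument. A minor technical point to check is measurability of the sets $A_k$ (ensured by measurability of the conditional densities and finiteness of $\tilde Z$) and that the construction respects whatever version of the conditional densities one fixes; these are routine. Everything else is bookkeeping with the exclusion and independence assumptions.
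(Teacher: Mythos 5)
Your proof is correct and follows essentially the same route as the paper: first the instrument-wise bound via exclusion, independence, and the inclusion of the sufficient-taker event in $\{X_i(z_k)=x_l\}$, then the upgrade to the pointwise minimum by partitioning $\mathcal{Y}$ according to which sub-density is smallest and applying the bound piecewise. The only (cosmetic) difference is that you partition by the argmin of the sub-densities with a tie-breaking rule, whereas the paper uses the finer partition indexed by all orderings of $\tilde{Z}$; your coarser partition is a slight simplification of the same idea.
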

	Therefore, for any collection of instrument values $\tilde{Z'}$, setting $B_Y=\mathcal{Y}$ yields an upper bound on the probability that a unit belongs to the sufficient-$x_l$-takers group $\tilde{S}(x_l,\tilde{Z})$. In terms of $\boldsymbol{p}\in\Delta(\mathcal{T}_Z)$:
	\begin{align}
		\sum\limits_{t^Z\tilde{S}(x_l,\tilde{Z})}p(t^Z)\leq\Psi_{x_l,\tilde{Z}}\phantom{a}\forall\tilde{Z}\in\mathcal{P}(\mathcal{Z})\setminus\{\emptyset\}, \textit{ and } x_l\in\mathcal{X}
	\end{align}
	
	Once more, the previous equation comprises a finite  system of ineequalities  that are linear in $\boldsymbol{p}$. Therefore, it can be expressed in matrix form as $A_\Psi\boldsymbol{p}\leq \boldsymbol{\Psi}$ where $\boldsymbol{\Psi}$ stacks all the constants $\Psi_{x_l,\tilde{Z}}$ across treatment values and subsets of the support of the instrument. 
	
	\begin{corollary}\label{Corollary:UpperBoundOnFT}
		A valid distribution over instrument response types must satisfy:
		\begin{align}
			A_\Psi\boldsymbol{p}&\leq \boldsymbol{\Psi}\\
			A_{Z}\boldsymbol{p}&\leq \boldsymbol{r}_Z\\
			A_{X}\boldsymbol{p}&=\boldsymbol{r}_X
		\end{align}
	\end{corollary}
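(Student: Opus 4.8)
The plan is to obtain Corollary \ref{Corollary:UpperBoundOnFT} by collecting three families of necessary conditions that have essentially already been derived, so the argument is one of bookkeeping rather than new mathematics. Suppose $g_0 \in \Delta(\mathcal{T})$ is a distribution over response types satisfying Assumption \ref{Assumption:Validity} (instrument validity), Assumption \ref{Assumption:SelectionModel} (the instrument-response restriction), and the consistency condition of Definition \ref{Definition:Consistency}, and let $\boldsymbol{p}$ be the vector representation of the induced distribution $p_{g_0}$ over instrument-response types, as in Remark \ref{Remark:gImpliesf}. The goal is to show that $\boldsymbol{p}$ simultaneously satisfies the three displayed systems.

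I would first handle the two blocks already contained in Proposition \ref{Proposition:SharpSelectionModel_LinearEquations}. The inequality $A_Z \boldsymbol{p} \leq \boldsymbol{r}_Z$ is exactly Assumption \ref{Assumption:SelectionModel} evaluated at $\boldsymbol{p}_{g_0}$. For $A_X \boldsymbol{p} = \boldsymbol{r}_X$, set $B_Y = \mathcal{Y}$ in the consistency equation \ref{Equation:Consistency}: the right-hand side collapses to $\sum_{t^Z : (t^Z)_k = x_l} p_{g_0}(t^Z)$ and the left-hand side to $\mathbb{P}[X_i = x_l \,|\, Z_i = z_k]$, which is precisely the system stacked by $A_X \boldsymbol{p} = \boldsymbol{r}_X$.

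Next, for the new block $A_\Psi \boldsymbol{p} \leq \boldsymbol{\Psi}$ I would invoke Proposition \ref{Proposition:UpperBoundOnFT}. Since $g_0$ satisfies Assumption \ref{Assumption:Validity}, that proposition applies for every $x_l \in \mathcal{X}$ and every non-empty $\tilde{Z} \subset \mathcal{Z}$; evaluating it at $B_Y = \mathcal{Y}$ gives $\mathbb{P}[X_i(z_k) = x_l \text{ for all } z_k \in \tilde{Z}] \leq \Psi_{x_l, \tilde{Z}}$. The left-hand side equals $\sum_{t^Z \in \tilde{S}(x_l, \tilde{Z})} p_{g_0}(t^Z)$, because membership in the sufficient-$x_l$-takers group $\tilde{S}(x_l, \tilde{Z})$ is exactly the event of realizing treatment $x_l$ at every instrument value in $\tilde{Z}$, and $g_0$ induces $p_{g_0}$ on instrument-response types (Remark \ref{Remark:gImpliesf}). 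Stacking these inequalities over all pairs $(x_l, \tilde{Z})$ in the same fixed coordinate order of $\mathcal{T}_Z$ that defines $\boldsymbol{p}$ produces $A_\Psi \boldsymbol{p} \leq \boldsymbol{\Psi}$. Combining the three blocks gives the corollary.

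The only step that requires a little care --- and where I would watch for an indexing slip --- is confirming that the event in Proposition \ref{Proposition:UpperBoundOnFT} with $B_Y = \mathcal{Y}$ coincides with membership in $\tilde{S}(x_l, \tilde{Z})$, and that all three linear systems are written against the common ordering of $\mathcal{T}_Z$ underlying $\boldsymbol{p}$. Because the corollary asserts only necessity, no converse construction of a response-type distribution, and hence no sharpness argument, is needed here.
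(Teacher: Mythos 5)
Your proposal is correct and follows essentially the same route as the paper, which treats the corollary as an immediate consequence of Proposition \ref{Proposition:UpperBoundOnFT} evaluated at $B_Y=\mathcal{Y}$ (giving the $A_\Psi$ block) together with Proposition \ref{Proposition:SharpSelectionModel_LinearEquations} (giving the $A_Z$ and $A_X$ blocks). Your care about identifying the event $\{X_i(z_k)=x_l\ \forall z_k\in\tilde{Z}\}$ with membership in $\tilde{S}(x_l,\tilde{Z})$ is well placed, since the paper's displayed definition of that set has the implication written in the reverse direction of its verbal description, and your reading matches the intended one.
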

	
	\section{Conclusion} \label{Section:Conclusion}
	This paper proposes a set of observable implications for IV models. For binary instruments, the observable implications are sharp. For binary instruments with instrument-response-type restrictions I provide a characterization with a generalized FOSD interpretation and use this interpretation to link instrument-reponse-type restrictions to non-parametric assumptions on the preferences of units in the population. The testable implications generalize naturally to the multiple instruments case.
	
	\clearpage
	
	\singlespacing
	\footnotesize
	\bibliography{references.bib}	
	\normalsize
	
	\onehalfspacing
	\appendix 
	
	\section{Proofs}\label{Appendix:Proofs}
	
	\subsection{Proofs for section \ref{Section:Binary}}\label{Appendix:ProofBinaryIV}

	\begin{prooftheorem}[Proof of theorem \ref{Theorem:SharpBinary}]\phantom{a}\\
		
		\underline{\textbf{1$\Longrightarrow$2:}}
		
		The proof follows immediately from proposition \ref{Proposition:UpperBoundOnFT}.
		
		\underline{\textbf{2$\Longrightarrow $1:}}
		
		Let $\boldsymbol{p}$ be a distribution over instrument response types such that:
		\begin{align}
			A_Z\boldsymbol{p}&\leq \boldsymbol{r}_Z\\
			A_X\boldsymbol{p}&=\boldsymbol{r}_X\\
			p\big(\tilde{G}(x_l,\mathcal{Z})\big)&\leq \Psi_{x_l,\mathcal{Z}}
		\end{align}
		
		The goal is to define a distribution over response types that satisfies equation \ref{Equation:Consistency} and proposition \ref{Proposition:SharpSelectionModel_LinearEquations}.
		
		To simplify exposition, let $\tilde{p}_{x_l}$ denote the quantity $p\big(\tilde{G}(x_l,\mathcal{Z})$.
		
		To this end, it is necessary to specify potential treatments and outcomes for every possible response types, including potential outcomes for types that never realize certain treatments and thus never realize the potential outcome associated to such treatment.
		
		Let $t\in\Delta(\mathcal{T})$ be a response types. Define the set of treatments that response type $t=(t^Z,t^X)$ realizes as $R(t)=\{x_l\in\mathcal{X}:(t^Z)_{k'}=x_k \textit{ for some } k=1,2\}$. Note that by construction $|R(t)|\leq 2$
		
		Let $\lambda$ be an arbitrary fixed distribution over $\mathcal{Y}$. Then, define $g':\Delta(\mathcal{T})\longrightarrow \mathbb{R}$ as follows:
		
		\begin{align}
			g'(t^Z,t^X)=\begin{cases}
				\Big[\frac{1}{\Psi_{x_l,\mathcal{Z}}}\psi_{x_l,\mathcal{Z}}\big((t^X)_l\big)\Big]\Big[\underset{l':x_{l'}\not\in R(t)}{\prod}\lambda\big((t^X)_{l'}\big)\Big] & \textit{ if } R(t)=\{x_l\}\\
				\Big[\underset{l:x_l\in R(t)}{\prod}\frac{1}{\Phi_{x_l}-\tilde{p}_{x_l}}\xi_{x_l}\big((t^X)_l\big)\Big]\Big[\underset{l':x_{l'}\not\in R(t)}{\prod}\lambda\big((t^X)_{l'}\big)\Big] & \textit{ if }|R(t)|=2 \textit{ and }\Phi_{x_l}-\tilde{p}_{x_l}>0\\
				0 & \textit{ otherwise}
			\end{cases}
		\end{align} 
		Where $\xi_{x_l}\big((t^X)_l\big)=\phi_{Y|x_l,z_k}((t^X)_l)-\frac{\tilde{p}_{x_l}}{\Psi_{x_l,\mathcal{Z}}}\psi_{x_l,\mathcal{Z}}\big((t^X)_l\big)$.
		
		Integrating the function $g'$ allows to construct the desired distribution over response types. To this end, recall the following result from measure theory:
		
		\begin{lemma}\label{Lemma:Slices}
			Any measurable set of response types $B_{\mathcal{T}}\in\mathcal{B}_{\mathcal{T}}$ can be partitioned into slices as follows:
			\begin{align}
				B_{\mathcal{T}}=\underset{t^Z\in\mathcal{T}_{Z}}{\bigcup}\{t^Z\}\times B_{T_X}^{t^Z}
			\end{align}
			Where $B_{T_X}^{t^Z}=B_{Y_0}^{t^Z}\times B_{Y_1}^{t^Z}\times...\times B_{Y_{L-1}}^{t^Z}\in\mathcal{Y}^{L}$ is potentially empty and measurable with respect to the product measure over the product of the $L$ spaces that correspond to the potential outcomes at each different treatment value $x_l\in\mathcal{X}$. 
			
			Moreover, each individual set $=B_{Y_l}^{t^Z}$ for $l=0,1...,L-1$ is measurable with respect to $\mu_Y$.
		\end{lemma}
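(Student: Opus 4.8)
The plan is to deduce Lemma \ref{Lemma:Slices} from the standard measure-theoretic fact that sections of product-measurable sets are measurable, using crucially that $\mathcal{T}_Z$ is a \emph{finite} discrete space. Write $\mathcal{T}=\mathcal{T}_Z\times\mathcal{T}_X$ with $\mathcal{T}_X=\mathcal{Y}^{L}$, let $\mathcal{B}_{\mathcal{T}_X}:=\bigotimes_{l=0}^{L-1}\mathcal{B}_Y$ denote the product $\sigma$-algebra on $\mathcal{T}_X$, and recall that $\mathcal{B}_{\mathcal{T}}$ is the product of the power-set $\sigma$-algebra on $\mathcal{T}_Z$ with $\mathcal{B}_{\mathcal{T}_X}$, hence generated by the measurable rectangles $E\times F$ with $E\subseteq\mathcal{T}_Z$ and $F\in\mathcal{B}_{\mathcal{T}_X}$.

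First I would run a good-sets argument. Fix $B_{\mathcal{T}}\in\mathcal{B}_{\mathcal{T}}$ and, for each $t^Z\in\mathcal{T}_Z$, define its section $B_{T_X}^{t^Z}:=\{t^X\in\mathcal{T}_X:(t^Z,t^X)\in B_{\mathcal{T}}\}$. The collection $\mathcal{D}:=\{B\in\mathcal{B}_{\mathcal{T}}:B_{T_X}^{t^Z}\in\mathcal{B}_{\mathcal{T}_X}\text{ for every }t^Z\in\mathcal{T}_Z\}$ is a $\sigma$-algebra, since sectioning commutes with complementation and with countable unions; and $\mathcal{D}$ contains every generating rectangle $E\times F$, whose $t^Z$-section is $F$ when $t^Z\in E$ and $\emptyset$ otherwise, both in $\mathcal{B}_{\mathcal{T}_X}$. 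Hence $\mathcal{D}=\mathcal{B}_{\mathcal{T}}$, so every $B_{T_X}^{t^Z}$ is $\mathcal{B}_{\mathcal{T}_X}$-measurable. Because $\mathcal{T}_Z$ is finite, $B_{\mathcal{T}}=\bigcup_{t^Z\in\mathcal{T}_Z}\{t^Z\}\times B_{T_X}^{t^Z}$ is a \emph{finite} and pairwise-disjoint union of measurable rectangles, which is exactly the claimed slice partition (with possibly empty slices). For the last assertion I would apply the same sectioning principle one level down, inside $\mathcal{T}_X=\mathcal{Y}^{L}$ with $\sigma$-algebra $\mathcal{B}_{\mathcal{T}_X}=\bigotimes_{l}\mathcal{B}_Y$: fixing $L-1$ of the coordinates sends a $\mathcal{B}_{\mathcal{T}_X}$-measurable set to a $\mathcal{B}_Y$-measurable, hence $\mu_Y$-measurable, subset of $\mathcal{Y}$. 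In particular, whenever $B_{T_X}^{t^Z}$ is a rectangle $\prod_{l}B_{Y_l}^{t^Z}$ — the form that arises for the consistency-checking sets $\{(t^Z)_k=x_l,\ (t^X)_l\in B_Y\}$ appearing in the proof of Theorem \ref{Theorem:SharpBinary}, and the form into which an arbitrary section decomposes as a countable union after the standard reduction to rectangles — each factor $B_{Y_l}^{t^Z}$ lies in $\mathcal{B}_Y$ (taking the remaining factors nonempty, the empty case being trivial).

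I do not anticipate a genuine obstacle: the lemma merely repackages Fubini-type preliminaries, and the finite discreteness of $\mathcal{T}_Z$ removes the usual measurability frictions — the union over $t^Z$ is finite, and counting measure on a finite set is trivially $\sigma$-finite with every section map measurable. The only points requiring care are making the good-sets step $\mathcal{D}=\mathcal{B}_{\mathcal{T}}$ precise, and reading the literal product notation $B_{T_X}^{t^Z}=\prod_{l}B_{Y_l}^{t^Z}$ in the statement as describing the rectangle pieces into which each measurable section decomposes (which is all the sequel uses), rather than as a claim that every section is itself a single rectangle.
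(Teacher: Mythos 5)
The paper never actually proves this lemma: it is invoked inside the proof of Theorem \ref{Theorem:SharpBinary} as a ``result from measure theory'' with no argument attached, so your good-sets proof fills a real gap rather than paralleling an existing one. Your core argument is correct and is the standard one: the collection of $B\in\mathcal{B}_{\mathcal{T}}$ all of whose $t^Z$-sections lie in $\mathcal{B}_{\mathcal{T}_X}$ is a $\sigma$-algebra containing the generating rectangles $E\times F$, hence equals $\mathcal{B}_{\mathcal{T}}$, and finiteness of $\mathcal{T}_Z$ turns the slice decomposition into a finite disjoint union of measurable sets $\{t^Z\}\times B_{T_X}^{t^Z}$. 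You are also right to flag that the lemma as literally written is false: a measurable section $B_{T_X}^{t^Z}\subseteq\mathcal{Y}^{L}$ need not be a product $B_{Y_0}^{t^Z}\times\cdots\times B_{Y_{L-1}}^{t^Z}$ (for $L=2$ and $\mathcal{Y}=\mathbb{R}$ take the diagonal), and the sequel only ever needs measurability of the sections --- the iterated-integral display is applied to sets that genuinely are products, namely $\mathcal{Y}\times\cdots\times B_Y\times\cdots\times\mathcal{Y}$. One correction: your parenthetical claim that an arbitrary section ``decomposes as a countable union after the standard reduction to rectangles'' is not true in general; the product $\sigma$-algebra is generated by rectangles, but its members need not be countable unions of rectangles (the unit circle in $\mathbb{R}^2$ is a counterexample, since any measurable rectangle contained in it has at most four points). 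Drop that clause --- nothing in your argument or in the paper's subsequent use of the lemma depends on it, because $\int_{B}g'\,d\mu_{T^X}$ is well defined for any measurable $B$ by Tonelli, granting (as the construction implicitly requires) that $\mu_Y$ is $\sigma$-finite so that the product measure $\mu_{T^X}$ exists.
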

		
		Then, using the lemma, define:
		\begin{align}
			g\Big(B_{\mathcal{T}}\Big)&=\sum\limits_{t^Z\in\mathcal{T}_Z}\Big[p(t^Z)\int\limits_{B_{T_X}^{t^Z}}g'(t^Z,t^X)d\mu_{T^X}\Big]
		\end{align}
		Where $\mu_{T^X}$ denotes the product measure $\mu_{T^X}=\mu_Y\otimes \mu_Y\otimes...\otimes \mu_Y$ over the space of potential treatment-responses. 
		\begin{align}
			\int\limits_{B_{T_X}^{t^Z}}g'(t^Z,t^X)d\mu_{T^X}=\int\limits_{B_{Y_0}^{t^Z}}\int\limits_{B_{Y_1}^{t^Z}}...\int\limits_{B_{Y_{K-1}}^{t^Z}}g'(t^Z,t^X)d\mu_Y\mu_Y...\mu_Y
		\end{align}
		
		It remains to prove that $g\Big(B_{\mathcal{T}}\Big)$ is a valid distribution over reponse types, this is, it is non-negative, adds up to 1, satisfies countable additivity, and is consistent with the observed joint distribution of outcomes, treatments and instruments.. 
		
		When $|R(t)|=1$, the function $g'(t^Z,t^X)$ is the product of non-negative functions, therefore, it must be non-negative itself. When $|R(t)|=2$ it is either the product of non-negative functions, or zero. Therefore, it must be non-negative. From here, it follows that $g\Big(B_{\mathcal{T}}\Big)$ is non-negative for every measurable $B_{\mathcal{T}}\in \mathcal{B}_{\mathcal{T}}$.  
		
		Moreover, note that:
		\begin{align}
			\int\limits_{\mathcal{Y}}\frac{p(t^Z)}{\Psi_{x_l,\mathcal{Z}}}\psi_{x_l,\mathcal{Z}}\big(y\big)d\mu_Y=p(t^Z)\\
			\int\limits_{\mathcal{Y}}\Big[\frac{p(t^Z)}{\Phi_{x_l}-\tilde{p}_{x_l}}\big(\phi_{Y|x_l,z_k}(y)-\frac{\tilde{p}_{x_l}}{\Psi_{x_l,\mathcal{Z}}}\psi_{x_l,\mathcal{Z}}\big((y\big)\big)\Big]=p(t^Z)
		\end{align}
		
		From here, which follows that:
		\begin{align}
			g(\mathcal{T})&=\sum\limits_{t^Z\in\mathcal{T}_Z}\Big[p(t^Z)\int\limits_{\mathcal{T}_X}g'(t^Z,t^X)d\mu_{T^X}\Big]\\
			&=\sum\limits_{t^Z\in\mathcal{T}_Z}\Big[p(t^Z)\Big]\\
			&=1
		\end{align}
		Countable additivity follows mechanically. Let $\{B_{\mathcal{T}}^m\}_{m\in M}$ be a countable collection of disjoint subsets of $\mathcal{T}$ indexed by $m\in M$. Applying lemma \ref{Lemma:Slices}, each sets $B_{\mathcal{T}}^m$ can be sliced as follows:
		\begin{align}
			B_{\mathcal{T}}^m=\underset{t^Z\in\mathcal{T}_{Z}}{\bigcup}\{t^Z\}\times B_{T_X}^{t^Z,m}
		\end{align}
		Since $B^m\cap B^{m'}=\emptyset$ for all $m,m'\in M$ such that $m\neq m'$ then its must be that $B_{T_X}^{t^Z,m}\cap B_{T_X}^{t^Z,m'}=\emptyset$. 
		
		Then:
		\begin{align}
			g\Big(\underset{m\in M}{\bigcup}B_{\mathcal{T}}^m\Big)&=g\Big(\underset{m\in M}{\bigcup} \Big[\underset{t^Z\in\mathcal{T}_{Z}}{\bigcup}\Big\{\{t^Z\}\times B_{T_X}^{t^Z,m}\Big\}\Big]\Big)\\
			&=g\Big(\underset{t^Z\in\mathcal{T}_{Z}}{\bigcup} \Big[\underset{m\in M}{\bigcup}\Big\{\{t^Z\}\times B_{T_X}^{t^Z,m}\Big\}\Big]\Big)\\
			&=g\Big(\underset{t^Z\in\mathcal{T}_{Z}}{\bigcup} \Big[\{t^Z\}\times \Big\{\underset{m\in M}{\bigcup}B_{T_X}^{t^Z,m}\Big\}\Big]\Big)\\
			&=\sum\limits_{t^Z\in\mathcal{T}_Z}p(t^Z)\int\limits_{\underset{m\in M}{\bigcup}B_{T_X}^{t^Z,m}}g'(t^Z,t^X)d\mu_{T_X}\\
			&=\sum\limits_{t^Z\in\mathcal{T}_Z}\sum_{m\in M}p(t^Z)\int\limits_{B_{T_X}^{t^Z,m}}g'(t^Z,t^X)d\mu_{T_X}\\
			&=\sum_{m\in M}\sum\limits_{t^Z\in\mathcal{T}_Z}p(t^Z)\int\limits_{B_{T_X}^{t^Z,m}}g'(t^Z,t^X)d\mu_{T_X}\\
			&=\sum_{m\in M}g(B_{\mathcal{T}}^{m})
		\end{align}
		
		The first line is comes from application of lemma \ref{Lemma:Slices} to every set $B_{\mathcal{T}}^m$, the second simply inverts the order of the unions, the third is a property of cartesian products, the fourth simply applies the definition of $g$ and takes advantage of the fact that the set is already sliced, the fifth line uses countable additivity of the integral, the sixth changes the order of the sums, and the last one uses the definition of $g$ but now applied to each set $B_Y^m$ separately. This proves countable additivity.
		
		It only remains to verify consistency with observed probabilities, let $B_{T_X}^{t^Z}=\mathcal{Y}\times....\times B_Y\times...\times\mathcal{Y}$ be the set obtained by taking the product of the outcome's support $L-1$ times and letting the $l'-th$ element of the product coincide with $B_Y$ instead of $\mathcal{Y}$. Then:	
		
		\begin{align}
			g\Big(\big\{(t^Z,t^X)\in\mathcal{T}:(t^Z)_k=x_{l'},(t^X)_l\in B_Y\big\}\Big)&=\sum\limits_{t^Z:(t^Z)_k=x_l}\Big[p(t^Z)\int\limits_{B_{T_X}^{t^Z}}g'(t^Z,t^X)d\mu_{T^X}\Big]\\
			&=\frac{\tilde{p}_{x_l}}{\Psi_{x_l,\mathcal{Z}}}\int\limits_{B_Y}\psi_{x_l,\mathcal{Z}}\big(y\big)d\mu_Y\\
			&\phantom{aaa}+\frac{1-\tilde{p}_{x_l}}{1-\tilde{p}_{x_l}}\int\limits_{B_Y}\Big(\phi_{Y|x_l,z_k}(y)-\frac{\tilde{p}_{x_l}}{\Psi_{x_l,\mathcal{Z}}}\psi_{x_l,\mathcal{Z}}(y)\Big)d\mu_Y\\
			&=\int\limits_{B_Y}\phi_{Y|x_l,z_k}(y)d\mu_Y
		\end{align}
		The first equality is just the definition of $g$, the second uses the fact that at instrument value $z_k$ there are only two instrument-response types that realize $x_l$ (always-$x_l$ takers, and $\{z_k\}$-IFF-$x_l$-takers) combined with the definition of $g'$. The last line simply re-arranges terms.
		
		This completes the proof. 
		
	\end{prooftheorem}
	
	\subsubsection{Proofs for subsection \ref{Subsection:OTandFOSD}}\label{Appendix:ProofFOSD}
	
	Before providing the proofs I formally define the transportation problem as well as the maximum flow and minimum cut problems.
	
	Throught the definitions and proofs I alternate from edges of the graphs that define the transportation problem and response types. This is possible because there is a one-to-one correspondence between the response types $t^Z$ that belong to $\mathcal{T}_Z\setminus R$ and edges $e\in\mathcal{E}_R$. This correspondence is given by $t^Z=(x_l,x_{l'})\longrightarrow e=(x_l^0,x_{l'}^1)$. I denote the unique edge associated to $t^Z\in\mathcal{T}_Z$ as $e_{t^Z}$ and the unique instrument-response type associated to $e\in\mathcal{E}_R$ as $t^Z_{e}$.
	
	\begin{definition}[\textbf{Circulation with demands problem associated to R }]\label{Definition:CirculationProblem} Here I introduce a version of the problem taylored to the instrument validity application. For a detailed discussion of discrete circulation and transportation problems see \cite{KleinbergTardos} chapter 7.5.
		
		Let $R\subset\mathcal{T}_Z$ be a collection of instrument-response types with associated graph $\mathcal{E}_R$.
		
		Define demands $d_{x_l^k}$ for $x_l^k\in \mathcal{X}_0\cup\mathcal{X}_1$ as follows:
		\begin{align}
			d_{(x,z)}=\begin{cases}
				-\mathbb{P}[X_i=x_l|Z_i=z_k] & \textit{ if } k=0\\
				\mathbb{P}[X_i=x_l|Z_i=z_k]  & \textit{ if } k=1
			\end{cases}
		\end{align}
		
		Furthermore, every edge $e=(x_l^0,x_{l'}^1)\in\mathcal{E}_R$ has a maximum capacity $c_e\in[0,1]$ defined as $c_e=c_e((x_l^0,x_{l'}^1))$.

		A candidate solution to the circulation with capacities problem is a function $\tilde{p}:E_{R}\longrightarrow\mathbb{R}_{+}$ such that, for each directed edge $e=(x_l^{0},x_{l'}^1)\in\mathcal{E}_R$, the number $\tilde{p}(e)$, represents the flow from node $x_l^0$ to node $x_{l'}^1$. The flow $\tilde{p}$ solves the problem if:
		\begin{enumerate}
			\item Capacity constraints are satisfied: 
			\begin{align}
				0\leq \tilde{p}(e) \leq c_e \textit{ for all } e\in E_{R}
			\end{align}
			\item Demand is satisfied at every node (note that supply is represented as negative demand):
			\begin{align}
				\phantom{aaaaaaaaaa}d_{x_{l}^0}&=-\sum\limits_{e\in E^{out}(x_l^0)}\tilde{p}(e) &\textit{ for all } l=1,2,...,L\\
				\phantom{aaaaaaaaaa}d_{x_{l'}^1}&=\sum\limits_{e\in E^{in}(x_{l'}^{1})}\tilde{p}(e) &\textit{ for all } l=1,2,...,L
			\end{align}	
		\end{enumerate}
		Where $E^{out}(x_l^0)=\{e\in\mathcal{E}_R:e=(x_l^{0},x_{l'}^1) \textit{ for some } x_{l'}^1\in\mathcal{X}_1\}$ and $E^{in}(x_{l'}^{1})=\{e\in\mathcal{E}_R:e=(x_l^{0},x_{l'}^1) \textit{ for some } x_{l}^1\in\mathcal{X}_0\}$ represent the sets of edges that depart from $x_l^0$ and end in $x_{l'}^1$ respectively.
	\end{definition}
	\medskip
	\begin{definition}[\textbf{Maximum flow problem associated to R}]\label{Definition:MaxFlowProblem}
		Let $R\subset\mathcal{T}_Z$ be a collection of instrument-response types with associated graph $\mathcal{E}_R$. Let $(\mathcal{X}_0\cup\mathcal{X}_1,\mathcal{E}_R)$ be the graph from definition \ref{Definition:CirculationProblem}. From theorem 7.50 in \cite{KleinbergTardos} that any circulation with demands problem is equivalent to a maximum flow problem. The maximum flow problem associated to $R$ is obtained by applying this principle to the associated circulation with demands problem.
		
		The directed graph for the maximum flow problem is defined as follows:
		
		Two vertices are added: a source $s^*$ and a sink $t^*$ to obtain the set of vertices $\mathcal{X}_0\cup\mathcal{X}_1\cup\{s^*,t^*\}$.
		
		A directed edge points from the source to every node $x_l^1$ and from every node $x_{l'}^1$ to the sink:
		\begin{align}
			\mathcal{E}_R^{MaxFlow}=\mathcal{E}_R\bigcup \Big\{(x_{l}^0,t^*):x\in\mathcal{X}\Big\}\bigcup \Big\{ (s^*,x_{l'}^1\ ):x\in\mathcal{X}\Big\}
		\end{align}
		
		The capacity of all edges of the form $e=(x_{l}^0,t^*)$ is set to $c_e=\mathbb{P}[X_i=x_l|Z_i=z_0]$, the capacity of all edges of the form $e=(s^*,x_{l'}^1\ )$ is set to $c_e=\mathbb{P}[X_i=x_{l'}|Z_i=z_1]$. All remaining edges have the same capacities as in definition \ref{Definition:CirculationProblem}. 
		
		A flow is a function $\tilde{p}:\mathcal{E}_R^{MaxFlow}\longrightarrow\mathbb{R}_+$ such that every intermediate node (i.e. all nodes except $s^*$ and $t^*$ has exactly the same inflow and outflow). Formally:
		
		\begin{enumerate}
			\item For every $x_l^0\in\mathcal{X}_0$:
			\begin{align}
				\sum\limits_{e'\in E^{out}(x_l^0)}\tilde{p}(e')=\tilde{p}\big((s^*,x_{l}^0)\big)
			\end{align}
			\item For every $x_{l'}^1\in\mathcal{X}_1$:
			\begin{align}
				\sum\limits_{e'\in E^{in}(x_{l'}^1)}\tilde{p}(e')=\tilde{p}\big((x_{l'}^1,t^*)\big)
			\end{align}
		\end{enumerate}
		Let $\tilde{P}$ denote the set of all flows. The problem is:
		\begin{align}
			\underset{\tilde{p}\in\tilde{P}}{\textit{max.   }}\sum\limits_{x_l^0\in\mathcal{X}_0}\tilde{p}\big((s^*,x_{l}^0)\big)\\
			\textit{subject to: } \tilde{p}(e)\leq c_e. 
		\end{align}
	\end{definition}

	\begin{definition}[\textbf{$s^*-t^*$ cut of $\big(\mathcal{X}_0\cup\mathcal{X}_1\cup\{s^*,t^*\},\mathcal{E}_R^{MaxFlow}\big)$}]\label{Definition:Cut}
		Let $\big(\mathcal{X}_0\cup\mathcal{X}_1\cup\{s^*,t^*\},\mathcal{E}_R^{MaxFlow}\big)$ be the graph from definition \ref{Definition:MaxFlowProblem}. 
		
		\begin{enumerate}
			\item A cut $C=(S_C,T_C)$ is a partition (with elements $S$ and $T$) of the set $\mathcal{X}_0\cup\mathcal{X}_1\cup\{s^*,t^*\}$ such that $s^*\in S_C$ and $t^*\in T_C$. I denote the set of all cuts of $\big(\mathcal{X}_0\cup\mathcal{X}_1\cup\{s^*,t^*\},\mathcal{E}_R^{MaxFlow}\big)$ as $\mathcal{C}(R,\mathcal{X})$.
			\item The cut set of $C=(S_C,T_C)$ denoted $\Xi_C$, is defined as the set of edges that, if removed, would disconect $S_C$ from $T_C$, formally: 
			\begin{align}
				\Xi_C=\{e=(e_1,e_2)\in \mathcal{E}_R^{MaxFlow}|e_1\in S_C \textit{ and } e_2\in T_C \}
			\end{align}
			\item The capacity of a cut $C$, denoted $Cap(C)$, is defined as the sum of the capacities of the edges in $\Xi_C$:
			\begin{align}
				Cap(C)=\sum_{e\in\Xi_C}c_e
			\end{align} 
			\item The minimum cut associated to $R$ is $\underset{C\in \mathcal{C}(R,\mathcal{X}) }{max}\textit{  }Cap(C)$
		\end{enumerate}

	\end{definition}	
	
	\begin{proposition}\label{Proposition:MinCutEquivalence}
		The following statements are equivalent: 
		\begin{enumerate}
			\item A distribution over instrument response types exists that satisfies assumptions (\ref{Assumption:Validity}), (\ref{Assumption:IVResponseTypeRestriction}) and $p(x_l,x_{l'})\leq c_e=c_e((x_l^0,x_{l'}^1))$ exists. 
			\item A solution to the circulation with demands problem from definition \ref{Definition:CirculationProblem} exists.
			\item The maximum flow attainable in the problem from definition \ref{Definition:MaxFlowProblem} is equal to 1.
			\item The minimum cut attainable in  definition \ref{Definition:Cut} is equal to 1. 
		\end{enumerate}
	\end{proposition}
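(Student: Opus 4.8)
The plan is to establish the chain $1 \Leftrightarrow 2 \Leftrightarrow 3 \Leftrightarrow 4$, where the first equivalence carries all of the probabilistic content and the remaining two are textbook network-flow results applied to the graphs built in Definitions \ref{Definition:CirculationProblem}--\ref{Definition:Cut}. Throughout I read statement 1 as: there is a $p\in\Delta(\mathcal{T}_Z)$ that vanishes on $R$, matches the observed conditional treatment probabilities $\mathbb{P}[X_i=x_l|Z_i=z_k]$ (the content of the independence part of Assumption \ref{Assumption:Validity} together with Definition \ref{Definition:Consistency} restricted to treatments), and satisfies $p(t^Z_e)\le c_e$ for every $e\in\mathcal{E}_R$.

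For $1\Leftrightarrow 2$ I would use the one-to-one correspondence $t^Z\mapsto e_{t^Z}$ between $\mathcal{T}_Z\setminus R$ and $\mathcal{E}_R$. Given such a $p$, set $\tilde p(e):=p(t^Z_e)$ for $e\in\mathcal{E}_R$; non-negativity of $p$ and the capacity hypothesis give $0\le\tilde p(e)\le c_e$. Because the response-type restriction $p|_R\equiv 0$ is mirrored exactly by the deletion of the corresponding edges, the instrument-response types realizing $x_l$ at $z_0$ are precisely $\{t^Z_e:e\in E^{out}(x_l^0)\}$, so
\[
\sum_{e\in E^{out}(x_l^0)}\tilde p(e)=\sum_{t^Z:(t^Z)_0=x_l}p(t^Z)=\mathbb{P}[X_i=x_l|Z_i=z_0]=-d_{x_l^0},
\]
and symmetrically $\sum_{e\in E^{in}(x_{l'}^1)}\tilde p(e)=\mathbb{P}[X_i=x_{l'}|Z_i=z_1]=d_{x_{l'}^1}$, so $\tilde p$ solves the circulation problem. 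Conversely, given a feasible circulation $\tilde p$, define $p(t^Z):=\tilde p(e_{t^Z})$ for $t^Z\notin R$ and $p(t^Z):=0$ otherwise; the capacity constraints give $p\ge 0$, the conservation equations recover the consistency equalities, and
\[
\sum_{t^Z\in\mathcal{T}_Z}p(t^Z)=\sum_{e\in\mathcal{E}_R}\tilde p(e)=\sum_{l'}\ \sum_{e\in E^{in}(x_{l'}^1)}\tilde p(e)=\sum_{l'}\mathbb{P}[X_i=x_{l'}|Z_i=z_1]=1,
\]
so $p\in\Delta(\mathcal{T}_Z)$ and meets every requirement in statement 1.

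For $2\Leftrightarrow 3$ I would invoke the standard reduction of a circulation-with-demands problem to a maximum-flow problem (Kleinberg--Tardos, Thm.\ 7.50, as cited in Definition \ref{Definition:MaxFlowProblem}): the circulation is feasible if and only if the augmented network routes the full total demand $D=\sum_{v:d_v>0}d_v$, which here equals $\sum_{l'}\mathbb{P}[X_i=x_{l'}|Z_i=z_1]=1$. For $3\Leftrightarrow 4$ I would apply the max-flow min-cut theorem, so the maximum flow value equals the minimum cut capacity; since the cut $(\{s^*\},\text{everything else})$ has capacity $\sum_{l'}\mathbb{P}[X_i=x_{l'}|Z_i=z_1]=1$, both optima are at most $1$, and hence the maximum flow equals $1$ if and only if the minimum cut equals $1$.

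The main obstacle is the bookkeeping in $1\Leftrightarrow 2$: one must verify that the bijection between response types outside $R$ and edges of $\mathcal{E}_R$ turns the response-type restriction into edge deletions and the consistency constraints into the flow-conservation equations \emph{with the correct orientation} (outgoing at $z_0$, incoming at $z_1$), and that the probability normalization $\sum_{t^Z}p(t^Z)=1$ corresponds exactly to the total demand being $1$. Once that translation is pinned down, steps $2\Leftrightarrow 3$ and $3\Leftrightarrow 4$ are immediate citations.
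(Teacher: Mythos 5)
Your proposal is correct and follows essentially the same route as the paper: the probabilistic content is carried by the bijection $t^Z\mapsto e_{t^Z}$ between $\mathcal{T}_Z\setminus R$ and $\mathcal{E}_R$, which converts the consistency equalities into the conservation/demand equations and the response-type restriction into edge deletions, while $2\Leftrightarrow 3\Leftrightarrow 4$ are delegated to the standard circulation-to-max-flow reduction and max-flow/min-cut duality, exactly as the paper does by citing \cite{KleinbergTardos}. Your explicit verification that $\sum_{t^Z}p(t^Z)=1$ via the total demand is a small detail the paper leaves implicit, but otherwise the arguments coincide.
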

	
	\begin{proof}[\textbf{Proof of proposition \ref{Proposition:MinCutEquivalence}}]
		I prove the equivalence between $1$ and $2$. The equivalence of 2, 3 and 4 follows from well known results in operations research (See \cite{KleinbergTardos} theorems 7.9 and 7.50). %
		
	for reference. 
	
	\underline{\textbf{$2\Longrightarrow 1:$}}\\
	Let $\tilde{p}$ be a flow that satisfies the circulation with demands problem associated to $R$.\\
	\\
	For any instrument-response type $t^Z=(x_l,x_{l'})\in\mathcal{T}_Z$ define the distribution over instrument-response types $p$ as follows:
	\begin{align}
		p(t^Z)= p\big((x_l,x_{l'})\big):=\begin{cases} \tilde{p}(e_{t^Z})& \textit{ if }e_{t^Z}\in\mathcal{E}_R\\
			0 & otherwise
		\end{cases}
	\end{align}
	
	Consider any node $x_l^{0}$:
	\begin{align}
		-\mathbb{P}[X_i=x|Z_i=0]&=d_{x_l^{0}}\\
		&=-\sum\limits_{e\in E^{out}(x_l^{0})}\tilde{p}(e)\\
		&=-\sum_{x_{l'}\in\mathcal{X}}p((x_l,x_{l'}))
	\end{align}
	The first two equalities follow from the definition of the transportation problem, the last from $p(t^Z)=\tilde{p}(e_{t^Z})$.\\
	An analogous argument shows that for $z=1$:
	\begin{align}
		\mathbb{P}[X_i=x_l|Z_i=1z_]=\sum_{x'\in\mathcal{X}}p((x_{l'},x_l))
	\end{align}	
	Which shows that $p$ is consistent with observed conditional treatment probabilities.
	
	Capacity constraints are trivially satisfied since $\tilde{p}(e_{t^Z})\leq c_e=c_e((x_l^0,x_{l'}^1))$.
	
	Finally, from the definition of $\mathcal{E}_R$ it follows that $\tilde{p}$ only assigns non-negative flows to edges in $e\in\mathcal{E}_R$, therefore, $p$ only assigns non-negative probabilities to response types $t^{Z}\in R$. This shows consistency with assumption \ref{Assumption:IVResponseTypeRestriction}. 
	
	\underline{\textbf{$1\Longrightarrow 2:$}}\\
	Let $p$ be a distribution over response types that satisfies assumptions (\ref{Assumption:Validity}) and (\ref{Assumption:IVResponseTypeRestriction}). \\
	For every edge $e=(x_l^{0},x_{l'}^1)$ define the flow $\tilde{p}$ as follows:
	\begin{align}
		\tilde{p}(e)=\tilde{p}\big((x_l^{0},x_{l'}^1)\big)=p(x_l,x_{l'})
	\end{align}
	Note that $e$ is always well defined since every response type is assigned non-negative probability. 
	
	Now, consider an arbitrary node $x_l^0\in\mathcal{X}_0$:
	\begin{align}
		d_{x_l^0}=&-\mathbb{P}[X_i=x_l|Z_i=z_0]\\
		&=-\sum_{x_{l'}\in\mathcal{X}}p((x_l,x_{l'}))\\
		&=-\sum\limits_{e\in E^{out}(x_l^{0})}\tilde{p}(e)
	\end{align}
	An analogous argument shows that for $x_{l'}^1$:
	\begin{align}
		d_{(x,z)}=\sum\limits_{e\in E^{in}(x_{l'}^{1})}\tilde{p}(e)
	\end{align}	
	Thus, demand is satisfied at every node. Again, capacity constraints are trivially satisfied since $\tilde{p}((x_l^0,x_{l'}^1))\leq c_e=c_e((x_l^0,x_{l'}^1))$. This completes the proof.
	
\end{proof}

\begin{proof}[\textbf{Proof of theorem \ref{Theorem:FOSD}}]\phantom{A}\\
	\textbf{Part 1:} First note that the restriction $c_{t^Z}=1$ simply states that instrument-response probabilities are not bounded from above other than by the requirement that they are less than 1. This is, response types in $R$ can occur with arbitrary frequency in the population.
	
	\underline{$\Longrightarrow$} 
	\begin{align}
		\mathbb{P}[X_i\in S|Z_i=z_0]&=p(\{t^Z\in\mathcal{T}^Z: t^Z=(x_l,x_{l'})\in S\times\mathcal{X}\} )\\
		&=\sum\limits_{x_{l'}\in\mathcal{X}}p(\{t^Z\in\mathcal{T}^Z:t^Z=(x_l,x_{l'})\in S\times\{x_{l'}\}\})\\
		&=\sum\limits_{x_{l'}\in\{x_{l'}\in\mathcal{X}: (s,x_{l'})\in R \textit { and } s\in S  \}} (\{t^Z\in\mathcal{T}^Z:t^Z=(x_l,x_{l'})\in S\times\{x_{l'}\}\})\\
		&=\sum\limits_{x_{l'}\in L_{\geq^R}(S)^C} p(\{t^Z\in\mathcal{T}^Z:t^Z=(x_l,x_{l'})\in S\times\{x_{l'}\}\})\label{AuxEcuation:FOSD} \\
		&\leq \sum\limits_{x_{l'}\in L_{\geq^R}(S)^C}\mathbb{P}[X_i=x_{l'}|Z_i=1]\\
		&=\mathbb{P}[X_i\in L_{\geq^R}(S)^C | Z_i=z_1]
	\end{align}
	The first equality comes from the consistency of $p$ with observed probabilities, the second from decomposing the probability in terms of all the different response types that realize $x_l$ when $Z=0$, the third equality comes from the fact that $p$ assigns zero probability to response types outside of $R$, and the fourth noting that if $(x_l,x_{l'})\in R$ then $\not x_{l}\geq^R x_{l'}$ which implies $x_{l'}\not\in L_{\geq^R}(S)$. The inequality comes from noting that $p$ is consistent with observed probabilities and $\{x_{l'}\}$ could also be realized when $Z=1$ by types whose potential outcome when $Z=0$ lies outside of $S$. The last equality comes from adding disjoint probabilities across the elements of $L_{\geq^R}(S)^C$. 
	
	\underline{$\Longleftarrow$}
	
	I prove the counterpositive, this is, if no $p$ that satisfies the hypotheses exists, then, a violation of equation \ref{Equation:FOSD} must exist. To this end I use the minimum cut characterization. 
	
	First, note that by setting $C=\{e\in\mathcal{E}_R:e=(s^*,x_l^0)\textit{ for some }x_l\in\mathcal{X}_0\}$ a cut of capacity one can always be attained. Therefore, edges of the form $(x_l^0,x_{l'}^1)$ will never be part of a minimum cut since their capacity is 1 and removing a single edge will not induce a cut. Thus, minimum cut sets will only contain edges of the form $(s^*,x_l^0)$ or $(x_{l'}^1,t^*)$. 
	
	Suppose that $C$ is a minimum cut such that $Cap(C)\leq 1$. Let $S$ be the set of all $x_{l}\in\mathcal{X}$ such that the edge $(s^*,x_{l}^0)$ belongs to $\Xi_C$. Then, by construction, any node of the form $(s^*,x_{l'}^0)$ for $x_{l'}\in S^C=\mathcal{X}\setminus S'$ does not belong to the cut. 
	
	Define $S':=\mathcal{X}\setminus S$ and note that by construction for any $x_{l'}\in S'$ and any $x_{l''}\in \mathcal{X}$ such that $(x_{l'},x_{l''})\in R$, the edge $(x_{l''}^1,t^*)$ belongs to $C$. Otherwise, the path $(s^*,x_{l'}),(x_{l'},x_{l''}),(x_{l''},t^*)$ would connect $t^*$ and $s^*$.
	
	But the set of elements $x_{l''}\in \mathcal{X}$ such that $(x_{l'},x_{l''})\in R$ for some $x_{l'}\in S'$ is actually the set $L_{\geq^R}(S')^C$.
	
	Therefore:
	\begin{align}
		\sum\limits_{c_e\in\Xi_C}c_e&<1\\
		\Longrightarrow\sum\limits_{e\in\mathcal{E}_R\cap\{s^*\}\times S}c_e+\sum\limits_{e\in\mathcal{E}_R\cap L_{\geq^R}(S')\times \{t^*\}}c_e&<1\\
		\Longrightarrow\phantom{aaaaaaaaaaaaaa}\sum\limits_{e\in\mathcal{E}_R\cap L_{\geq^R}(S')\times \{t^*\}}c_e&<1-\sum\limits_{e\in\mathcal{E}_R\cap\{s^*\times S\}}c_e\\
		\Longrightarrow\phantom{aaaaaaaaaaaaaa}\sum\limits_{e\in\mathcal{E}_R\cap  L_{\geq^R}(S')\times \{t^*\}}c_e&<\sum\limits_{e\in\mathcal{E}_R\cap\{s^*\}\times S'}c_e\\
		\Longrightarrow \phantom{aaaaaaaaaaaai}\mathbb{P}[X_i\in L_{\geq^R}(S')|Z_i=z_1]&<\mathbb{P}[X_i\in S'|Z_i=z_0]
	\end{align}
	The first implication follows from the previous discussion. The second from re-arranging terms. The third from $S'=S^C$ and the fourth from the definition of the capacities. The last equation is a violation of equation \ref{Equation:FOSD}.
	
	\textbf{Part 2:} 
	\underline{$\Longrightarrow$:} Suppose that a distribution over instrument-response types $p\in\Delta(\mathcal{T}_Z)$ exists that satisfies assumptions $1$ and $3$.
	
	For $\emptyset\neq S\subset\mathcal{X}$, let $(\Lambda_S,\Lambda'_S)$ be a partition of $L_{\geq^R}(S)^C$ such that the following three conditions are satisfied: 
	\begin{enumerate}
		\item $\emptyset\neq\Lambda'_S\subset S$.
		\item For every $x_{l'}\in \Lambda'_S$ the set $L_{\geq^R}(x_{l'})^C\setminus\{x_{l'}\}$ is contained in $\Lambda_S$.
		\item $L_{\geq^R}(S\setminus \Lambda_S')^C\subset \Lambda_S$.
	\end{enumerate}
	
	Take equation \ref{AuxEcuation:FOSD}:
	\begin{align}
		\mathbb{P}[X_i\in S|Z_i=z_0]=&\sum\limits_{x_{l'}\in L_{\geq^R}(S)^C} p(\{t^Z\in\mathcal{T}^Z:t^Z=(x_l,x_{l'})\in S\times\{x_{l'}\}\}) \\
		=&\sum\limits_{x_{l'}\in \Lambda} p(\{t^Z\in\mathcal{T}^Z:t^Z=(x_l,x_{l'})\in S\times\{x_{l'}\}\})\\
		&\phantom{aaaaaaa}+\sum\limits_{x_{l'}\in \Lambda'} p(\{t^Z\in\mathcal{T}^Z:t^Z=(x_l,x_{l'})\in S\times\{x_{l'}\}\}) \nonumber\\
		=&\sum\limits_{x_{l'}\in \Lambda} p(\{t^Z\in\mathcal{T}^Z:t^Z=(x_l,x_{l'})\in S\times\{x_{l'}\}\})\\
		&\phantom{aaaaaaa}+\sum\limits_{x_{l'}\in \Lambda'} p(\{t^Z\in\mathcal{T}^Z:t^Z=(x_l,x_{l'})\in (S\setminus \Lambda_S')\times\{x_{l'}\}\}) \nonumber\\
		&\phantom{aaaaaaa}+\sum\limits_{x_{l'}\in \Lambda'} p(\{t^Z\in\mathcal{T}^Z:t^Z=(x_l,x_{l'})\in \Lambda_S'\times\{x_{l'}\}\}) \nonumber
	\end{align}
	The second term from the last equation is equal to zero by condition 3. Condition 2 implies that the only possible value that $x_l$ can take in the last term is $x_{l'}$. Therefore, the equation becomes:
	\begin{align}
		\mathbb{P}[X_i\in S|Z_i=z_0]=&\sum\limits_{x_{l'}\in \Lambda} p(\{t^Z\in\mathcal{T}^Z:t^Z=(x_l,x_{l'})\in S\times\{x_{l'}\}\})\\
		&\phantom{aaaaaaa}+\sum\limits_{x_{l'}\in \Lambda'} p(\{t^Z\in\mathcal{T}^Z:t^Z=(x_l,x_{l})\textit{ such that } x_l\in\Lambda'\}) \nonumber\\
		\leq& \mathbb{P}[X_i\in\Lambda|Z_i=z_1]+\sum\limits_{x_l\in\Lambda'} p((x_l,x_l))\\
		\leq& \mathbb{P}[X_i\in\Lambda|Z_i=z_1]+\sum\limits_{x_l\in\Lambda'} \Psi_{x_l}
	\end{align}
	The first inequality follows from consistency of $p$ with observed probabilities and the fact that response types that realize treatments outside of $S$ when $Z=0$ might also realize treatments in $\Lambda$ when $Z=1$ along with the fact that $L_{\geq^R}(x_{l})^C\setminus\{x_{l}\}\subset\Lambda$ for $x_l\in\Lambda'$. The second inequality follows from theorem \ref{Theorem:SharpBinary}.
	
	\underline{$\Longleftarrow$} 
	
	As for part 1, I prove the contrapositive statement. That is, I suppose that a cut with capacity less than 1 exists and show that it implies that inequality (\ref{Equation:GeneralizedFOSD}) from theorem (\ref{Theorem:FOSD}) is not satisfied for some $S\subset\mathcal{X}$.
	
	Now the capacities of edges of the form $(x_l^0,x_l^1)$ are potentially binding. For cuts that only contain edges connected to $s^*$ or $t^*$ the argument from part 1 still applies. Edges of the form $(x_l^0,x_{l'}^1)$ for $x_l\neq x_{l'}$ still have capacity 1 and therefore will not belong to a minimum cut that is strictly less than one. 
	
	Suppose that $C$ is a minimum cut with at least one element of the form $(x_l^0,x_l^1)$.
	
	Let $S'$ be the set of elements in $x_l\in\mathcal{X}$ such that $(s^*,x_l^0)\in C$. Define $S=\mathcal{X}\setminus S'$. Note that for any $x_{l}\in S$ and any $x_{l'}\in L_{\geq^R}(x_l)^C$ one of the following sets of conditions must hold:
	
	\begin{enumerate}
		\item The following holds:
		\begin{itemize}
			\item $(x_{l'}^1,t^*)\in C$
		\end{itemize}
		\item The following three conditions hold
		\begin{itemize}
			\item $x_l=x_{l'}\in S\cap L_{\geq^R}(S)^C$
			\item $(x_{l'}^0,x_{l'}^1)\in C$.
			\item $(x_{l''}^1,t^*)\in C$ for all $x_{l''}\in L_{\geq^R}(x_{l'})^C\setminus\{x_{l'}\}$
		\end{itemize}
	\end{enumerate}

	The first condition prevents paths of the form $(s^*,x_{l}^0),(x_{l}^0,x_{l'}^1),(x_{l'}^1,t^*)$ as in part 1. 
	
	If condition 1 is not satisfied and $(x_{l'}^1,t^*)\not\in C$, then, the path $(s^*,x_{l}^0),(x_{l}^0,x_{l'}^1),(x_{l'}^1,t^*)$ must be cut at the middle at edge $(x_{l}^0,x_{l'}^1)$. Because edges associated to $x_l\neq x_{l'}$ cannot belong to the cut, it must be that $x_l=x_{l'}\in S'\cap L_{\geq^R}(S')^C$. Furthermore, for every $x_{l''}\in L_{\geq^R}(x_l)^C\setminus\{x_l\}$, the path $(s^*,x_l^0),(x_l^0,x_{l''}^1),(x_{l''}^1,t^*)$ must be blocked at some point. By construction, this can only happen at $(x_{l''}^1,t^*)$, thus, $(x_{l''}^1,t^*)\in C$. 
	
	In other words, a path $(s^*,x_{l}^0),(x_{l}^0,x_{l'}^1),(x_{l'}^1,t^*)$ might be blocked by condition 1, if not, then, conditions 2 must hold.
	
	Let $\Lambda'_S$ be the set $\Lambda'_S=\{x_l\in\mathcal{X}: (x_l^0,x_l^1)\in C\}\subset  L_{\geq^R}(S)^C\cap S $. Set $\Lambda_S=L_{\geq^R}(S)^C\setminus \Lambda'$. By construction the sets $\Lambda'_S$ and $\Lambda_S$ partition $L_{\geq^R}(S)^C$.
	
	Note that $\{x_{l'}\in\mathcal{X}:(x_{l'},t^*)\in C\}$ must be a subset of $L_{\geq^R}(S)^C$, otherwise, $x_{l'}^1$ would not be reachable from $\{x_l^0:x_l\in S\}$ and eliminating $(x_{l'},t^*)$ from $C$ would attain a cut with strictly lower capacity. Thus, $\{x_{l'}\in\mathcal{X}:(x_{l'},t^*)\in C\}\subset\Lambda_S$.
	
	Similarly, note that if $x_{l'}\in\Lambda_S$ it must be that $x_l\in \{x_{l'}\in\mathcal{X}:(x_{l'},t^*)\in C\}$, otherwise, a path through some $x_l\in S$ such that $L_{\geq^R}(x_l)^C\cap \Lambda_S\neq\emptyset$ to $x_{l'}$ would connect $s^*$ to $t^*$. Hence: 
	\begin{align}
		\{x_{l'}\in\mathcal{X}:(x_{l'},t^*)\in C\}=\Lambda \label{Equation:SetEqualityForFOSD}
	\end{align}
	
	Finally, suppose that for some $x_l\in S\setminus\Lambda'_S$ exists some $x_{l'}\in L_{\geq^R}(x_l)^C\cap \Lambda_S'\neq\emptyset$. By definition it must be that $x_{l'}\neq x_l$. Then, the path $(s^*,x_l),(x_l,x_{l'}),(x_{l'},t^*)$ would connect $s^*$ to $t^*$. Hence, it must be that:
	\begin{align}
		L_{\geq^R}(S\setminus \Lambda'_S)^C\subset\Lambda_S \label{Equation:ContentionForFOSD}
	\end{align}

	Then:
	\begin{itemize}
		\item $\emptyset\neq \Lambda' \subset S$ (By conditions 2).
		\item For all $x_{l'}\in\Lambda'$, $L_{\geq^R}(x_l')^C\setminus\{x_{l'}\}\subset\Lambda$ (By conditions 2 and equation (\ref{Equation:SetEqualityForFOSD})).
		\item $L_{\geq^R}(S\setminus \Lambda'_S)^C\subset\Lambda_S$ by expression (\ref{Equation:ContentionForFOSD}).
	\end{itemize}

	Moreover, it must be that:
	\begin{align}
		\sum\limits_{c_e\in\Xi_C}c_e&< 1 \\
		\Longrightarrow \sum\limits_{c\in C\cap \{s^* \}\times S'}c_e+\sum\limits_{\{c\in C: c=(x_l^0,x_l^1)\textit{ s.t. } x_l\in\Lambda'\}}c_e+\sum\limits_{\{c\in C: c=(x_l^1,t^*)\textit{ s.t } x_l\in\Lambda\}}c_e&< 1\\
		\Longrightarrow \phantom{aaaaaaaaaaaaa} \sum\limits_{\{c\in C: c=(x_l^0,x_l^1)\textit{ s.t. } x_l\in\Lambda'\}}c_e+\sum\limits_{\{c\in C: c=(x_l^1,t^*)\textit{ s.t } x_l\in\Lambda\}}c_e&<1- \sum\limits_{c\in C\cap \{s^* \}\times S'}c_e\\
		\Longrightarrow \phantom{aaaaaaaaaaaaaaaaaaaa} \sum\limits_{\{c\in C: c=(x_l^0,x_l^1)\textit{ s.t. } x_l\in\Lambda'\}}c_e+\mathbb{P}[X_i\in\Lambda|Z=1]&< \mathbb{P}[X_i\in S|Z=z_0]
	\end{align}
	Where the first implication follows from the previous discussion, the second from rearranging terms and the last from the definition of the capacities.	
	
	Setting $c_e((x_l^0,x_{l'}^1))=\Psi_{x_l}$ completes the proof.
	
\end{proof}

\subsubsection{Proofs for subsection \ref{Subsection:Submonotonicity}} \label{Appendix:ProofSubmonotonicity}

To prove theorem \ref{Theorem:Submonotonicity} I begin by formally defining latent preference types. To this end, let $\mathcal{L}_s(\mathcal{X})$ denote the set of all strict preference relations over $\mathcal{X}$ and let $\mathcal{N}(x_l)=\{\succ\in\mathcal{L}_s(\mathcal{X}): x_l\succ x_{l'}\textit{ for all } x_{l'}\in\mathcal{X}\}$ denote the subset of strict preference relations that has $x_l$ as a maximizer.

\begin{definition}[Latent preference types]
	A latent preference type $t^{\succ}\in\mathcal{T}_{\succ}$ is a function that maps instrument realizations to strict preference profiles: $t^{\succ}:\mathcal{Z}\longrightarrow \mathcal{L}_s(\mathcal{X})$. 
	
	Note that $t^{\succ}(z)$ is a preference relation. Therefore, the equation $x_l t^{\succ}(z) x_{l'}$ reads as ``latent preference types $t^{\succ}$ strictly prefer $x_l$ over $x_{l'}$ when $Z_i=z$''. To improve readibility, I denote $t^{\succ}(z)$ as $\succ^{t(z)}$ so that the previous equation becomes  $x_l \succ^{t(z)} x_{l'}$.
\end{definition}

\begin{definition}[Distribution over latent preference types]
	A distribution over latent preference types is a function $h:\mathcal{H}\longrightarrow [0,1]$ that satisfies:
	\begin{align}
		h(t^{\succ})\geq 0 \textit{ for all } t^{\succ}\in\mathcal{T}_{\succ}\\
		\sum\limits_{t^\succ\in\mathcal{T}_{\succ}}h(t^\succ)=1
	\end{align} 
\end{definition}

\begin{remark}
	Let $h:\mathcal{T}_\succ\longrightarrow [0,1]$ be a distribution over latent preference types. The distribution over response types implied by $h$ is the function $p_h:\Omega\longrightarrow[0,1]$ defined as:
	\begin{align}
		p_h(x_l,x_{l'}))=\sum\limits_{t^{\succ}\in N(x_l,x_{l'})}h(t^\succ)
	\end{align}	
	Where $N(x_l,x_{l'})=\{t^{\succ}\in\mathcal{T}_\succ|t^{\succ}(0)\in\mathcal{N}(x_l) \textit{ and } t^{\succ}(1)\in\mathcal{N}(x_{l'})\}$.
\end{remark}	

\begin{definition}
	A distribution over latent preference types $h$ is $\geq^*$-Submonotone if $h(t^\succ)=0$ for every $t^\succ$ that violates point 3 of assumption \ref{Assumption:SM}. This is, types $t^\succ$ such that $x_l\phantom{a}\geq^*x_{l'}$ and $x_l\succ_i^{t(z_0)} x_{l'}$ but  $x_{l'}\succ_i^{t(z_1)} x_{l}$ are not allowed.
\end{definition}	

\begin{assumption}[\textbf{No $S$ always takers}]
	Let $S\subset\mathcal{X}$ and $R\subset\mathcal{T}_Z$ a collection of instrument-response types . R has no $S$ always takers if $(x_l,x_l)\in R$ for all $x_l\in S$.
\end{assumption}

\begin{lemma}\label{Lemma:SubMonotonicity}
	Let $h$ be a distribution over latent preference types. Let $R$ be a collection of instrument-response types. Let $\geq^*$ be a binary relation (not necesarilly transitive or complete) over $\mathcal{X}$. Let $Ref_{\geq^*}=\{x_l\in\mathcal{X}|x_l \geq^* x_l\}$ denote the elements that belong to the reflexive part of $\geq^*$. Let $\geq^* _{Ref}$ and $\geq^* _{Irr}$ denote the reflexive and irreflexive part of $\geq^* $ respectively. Then:
	\begin{enumerate}
		\item $h$ satisfies $\geq^* _{Irr}$-Submonotonicity if and only if $h$ satisfies  $\geq^*$-Submonotonicity.
		\item If $\geq^*_{Irr}\setminus{\geq^R}\neq\emptyset$, then, a distribution over latent preference types $h$ exists such that $h$ satisfies ${\geq^R}$-Submonotonicity and $p_{h}(t^Z)=0$ for all $t^Z\in R$ but $h$ is not ${\geq^*}-$Submonotone.
		\item If ${\geq^R}\setminus{\geq^*}_{Irr}\neq\emptyset$, then, a distribution over latent latent preference types $h$ exists that is ${\geq^*}$-Submonotone but $p_h(t^Z)>0$ for some $t^Z\in R$. 
		\item If ${\geq^R}_{Ref}\neq\emptyset$ a distribution over latent response types exists that is ${\geq^R}$-Submonotone but $p_h(t^Z)>0$ for some $t^Z\in R$.
		\item If ${\geq^R}$-Submonotonicity holds and no $R_{\geq^M}$ always takers holds, then, $p(t^Z)=0$ for all $t^Z\not\in R$.
		\item If a distribution over response types $p$ is consistent with observed treatment probabilities and $p(t^Z)=0$ for all $t^Z\in R$ exists, then, a distribution $h$ over latent preference types exists such that $h$ is ${\geq^R}$-Submonotone and $p=p_h$. This last equality implies that $h$ is consistent with the data and that if $p$ satisfies no $S$ always takers, so does $p_h$. 
	\end{enumerate}
\end{lemma}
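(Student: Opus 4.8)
The plan is to reduce all six parts to one preliminary observation and two explicit ``gadget'' latent preference types. The observation: by rationality (point 2 of Assumption \ref{Assumption:SM}), a latent preference type $t^{\succ}$ induces the instrument-response type $(x_l,x_{l'})$ exactly when $x_l$ maximizes $\succ^{t(z_0)}$ and $x_{l'}$ maximizes $\succ^{t(z_1)}$, and in that case $x_l\succ^{t(z_0)}x_{l'}$ and $x_{l'}\succ^{t(z_1)}x_l$; so when $x_l\neq x_{l'}$ the type reverses the pair $\{x_l,x_{l'}\}$ in the precise direction ``$x_l$ over $x_{l'}$ at $z_0$, $x_{l'}$ over $x_l$ at $z_1$'', whereas no such reversal is possible when $x_l=x_{l'}$ since strict preferences are irreflexive. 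Part 1 is immediate: a type violates point 3 of Assumption \ref{Assumption:SM} for $\geq^{*}$ iff it does for $\geq^{*}_{Irr}$ (the offending pair must have distinct coordinates), so $h$ annihilates all such types for one relation iff for the other.

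The first gadget is the \emph{swap type} $t^{\succ}_{x_l,x_{l'}}$, defined for $x_l\neq x_{l'}$ by fixing a linear order $\pi$ on $\mathcal{X}\setminus\{x_l,x_{l'}\}$ and letting $\succ^{t(z_0)}$ rank $x_l$ first, $x_{l'}$ second, then the rest by $\pi$, and $\succ^{t(z_1)}$ rank $x_{l'}$ first, $x_l$ second, then the rest by $\pi$; a one-line check shows $\{x_l,x_{l'}\}$ is the only pair it reverses, in the direction above, so for any binary relation $\geq$ it violates $\geq$-submonotonicity iff $x_l\geq x_{l'}$, and it induces the response type $(x_l,x_{l'})$. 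The second gadget is the \emph{constant type} $t^{\succ}_{x_l}$, equal at both instrument values to a fixed order with $x_l$ on top: it reverses no pair, hence is $\geq$-submonotone for every $\geq$, and induces $(x_l,x_l)$. Parts 2--4 are then point-mass constructions: for Part 2, take $(x_l,x_{l'})\in\geq^{*}_{Irr}\setminus\geq^{R}$ and $h=\delta_{t^{\succ}_{x_l,x_{l'}}}$, which is $\geq^{R}$-submonotone because $(x_l,x_{l'})\notin R$ forces $\neg(x_l\geq^{R}x_{l'})$, fails $\geq^{*}$-submonotonicity because $x_l\geq^{*}x_{l'}$, and has $p_h$ supported on the non-forbidden type $(x_l,x_{l'})$; Part 3 runs the symmetric argument from $(x_l,x_{l'})\in\geq^{R}\setminus\geq^{*}_{Irr}$, using the constant gadget when the chosen pair is on the diagonal; Part 4 uses $h=\delta_{t^{\succ}_{x_l}}$ for any $(x_l,x_l)\in\geq^{R}_{Ref}$.

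For Part 5, split $R$ into its off-diagonal and diagonal parts. If $(x_l,x_{l'})\in R$ with $x_l\neq x_{l'}$ then $x_l\geq^{R}x_{l'}$, and by the preliminary observation every latent preference type inducing $(x_l,x_{l'})$ reverses $\{x_l,x_{l'}\}$ against $\geq^{R}$; hence $\geq^{R}$-submonotonicity forces $h$ to vanish on all of them and $p_h(x_l,x_{l'})=0$. The diagonal elements $(x_l,x_l)\in R$ are exactly the always-takers excluded by the accompanying no-always-takers hypothesis, which closes the argument. Part 6 is the only step with genuine content: define $\sigma$ on $\mathcal{T}_Z\setminus R$ by $\sigma(x_l,x_{l'})=t^{\succ}_{x_l,x_{l'}}$ for $x_l\neq x_{l'}$ and $\sigma(x_l,x_l)=t^{\succ}_{x_l}$, and set $h(t^{\succ})=\sum_{(x_l,x_{l'}):\,\sigma(x_l,x_{l'})=t^{\succ}}p(x_l,x_{l'})$. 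Since $(x_l,x_{l'})\notin R$ gives $\neg(x_l\geq^{R}x_{l'})$, the swap-gadget property makes every type in the support of $h$ $\geq^{R}$-submonotone, so $h$ is $\geq^{R}$-submonotone; and because $\sigma$ is injective (its value recovers both maximizers) and its image meets each $N(x_l,x_{l'})$ in at most the single type $\sigma(x_l,x_{l'})$ — and does so iff $(x_l,x_{l'})\notin R$ — we obtain $p_h=p$, from which consistency of $h$ with the data and the transfer of the no-always-takers property follow at once.

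The main obstacle is not any one computation but the bookkeeping around reflexive pairs: $\geq^{R}$-submonotonicity is, by definition, a constraint on preference \emph{reversals} and is therefore blind to always-takers, so the diagonal of $R$ must be threaded through separately — this is precisely why Part 4 is a genuine counterexample and why Part 5 needs an extra hypothesis. The single delicate point in the gadget construction is the \emph{direction} of the reversal: the swap type forbids ``$x_l$ over $x_{l'}$ then $x_{l'}$ over $x_l$'' but says nothing about the opposite reversal, and it is this asymmetry that makes the arguments valid for the possibly intransitive, incomplete relations allowed in the statement.
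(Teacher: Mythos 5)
Your proof is correct and follows essentially the same route as the paper's: the same top-two-swap preference gadget and constant-type gadget, the same point-mass counterexamples for parts 2--4, the same rationality-forces-a-reversal argument for part 5, and the same type-by-type assignment of gadgets to establish $p_h=p$ in part 6. Your version is in fact more careful than the paper's about the diagonal/reflexive cases and about which convention governs membership in $R$ versus $\geq^{R}$, but the underlying ideas coincide.
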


\begin{proof}[Proof of lemma \ref{Lemma:SubMonotonicity}]\phantom{a}\\
	\textbf{Part 1:}\\
	Suppose $x_l\geq^* x_{l'}$ and $x_l\succ_1^{(0)}x_{l'}$. If $x_l=x_{l'}$, then $x_l\succ_1^{(1)}x_{l'}$ holds trivially, otherwise, $\geq^*_{Irr}-QM$ implies that $x_l\succ_1^{(1)}x_{l'}$. The converse is trivial since $\geq^*_{Irr}\subset \geq^*$.\\
	\textbf{Part 2:}\\
	Take the ordering induced by the indexing of the elements of $\mathcal{X}$. Let $\succ^{x_l,x_{l'},*}$ be the preference relation that ranks $x_l$ as the best alternative, $x_{l'}$ as the second best, and all remaining alternatives according to their index: $x_l\succ^{x_l,x_{l'},*}x_{l'}\succ^{x_l,x_{l'},*}x_1\succ^{x_l,x_{l'},*}...\succ^{x_l,x_{l'},*}x_{L-1}$.\\
	\\
	Suppose that $x_l\neq x_{l'}$ and $(x_l,x_{l'})\in\geq^*_{Irr}\setminus\geq^R$. Let $t^\succ$ be the unique latent preference type that satisfies $t^\succ(0)=\succ^{x_l,x_{l'},*}$ and $t^\succ(1)=\succ^{x_{l'},x_l,*}$. Define $h$ as:
	\begin{align}
		h(t^\succ)=1&\\
		h(t^{\succ'})=0&\phantom{aaaaaa}\textit{ for all }t^{\succ'}\neq t^\succ
	\end{align}	
	Note that $(x_l,x_{l'})\not\in\geq^R$ implies that $h$ is $\geq^R-$Submonotone, but $h$ is not $\geq^*-$Submonotone (since preference reversals between $x_l$ and $x_{l'}$ occur with positive probability). Moreover, note that $p_h$ assigns zero probability to response types outside of $R$.
	\\
	\textbf{Part 3:}\\
	If $Ref_{\geq^R}$ is non-empty, a reasoning analogous to part 1 proves the claim. Otherwise, pick some $x_l\in Ref_{\geq^*}$ and define $h$ as follows:
	\begin{align}
		h((\succ^{x_{l},*},\succ^{x_{l},*}))=1&\\
		h(t^Z)=0 &\phantom{aaaaaa}\textit{ for all } t^Z\neq(\succ^{x_{l},*},\succ^{x_{l},*})
	\end{align}	
	$h$ is $\geq^*$-QM but $p_h$ assigns positive probability to the response type $(x_l,x_l)\not\in R$.\\
	\textbf{Part 4}\\
	This part follows immediately from part 3 by setting $\geq^*_{Irr}=\emptyset$.\\
	\textbf{Part 5:}\\
	Let $R'$ be the set of response types allowed by $\geq^{R}$-QM and no stayers at $Ref_{\geq^R}$. The goal is to show that $R\subset R'$.\\
	\\
	Suppose that $(x_l,x_{l'})\not\in R$. This implies that $x_l \geq^R x_{l'}$.\\
	\\
	If $x_l\neq x_{l'}$, consider a decision maker $i$ such that $X_i(0)= x_{l'}$, then, rationality implies that $x_{l'}=X_i(0)\succ_i^{(0)} x_{l}$. Therefore, by $\geq^R$-Submonotonicity, it must be that $x_{l'}\succ_i^{(1)}x_{l}$. This implies that $x_{l'}$ is not chosen by decision maker $i$ when $Z=1$ a.s.. Therefore $(x_{l},x_{l'})\not\in R'$.\\
	\\
	If $x_{l'}= x_{l}$, then, no stayers at $Ref_{\geq^R}$ implies that $(x_{l},x_{l'})\not \in R'$.\\
	\\
	Therefore, $t^Z\not\in R\Longrightarrow t^Z\not\in R'$ which is equivalent to  $t^Z \in\R'\Longrightarrow t^Z\in R$.\\
	\textbf{Part 6:}\\
	Let $t^\succ_{x_l,x_{l'},*}$ be the latent preference type such that $t^\succ_{x_l,x_{l'},*}(0)=\succ^{x_l,x_l{'},*}$ and $t^\succ_{x_l,x_{l'},*}(1)=\succ^{x_{l'},x_l,*}$. Define $h$ as follows:
	\begin{align}
		h(t^\succ)=\begin{cases} p(x_l,x_{l'}) &\textit{ if } t^\succ=t^\succ_{x_l,x_{l'},*}\\
			0 & otherwise
		\end{cases}
	\end{align}
	Note that $h$ is $\geq^R$-QM and that  for all $(x_l,x_l')\in R_Z$:
	\begin{align}
		p_h((x_l,x_l'))=\sum\limits_{t^\succ \in N(x_l,x_l')}h(t^\succ)=h(t^\succ_{x_l,x_{l'},*})=p(x_l,x_{l'})
	\end{align}
	
\end{proof}

Now I comment Lemma 1. The first part shows that only the irreflexive part of binary relations is relevant for submonotonicity. Part 2 shows that when a binary relation $\geq^*$ is stronger than the one induced by $R$, restricting response types to $R$ does not guarantee $\geq^*$-submonotonicity. Parts 3 and 4 show the converse, that imposing weaker than than $\geq^R$ dos not guarantee that response types outside of $R$ are ruled out. Part 5 shows that $R$ can be guaranteed by imposing $\geq^R$-submonotonicity and ruling out a specific set of always takers. 6 shows that $\geq^R$-submonotonicity is only falsified if the response type restriction associated to $R$ is falsified. Thus, in the binary instrument case $\geq^R$-submonotonicity  is just an interpretation for response type restrictions but yields no additional identifying power.  

Theorem \ref{Theorem:Submonotonicity} follows immediately from lemma 6 by noting there is a unique correspondence between $R$ and $\geq^R$ and therefore their cardinalities.

\subsection{Proofs for section \ref{Section:ManyIV}}

\begin{proofprop}[Proof of proposition \ref{Proposition:UpperBoundOnFT}]\phantom{a}\\
	
	First observe that for every $A\in\mathcal{B}$, every non-empty $\tilde{Z}\subset\mathcal{Z}$ and $x_l\in\mathcal{X}$, if $z_k\in\tilde{Z}$, then:
	\begin{align}
		\mathbb{P}[Y_i\in B_Y,X_i=x_l|Z_i=z_k]&\geq \mathbb{P}[Y_i(x_l)\in B_Y,X_i(\tilde{z})=x_l\phantom{a}\forall \tilde{z}\in\tilde{Z}|Z_i=z_k]\\
		\Longrightarrow\phantom{aaaaaa}\mathbb{P}[Y_i\in B_Y,X_i=x_l|Z_i=z_k]&\geq \mathbb{P}[Y_i(x_l)\in B_Y,X_i(\tilde{z})=x_l\phantom{a}\forall \tilde{z}\in\tilde{Z}]\\
		\Longrightarrow \underset{z_k\in\tilde{Z}}{\textit{min }}\big\{\mathbb{P}[Y_i \in B_Y,X_i=x_l|Z_i=z_k]\big\}&\geq \mathbb{P}[Y_i(x_l)\in B_Y,X_i(\tilde{z})=x_l\phantom{a}\forall \tilde{z}\in\tilde{Z}]\\
		\Longrightarrow\phantom{aaaaaaaaa}\underset{z_k\in\tilde{Z}}{min}\big\{\int\limits_{A}\phi_{Y|x_l,z_k}(y)d\mu_y\big\}&\geq\mathbb{P}[Y_i(x_l)\in B_Y,X_i(0)=x,X_i(\tilde{z})=x_l\phantom{a}\forall \tilde{z}\in\tilde{Z}]\label{Equation:CapConstraint}
	\end{align}	
	
	The initial inequality is true because outcomes in $B_Y$ might be realized by response types other than the IFF-takers associated to $\tilde{Z}$. The first implication follows from exogeneity of potential outcomes, the second is true because the inequality must hold for every individual $z_k\in\tilde{Z}$. The last simply applies the definition of $\phi_{Y|x_l,z_k}(y)$.
	
	Note that equation \ref{Equation:CapConstraint} must hold for every measurable set $B_Y\in\mathcal{B}_Y$. The next step is to transform these equations into a single restriction. 
	
	To this end, consider the set of all orderings over $\tilde{Z}$ denoted $\mathcal{O}(\tilde{Z})=\big\{\kappa:\{0,1,...,|\tilde{Z}|-1\}\longrightarrow\tilde{Z} : \exists \phantom{a}\kappa^{-1}\big\}$. For compactness, for every $\pi\in\mathcal{O}(\tilde{Z})$ and every $k\in\{0,1,...,|\tilde{Z}|-1\}$ I will denote the $k+1$-th element of $\tilde{Z}$ according to $\kappa$ as $\kappa_k$ (instead of $\kappa(k)$).
	
	Then, for every $x_l$ and $\kappa\in\mathcal{O}(\tilde{Z})$, define the set $Y_{x_l}^{\pi}$ as follows:
	\begin{align}
		Y_{x_l}^{\kappa}=&\Big\{y\in\mathcal{Y}: \phi_{Y|X=x_l,Z=\kappa_0}(y)\leq \underset{k=0,2,...,|\tilde{Z}|-1-1}{min} \{\phi_{Y|X=x_l,Z=\kappa_k}(y) \}\Big\}\\
		&\phantom{aaaaa}\bigcap \Big\{y\in\mathcal{Y}: \phi_{Y|X=x_l,Z=\kappa_1}(y)\leq \underset{k=1,...,|\tilde{Z}|-1-1}{min} \{\phi_{Y|X=x_l,Z=\kappa_k}(y) \}\Big\}\\
		&\phantom{aaaaa}...\bigcap \Big\{y\in\mathcal{Y}: \phi_{Y|X=x_l,Z=\kappa_m}(y)\leq \underset{k=m,...,|\tilde{Z}|-1}{min} \{\phi_{Y|X=x_l,Z=\kappa_k}(y) \}\Big\}\\
		&\phantom{aaaaa}...\bigcap \Big\{y\in\mathcal{Y}: \phi_{Y|X=x_l,Z=\kappa_{K-2}}(y)\leq \underset{k=|\tilde{Z}|-1-2,|\tilde{Z}|-1-1}{min} \{\phi_{Y|X=x_l,Z=\kappa_k}(y) \}\Big\}
	\end{align}
	This is, the set in which the values that the functions $\phi_{Y|x_l,z_k}(y)$ take are sorted by magnitude in exactly in the same order that $\kappa$ sorts the instrument values. Note that since the functions $\phi_{Y|x_l,z_k}(y)$ are point identified, the sets $Y_{x_l}^{\kappa}$ are  identified too. Moreover, the sets $Y_{x_l}^{\kappa}$ are the intersection of sets defined by inequalities that depend on measurable functions, thus, they are measurable themselves. 
	
	Then:
	\begin{align}
		\mathbb{P}[Y_i\in B_Y,X_i(\tilde{z})=x_l\phantom{a}\forall \tilde{z}\in\tilde{Z}|Z_i=z]&=\sum\limits_{\kappa\in\mathcal{O}(\tilde{Z})}\mathbb{P}[Y_i\in Y_{x_l}^{\kappa} \cap B_Y,X_i(\tilde{z})=x_l\phantom{a}\forall \tilde{z}\in\tilde{Z}|Z_i=z]\\
		&\leq\sum\limits_{\kappa\in\mathcal{O}(\tilde{Z})} \underset{z_k\in\tilde{Z}}{min}\big\{\int\limits_{Y_{x_l}^{\kappa}\cap B_Y}\phi_{Y|x_l,z_k}(y)d\mu_Y\big\}\\
		&=\sum\limits_{\kappa\in\mathcal{O}(\tilde{Z})} \int\limits_{Y_{x_l}^{\kappa}\cap B_Y}\phi_{Y|x_l,z_{\kappa_0}}(y)d\mu_Y\\
		&=\sum\limits_{\kappa\in\mathcal{O}(\tilde{Z})} \int\limits_{Y_{x_l}^{\kappa}\cap B_Y}\psi_{x_l,\tilde{Z}}(y)d\mu_Y
	\end{align}	
	
	The first line follows from the fact that the sets $Y_{x_l}^{\kappa}$ comprise a partition of $\mathcal{Y}$. The inequality in the second line follows from equation \ref{Equation:CapConstraint}. The equality in the third line is crucial, it is true because the sets $Y_{x_l}^{\kappa}$ are constructed so that $\phi_{Y|x_l,\kappa_0}(y)\leq \phi_{Y|x_l,z_k}(y)$ for every $y\in Y_{x_l}^{\kappa}$ and every $z_k\in\tilde{Z}$. Moreover, in $Y_{x_l}^{\kappa}$ it holds that $\psi_{x_l,\tilde{Z}}(y)=\phi_{Y|x_l,z_{\kappa_0}}(y)$ which proves the third line. Using again the fact that the sets $Y_{x_l}^{\kappa}$ partition $\mathcal{Y}$ yields the last equality.
\end{proofprop}

\section{Examples of instrument response-restrictions encompassed by the framework}\label{Appendix:RestrictionExamples}

\textbf{Multivalued ordered monotonicity:} Ordered monotonicity captures the idea that treatment values are strictly ordered (e.g, by intensity) and that as the instrument increases, units may only switch to weakly higher treatment levels. That is, for any pair of indices $k'>k>0$, response types such that $(t^Z)_k>(t^Z)_{k'}$ are ruled out. This restriction can be easily accommodated in the framework of assumption \ref{Assumption:SelectionModel} by imposing that response types as the one described before occur with zero probability. 

In their seminal paper about IV with heterogeneous treatment effects, \cite{angristImbens1994late} showed that if both the instrument and treatment are binary, and treatment is weakly monotonic in the instrument, the classical Wald estimand identifies a LATE for units that strictly increase their realized treatment when the instrument changes. This result has motivated the use of binary instruments and treatments in a wide range of empirical applications.

In a subsequent paper, \cite{angrist1995two} extended their analysis to multivalued ordered treatments. In this setting, no  LATE is point identified, but the Wald estimand continues to identify a convex combination of LATEs among units that are induced to take strictly higher treatments. 

Despite the interpretation provided by \cite{angrist1995two}, it is common practice in applied work to binarize monotone multivalued treatments. That is, the researcher selects a threshold $\bar{x}\in\mathcal{X}$ and defines a binary treatment indicator $X_i^{Bin}=\mathbf{1}_{\{X_i\geq\bar{x}\}}$, thereby reducing the setting to the binary treatment case. A justification for this practice is that the resulting Wald estimand has a simpler interpretation.

However, this practice has two limitations, first, there are instances in which the binarized model is rejected by a specification test, even though the correctly specified multivalued model is not. Second, after binarization, the Wald estimand identifies a weighted sum of LATEs for certain response types, but there is no guarantee that the weights add up to one. As a result, the estimand cannot generally be interpreted as LATE or even as a convex combination of LATEs. An in-depth analysis of the second limitation is provided in \cite{rose2024recoding}.

To illustrate the first limitation, suppose that $\mathcal{X}=\{x_0,x_1,x_2\}$, $\mathcal{Y}=\{0,1\}$ and $\mathcal{Z}=\{z_0,z_1\}$. Assume that half the population has the instrument-response type $(x_0,x_1)$, one fourth $(x_1,x_2)$, and the last fourth $(x_2,x_2)$%
\footnote{Recall that the instrument-response type $t=(x_l,x_{l'})$ represents units that realize treatment $x_l$ when $Z=z_0$ and $x_{l'}$ when $Z=z_1$.}.
Suppose that potential outcomes are homogeneous so that $Y_i(x_0)=0$, $Y_i(x_1)=1$ and $Y_i(x_2)=1$ for all $i\in\mathcal{I}$. Now suppose that a researcher sets $\bar{x}=x_2$. It is easy to see that units with  instrument-response type $(x_0,x_1)$ will not change their treatment status with respect to the artificially defined treatment variable $X^{Bin}$ as $Z$ increases, but, their potential outcomes will, violating the exclusion restriction in the binarized model.

Furthermore, \cite{kitagawa2015test} shows that if instrument validity holds in the binary case, the following equation must be satisfied:
\begin{align}
	\mathbb{P}[Y_i=1,X_i^{Bin}=0|Z=0]\geq\mathbb{P}[Y_i=1,X_i^{Bin}=0|Z=1]
\end{align}	
In the example, the left hand side is equal to $\frac{1}{4}$ while the right hand side is $\frac{1}{2}$, thus, instrument validity would be rejected for the binarized model, even if it holds in the true multivalued model with the right (multivalued ordered monotone) instrument-response restriction.

Furthermore, as studied by \cite{rose2024recoding}, even if equation \ref{Equation:BinaryExample} is not violated, the Wald estimand obtained using $Z$ as an instrument for $X^{Bin}$ may be difficult to interpret. To see this problem, suppose that the instrument-response types $(x_0,x_1)$, $(x_1,x_2)$, and $(x_2,x_2)$ occur with equal probability $\frac{1}{3}$ in the population. In this case, the Wald estimand becomes: 
\begin{align}
	W=\frac{\frac{1}{3}\mathbb{E}[Y_i(x_1)-Y_i(x_0)|i\in (x_0,x_1)]+\frac{1}{3}\mathbb{E}[Y_i(x_2)-Y_i(x_1)|i\in (x_1,x_2)]}{\frac{1}{3}}\label{Equation:BinaryExample}
\end{align}	
While the numerator is a weighted sum of LATEs, the weights do not necessarily sum to one. Consequently, the estimand $W$ cannot be interpreted as LATE or a convex combination of LATEs. Moreover, if instead of $\frac{1}{3}$, the denominator, i.e. the population probability associated with instrument-response type $(x_1,x_2)$, was small, the magnitude of the Wald estimand could be arbitrarily large. Thus, even if binarizing a monotone multivalued treatment may simplify the model, it can produce estimands that are both biased and difficult to interpret.

These concerns provide a strong motivation for retaining a correctly specified monotone model when the treatment is multivalued. 

Generalizing monotonicity to settings with multiple instruments is nontrivial. \cite{angrist1995two} proposed one approach, which imposes binary monotonicity in one direction between every pair of treatment values. This definition was later criticized by 
\cite{mogstad2021causal} who proposed an alternative extension. More recently, \cite{goff2020vector} introduced a tractable notion --vector monotonicity-- which ensures identification of a series of LATEs. 

Classical monotonicity and its three aforementioned generalizations can be expressed as restrictions that rule out specific response types, that is, they fall within assumption \ref{Assumption:SelectionModel}.  

\textbf{Unordered monotonicity and its refinements:} \cite{heckman2018unordered} proposed an alternative notion of monotonicity for settings in which no natural ordering over treatment values exists. Intuitively, unordered monotonicity requires that, as the instrument changes from one value to another, a subset of treatments $X^{I}\subset\mathcal{X}$ are equally incentivized relative to the treatments in $\mathcal{X}\setminus\mathcal{X}^I$. This restriction not only implies that take-up of the incentivized treatments must weakly increase but also rules out unit-level switches within the set of promoted treatments $\mathcal{X}^I$, or its complement $\mathcal{X}\setminus\mathcal{X}^I$. This is another restriction on heterogeneity obtained from ruling out particular response types and thus, it can be accommodated within the framework of this paper.

\cite{heckman2018unordered} argue that the identifying power of unordered monotonicity can be strengthened by further restricting the set of admissible response types. For example, suppose that $x_0\in\mathcal{X}^I$ represents a discounted car, and $x_1,x_2\in\mathcal{X}\setminus \mathcal{X}^I$ represent, respectively, a similar full-priced car and a pick-up truck. One could impose the restriction that only individuals who would have chosen $x_1$ in the absence of the discount may switch to $x_0$ as a result of the discount, ruling out switches from $x_2$ to $x_0$. Unordered monotonicity augmented with this type of restriction remains within the framework of this paper. 

\textbf{Other response type restrictions:} The identifying power of response type restrictions extends beyond ordered and unordered monotonicity. \cite{goff2024does} shows that the ability of restrictions on treatment-response heterogeneity to point indetify some LATE without restricting the outcome space is underpined by the set of treatment-response types it allows. Leveraging this insight, \cite{goff2024does} identifies a series of novel response-type restrictions that guarantee point identification of at least one LATE, and a procedure to find other such restrictions.

To advance the interpretability of IV models with multiple instruments and treatments, \cite{lee2020treatment} investigated identification when multiple instruments are available and each instrument targets (i.e. incentivizes) adoption of one particular treatment more than the rest. In particular, \cite{lee2020treatment} provide a general framework and then --to achieve point or partial identification-- introduce the notions of strict targeting, and one-to-one targeting. Multiple instances of these two assumptions fit into the framework studied in this paper.

In a closely related paper, \cite{bai2024sharp} study the case in which every treatment value is uniquely incentivized by a single instrument. For this particular targeting structure they show that multiple LATEs are point identified and derive the sharp testable implications of the model. However, the assumption that each instrument uniquely targets one --and only one-- treatment (without affecting any other comparison) might be overly restrictive in multiple empirical settings. The framework developed in this paper can be used to assess the validity of the complete model proposed by \cite{bai2024sharp}, as well as more plausible relaxations.

\textbf{Other linear restrictions on instrument-response types:} Rather than ruling out specific response types, a less restrictive approach to IV identification is to place bounds on the prevalence or relative frequency of certain behaviors or instrument-response types. A notable example is the ``more compliers than defiers'' condition of \cite{de2017tolerating}. Another example is \cite{richardson2010analysis}, who derive bounds on the outcome distribution under assumptions on the fraction of always takers in the population.

Similar assumptions are also used as intermediate steps in sensitivity analysis methods that assess the robustness of IV estimands to relaxations of monotonicity (\cite{huber2014sensitivity}, \cite{noack2021sensitivity}, \cite{yap2025sensitivity}). The tests from this paper can be used to falsify these types of restrictions on their own or combined with response-type restrictions.

\section{Examples of submonotonicity}\label{Appendix:Submonotonicity}

\textbf{Ordered monotonicity:} Ordered monotonicity captures the notion that an implicit ranking (a total order) of alternatives exists, and, as the instrument increases, adoption of higher ranked alternatives is weakly incentivized. This assumption is implicit in most random utility models in which a mean structural utility function is common to all members of the population and unit-specific unobservables are additively separable. Thus, holding unobserved heterogeneity fixed, as observables change all decision makers agree on which alternatives yield higher utility levels and thus are being promoted. 

Because subomotonicity allows for general binary relations, ordered monotonicity is a particular case.

\FloatBarrier
\begin{figure}[h!]
	\centering
	\includegraphics[width=0.5\linewidth]{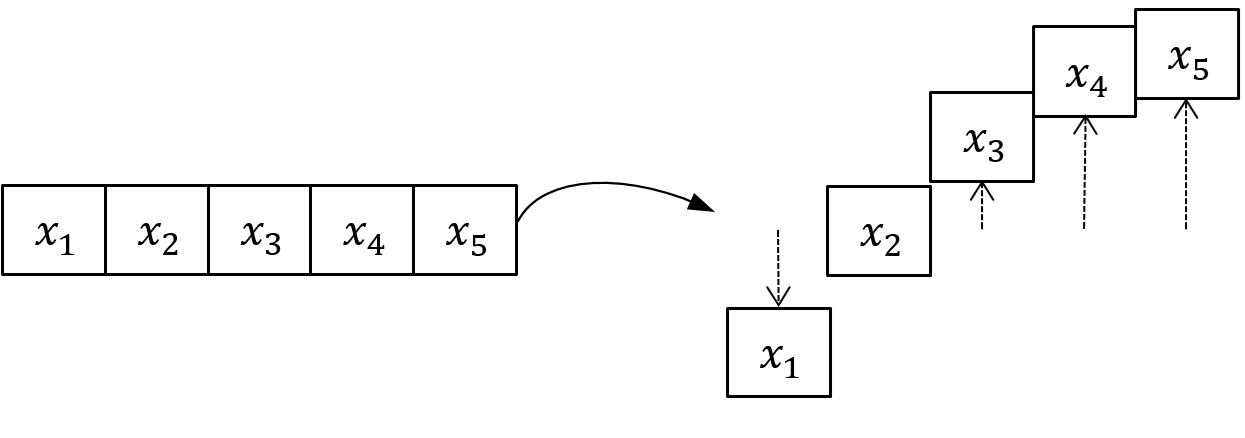}
	\caption{Ordered monotonicity}
	\footnotesize \raggedright \textit{\textbf{Notes:}} Visual representation of ordered monotonicity. The left of each figure represents alternatives before the intervention ($Z=z_0$). The right represents the effect of the intervention ($Z=z_1$). The higher an alternative is depicted on the left side, the more it is being incentivized. The black arrows represent the relative magnitude of the incentives.
\end{figure}
\FloatBarrier

\textbf{Unordered monotonicity:} Unordered monotonicity seeks to abstain from imposing an ex-ante ordering of alternatives and instead simply acknowledging that instruments incentivize adoption of some treatments relative to the rest. This idea is formalized by assuming that a subset of treatments $S\subset\mathcal{X}$ exists such that all alternatives in $S$ are being promoted and no alternative in its complement $S':=\mathcal{X}\setminus S$ is. 

While this assumption is general in spirit and in the multiple instruments setting unordered monotonicity is not implied nor implies ordered monotonicity, in the binary instrument case, it is a strictly more restrictive assumption. The reason is that unorderdered monotonicity not only requires that alternatives in $S$ are being promoted, but also that they are being promoted with exactly the same intensity so that switches within the set $S$ (and its complement) occur with zero probability. 

The previous point is easy to see by noting that the binary relation associated to unordered monotonicity is defined as follows:

\begin{align}
	x_l\geq^R x_{l'} \Longleftrightarrow \begin{cases}	 x_l,x_{l'}\in S & or \\
		x_l,x_{l'}\in S'& or   \\
		x_l\in S, x_{l'}\in S' \label{Equation:UOM}
	\end{cases}
\end{align}
This restriction induces a partial order with two indifference classes $S$ and $S'$, so that if $x_l,x_{l'}\in S$ then $x_l\geq ^Rx_{l'}$ and $x_{l'}\geq^R x_{l}$. The name partial order is misleading in this setting --it suggests weakening a total order--. The weakening is achieved by imposing that different treatments belong to the same equivalence class, that is, adding more ordered pairs to the binary relation which imposes additional restrictions on heterogeneity. 
\FloatBarrier
\begin{figure}[h!]
	\centering
	\includegraphics[width=0.5\linewidth]{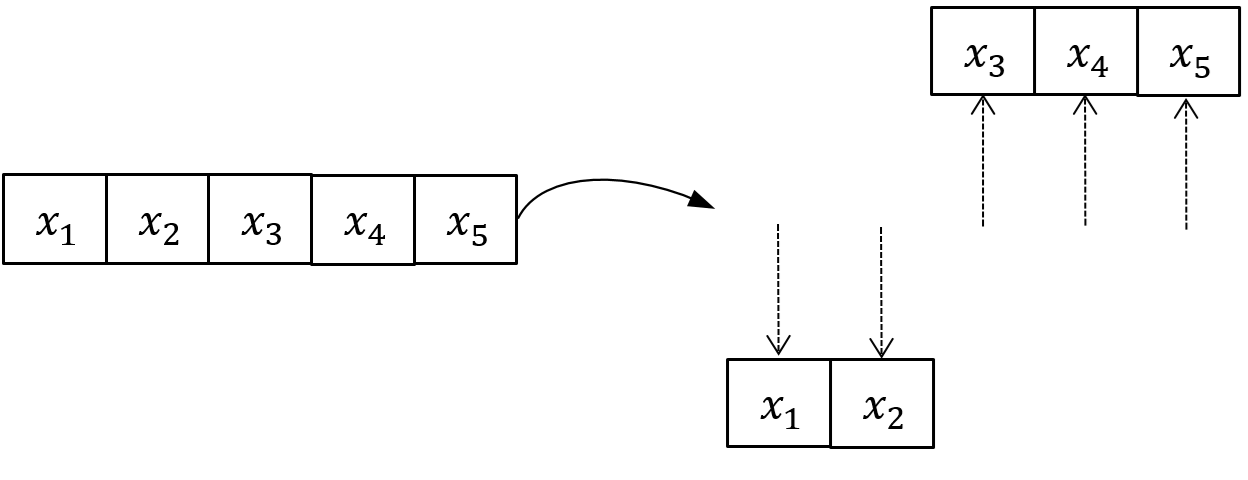}
	\caption{Unordered monotonicity}
	\footnotesize \raggedright \textit{\textbf{Notes:}} Visual representation of unordered monotonicity. The left of each figure represents alternatives before the intervention ($Z=z_0$). The right represents the effect of the intervention ($Z=z_1$). The higher an alternative is depicted, the more it is incentivized. The black arrows represent the relative magnitude of the incentives. The set $S=\{x_3,x_4,x_5\}$ represents the alternatives that are being incentivized relative to $S'=\{x_1,x_2\}$. Note that comparisons between alternatives within the same set are not altered.
\end{figure}
\FloatBarrier
Furthermore, note that if $S$ has only one element $s$ (i.e. $|S|=1$), then, an instrument that encourages adoption of $s$ (\cite{lee2020treatment}) or strictly targets $s$ (\cite{bai2024sharp}) satisfies a particular case of unordered monotonicity. Targeting or encouragement are not implied nor imply unordered monotonicity in the multiple instruments case. 

The sets $S$ and $S'$ could be further partitioned into three sets instead of two. For example, let $S=S_1\cap S_2$ and $S'=S_1'\cap S_2'$ and let alternatives in $S_1$ and $S'_1$ represent compact cars and alternatives in $S_2$ and $S_2'$ SUVs. Suppose that $S$ represents the set of all discounted vehicles (SUVs and compact) and $S'$ the set of vehicles without a discount. In this setting, one plausible restriction is that the discount on compact cars only affects decision makers who, absent the discount, would buy alternatives in $S_1'$; and the discount on SUVs only affects decision makers who would pick alternatives in $S_2'$. Thus, switches from $S_1'$ to $S_2$ or $S_2'$ to $S_1$ are assumed to occur with zero probability. This restriction is no longer represented by a partial order, but it is derived from adding ordered pairs to the partial order described in equation \ref{Equation:UOM}, thus, it is a submonotone restriction associated to the extension of a partial order.    

\textbf{Relaxations of unordered monotonicity:} The assumption that all alternatives are equally discounted might be too strong, instead, one could assume that there are $N$ discount bins and that comparisons within the same bin are not affected but comparisons across different bins are --always in favor of alternatives with larger discounts. For example, if $N=3$ there could be a set of heavily discounted cars, a set of cars with payment facilities, and a set of cars that have to be payed upfront. This situation also corresponds to submonotonicity with respect to a partial order but now with three indifference classes. 

Additional restrictions, as the one considered in the compact cars and SUVs examples could also be added across all three discount bins. The resulting restriction would remain within the submontonicity framework.

\FloatBarrier
\begin{figure}[h!]
	\centering
	\includegraphics[width=0.5\linewidth]{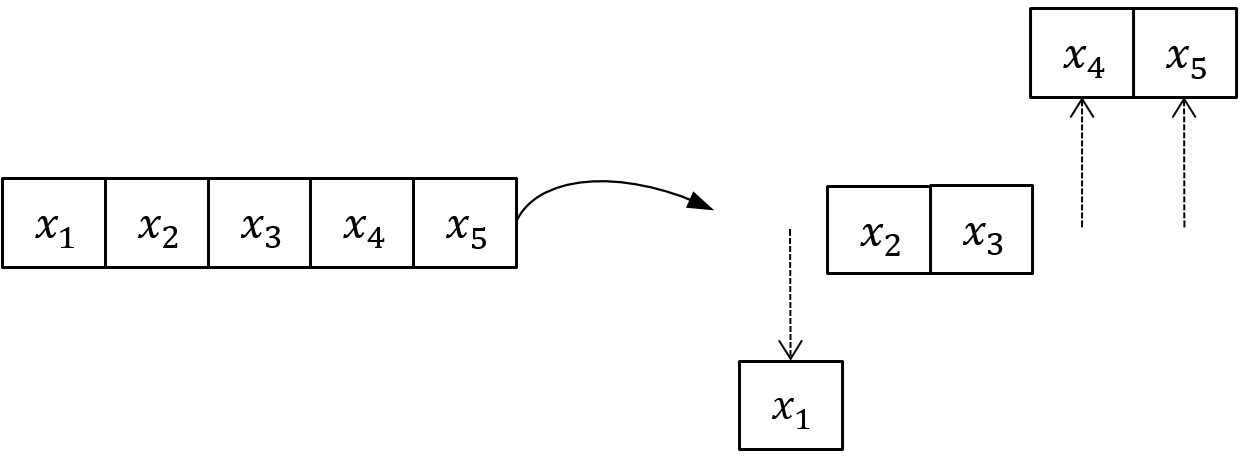}
	\caption{Submonotonicity with respect to a partial order with three indifference classes}
	\footnotesize \raggedright \textit{\textbf{Notes:}} Visual representation of submonotonicity with respect to a partial order with three indifference classes. The left of each figure represents alternatives before the intervention ($Z=z_0$). The right represents the effect of the intervention ($Z=z_1$). The higher an alternative is depicted, the more it is incentivized. The black arrows represent the relative magnitude of the incentives. The set $\{x_4,x_5\}$ represents the alternatives that are being incentivized the most (heavily discounted), the set $\{x_2,x_3\}$ the alternatives that for which payment facilities are offered (i.e. weakly incentivized). Alternative $x_1$ is not being incentivized in any way. Note that comparisons between alternatives within the same set are not altered and thus, their difference in height remains equal.
\end{figure}
\FloatBarrier
\textbf{Relaxations of encouragement:} Now consider an intervention in which an instrument encourages adoption of certain treatments in $S\subset\mathcal{X}$, but there is a concern that the instrument might also affect utility from other treatments. For example, consider an an intervention aimed at increasing sports participation by providing information. It is clear that such an intervention could be used as an instrument for sports practice, but the intervention may also promote other activities that develop motor or social skills such as community service or playing board-games. Hence, the restriction that the instrument is not affecting comparisons between treatments other than practicing sports might not be plausible and a weaker assumption might be preferred. 

One such weaker assumption could be that alternatives $x_1$ and $x_2$ which represent sports are certainly being promoted relative to all others (since they are the main target of the intervention), but pairwise comparisons between other treatments might also affected and the direction of this effect is heterogeneous across the population. Thus, restrictions on the direction of the flows between non-sports activities cannot be credibly imposed. This situation is encoded by submonotonicity with respect to a quasi-order, i.e. a binary relation that is transitive but not complete. 

If $x_5$ and $x_4$ represent different sports and $x_3$, $x_2$ and $x_1$ other activities, the specific submonotonicity assumption would be $x_l\geq^R x_{l'}$ for $l=5,4$ and $l'=3,2,1$ but neither $x_l\geq^R x_{l'}$ nor $x_{l'}\geq ^R x_l$ for $l,l'=5,4$ or $l,l'=3,2,1$. A graphical representation is given in figure \ref{Figure:UOM}. 

\FloatBarrier
\begin{figure}[h!]
	\centering
	\includegraphics[width=0.5\linewidth]{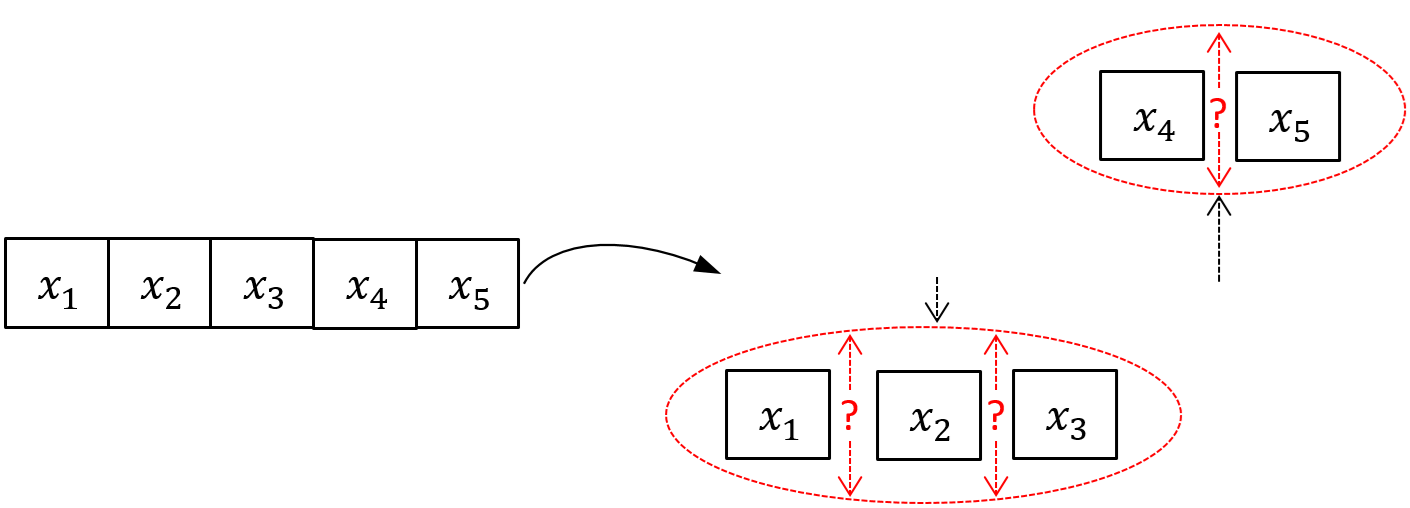}
	\caption{Submonotonicity with respect to a  quasi-order - Sports example }
	\label{Figure:UOM}
	\footnotesize \raggedright \textit{\textbf{Notes:}} Visual representation of submonotonicity with respect to a quasi-order. The left of each figure represents alternatives before the intervention ($Z=z_0$). The right represents the effect of the intervention ($Z=z_1$). The higher an alternative is depicted on the right side, the more it is incentivized. The set $\{x_4,x_5\}$ represents the alternatives that are being incentivized the most (sports), the set $\{x_1,x_2,x_3\}$ the alternatives that are not being incentivized. The black arrows represent the relative magnitude of the incentives but the red arrows represent relative uncertainty about this effect within the set of circled alternatives. That is, within the red elipses no restriction on how pairwise comparisons are affected is imposed.
\end{figure}
\FloatBarrier

Note that relaxing the completeness requirement allows for more heterogeneity. In particular, it allows decision makers to change their relative ranking of the same pair of alternatives in different directions. This situation cannot be captured by a random utility model in which the effect of $Z$ is common to al decision makers, instead, it is allowing the effect of $Z$ to be determined by unobserved unit characteristics in the spirit of a random coefficients model.  

Now consider an intervention aimed at preventing students from dropping college to take part-time jobs.  Suppose this goal is achieved by providing additional classes aimed at developing professional qualifications but also by providing information on the importance of college completion. 

It is clear that such an intervention would promote staying in college relative to taking part-time jobs, however, the classes could help students find full-time jobs, even without the degree. On the other hand, the information could reduce the willingness of students to accept any type of job. If the latter effect is stronger for full-time jobs (because they are more demanding and thus increase the risk of not finishing college), some students might choose to leave a full-time job and take a part-time one. Thus, it might be desirable to only impose the restriction that college is being incentivized relative to part-time jobs, but nothing else.

This restriction can also be accomodated by submonotonicity by assuming that college degrees $x_4$ and $x_5$ are being incentivized relative to half-time jobs $x_3$ and $x_2$ but no restrictions involving full-time jobs $x_1$ are imposed. Formally, $x_l\geq^R x_{l'}$ for $l=1,2$ and $l'=3,4$.   

\FloatBarrier
\begin{figure}[h!]
	\centering
	\includegraphics[width=0.5\linewidth]{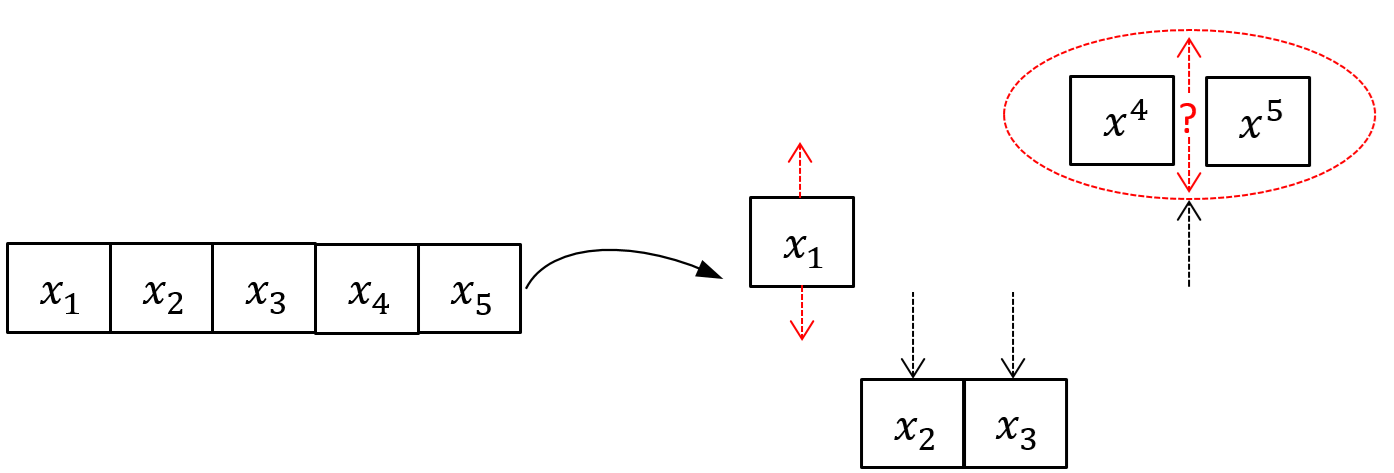}
	\caption{Submonotonicity with respect to a transitive quasi-order - College and full-time jobs example. }
	\footnotesize \raggedright \textit{\textbf{Notes:}} Visual representation of submonotonicity with respect to a quasi-order. The left of each figure represents alternatives before the intervention ($Z=0$). The right represents the effect of the intervention ($Z=1$). The higher an alternative is depicted, the more it is incentivized. The set $\{x_4,x_5\}$ represents the alternatives that are being incentivized the most (sports), the set $\{x_1,x_2,x_3\}$ the alternatives that are not being incentivized. The black arrows represent the relative magnitude of the incentives but the red arrows represent relative uncertainty about this effect relative to alternatives in the same class. Thus, within the red circle, no stance is taken on how pairwise comparisons are affected. Furthermore, no stance on whether $x_1$ is being promoted or not relative to $x_4$, $x_5$ or even $x_2$ and $x_3$ is taken.
\end{figure}
\FloatBarrier

\textbf{Relaxing transitivity/encouraging experimentation:} The submonotonicity framework does not require transitivity of the binary relation $\geq^R$ --i.e. it allows for cycles in $\geq^R$. This observation raises the question of how can such a restriction be interpreted --specially in light of \cite{goff2024does} who shows that some models associated to non-transitive binary relations can point identify LATEs for some specific response groups.  

To answer this question, consider a three alternatives example and suppose that $x_1>^Rx_2$, $x_2>^Rx_3$ and $x_3>^Rx_1$ (where $x_l>^Rx_{l'}$ indicates $x_l\geq ^Rx_{l'}$ but not $x_{l'}\geq^Rx_{l}$).

This restriction implies that:
\begin{itemize}
	\item Any decision maker who choses $x_1$ when $Z=0$ will now choose $x_1$ or $x_3$.
	\item Any decision maker who choses $x_2$ when $Z=0$ will now choose $x_2$ or $x_1$.
	\item Any decision maker who choses $x_3$ when $Z=0$ will now choose $x_3$ or $x_2$.
\end{itemize}
Furthermore, if the assumption that there are no always takers at $\mathcal{X}$ is added (i.e. all types of always takers are ruled out), then: 
\begin{itemize}
	\item Any decision maker who choses $x_1$ when $Z=0$ will now choose $x_3$.
	\item Any decision maker who choses $x_2$ when $Z=0$ will now choose $x_1$.
	\item Any decision maker who choses $x_3$ when $Z=0$ will now choose $x_2$.
\end{itemize}
In the first case, choices when $Z=0$ restrict potential choices when $Z=1$. In the second, there is a one to one correspondence between one and the other. In both cases, if decision makers change their choice, they pick an alternative different than the one they would have chosen absent the instrument. The result is a situation where every alternative is promoted among subjects who do not choose it at the initial value of the instrument; and discouraged among those that already prefer it over the rest. The previous discussion suggests interpreting the effect of the instrument as promoting experimentation or variety. 

Interpreting more general non-transitive restrictions is not straightforward, but, in all cases, submonotonicity induces restrictions on heterogeneity that can be expressed as a correspondence of choices when $Z=0$, and thus, can be seen as restricting heterogeneity in terms of status-quo choices.

\end{document}